\documentclass[11pt]{amsart}
\usepackage[centering]{geometry}
\usepackage{cancel}
\usepackage{amsthm}
\allowdisplaybreaks[1]
\usepackage{array}   
\newcolumntype{L}{>{$}l<{$}} 

\usepackage{tensor}
\usepackage{slashed}

\usepackage{wrapfig}
\usepackage[pdftex]{graphicx}
\DeclareGraphicsRule{*}{mps}{*}{}
\DeclareGraphicsRule{.tif}{png}{.png}{`convert #1 `dirname
  #1`/`basename #1 .tif`.png}

\usepackage{pdfpages}

  \usepackage[utopia,cal=cmcal,greekuppercase=upright]{mathdesign}
\usepackage[utf8]{inputenc}
\usepackage[pdfencoding=auto,psdextra]{hyperref}
\usepackage[capitalise]{cleveref}
\crefname{lem}{Lemma}{Lemmas}
\Crefname{lem}{Lemma}{Lemmas}
\crefname{thm}{Theorem}{Theorems}
\Crefname{thm}{Theorem}{Theorems}
\crefname{prop}{Proposition}{Propositions}
\Crefname{Prop}{Proposition}{Propositions}

\usepackage{esint}
\usepackage{enumerate}
\usepackage{mymathsty}
\usepackage{mymathsybs}
\usepackage{fancyhdr} 
\usepackage{microtype}
\usepackage[all]{xy}
\xyoption{poly}

 \numberwithin{equation}{section}

\DeclareMathSymbol{\T}{\mathbin}{AMSb}{"54}



\begin{document}
 \author{Yang Liu}
 \address{
 SISSA - Scuola Internazionale Superiore di Studi Avanzati \\
Via Bonomea n. 265 \\
34136 Opicina (Trieste)
 }
 \email{yliu@sissa.it}
 \title {
    General Rearrangement Lemma for Heat Trace Asymptotic on
    Noncommutative Tori
}

\keywords{hypergeometric functions,
    noncommutative tori,
    pseudo differential calculus, modular
    curvature, heat kernel expansion,
}

\subjclass[2010]{47A60, 46L87, 58B34, 58J40, 33C65, 58Exx}

\date{\today} 
\thanks{
} 

\begin{abstract}

We study a technical problem arising from the spectral geometry of 
noncommutative tori: 
the small time heat trace asymptotic associated to a general second order
elliptic operator.
We extend the rearrangement operators in the conformal case to the general
setting using hypergeometric integrals over Grassmannians. 
The main result is the explicit formula of the second heat coefficient in terms
of the coefficients. When specializing to examples in conformal case, 
we not only recover  results  in  previous works but also obtain some
extra functional relations whose validation provides experimental support to
the main results.
At last, we verify  the relations based on combinatorial
properties derived from the  hypergeometric features.

\end{abstract}

 \maketitle

 \tableofcontents

\section{Introduction}

We provide in this paper an upgrade to the rearrangement lemma
in the computation of the  
small time heat trace expansion on noncommutative tori via 
pseudo-differential calculus.
It allows us to incorporate general second order elliptic operators $P$  of the
form in \cref{eq:P-gen-form-defn} and
 in principle, to obtain closed formulas of the functional densities of the
 heat coefficients (see \cref{eq:V_j-defn}) written in terms of the derivatives of
 the coefficients of $P$. 
 The primary application of the technical question is the spectral geometry on
 noncommutative tori and toric noncommutative manifolds (cf. \cite{MR1846904}, 
 \cite{MR3262521}),
 in which basic notions in Riemannian geometry, such as metric and curvature,
 are investigated in a purely operator-theoretic framework. 
 Following Connes's spectral paradigm,
  the metrics are  implemented as  geometric
 operators $P$, playing the role of  the Laplacian or the squared Dirac operators.
 The coefficients of heat trace expansion of $P$ that we would like to
understand, are the associated  local invariants.
In particular, by analogy with results on Riemannian manifolds,
the second heat coefficient appeared in the main results
encodes the full information of the scalar curvature.
The conformal aspects of program has been carried out in great detail on
noncommutative two tori \cite{MR3540454,MR3194491,MR3148618} and toric
noncommutative manifolds \cite{Liu:2015aa,LIU2017138}. More references can be
found in recent surveys \cite{lesch2018modular,fathizadeh2019curvature}.

In the commutative world, curvature at the infinitesimal level appears as
commutators of covariant derivatives  of the metric connection. Such
noncommutativity  globally influences the shape of the underlying manifold.
The new notion of curvature for noncommutative spaces contains 
an additional noncommutativity arising from the metric itself:
the metric coordinates do not commute with their derivatives. The so-called
rearrangement lemma defines the building blocks of such contribution.
In more detail, on noncommutative tori $C^\infty(\mathbb{T}^m_\theta)$,
metric tensor $g = (g_{ij})$ becomes
the coefficient matrices $\mathbf A = (k_{ij})\in
\GL_m(C^\infty(\mathbb{T}^m_\theta))$  in \cref{defn:conds-for-A} 
 appeared in the leading term of the underlying geometric operator $P$.  
Even though we can assume that the entries $k_{ij}$ mutually commute, there is
no control on the commutativity  when their derivatives are involved.
As a result, except
the Leibniz property, many other basic formulas in calculus require upgrades.
Let us look at a toy example:  for a derivation $\nabla$ and $k \in
C^\infty(\mathbb{T}^m_\theta)$  invertible, 
we can use the Leibniz property to expand:       
\begin{align}
    \nabla^2 k^3 
       &=
       k^2 (\nabla^2 k) + (\nabla k) k^2 + k (\nabla^2  k) k +
    2\sbrac{
    k (\nabla k) (\nabla k) +  (\nabla k) k (\nabla k) + (\nabla k) (\nabla k) k
    }    \nonumber
    \\
       &=
    k^2 (1+ \mathbf y +\mathbf y^2) (\nabla^2 k) + 
    2 k (1+\mathbf y^{(1)} +\mathbf y^{(1)} \mathbf y^{(2)})
    (\nabla k \otimes  \nabla k)
    .
    \label{eq:toyeg-modspecfun}
\end{align}
Compared to
$\nabla^2 k^3 = 2k^2 (\nabla k) + 6k(\nabla k) (\nabla k)$ for $k$ and $\nabla
k$ commute, we see that 
general local differential expressions $L(k,\nabla k, \nabla^2 k, \ldots)$
derived from the operator valued coordinate $k$ 
involve new coefficients:  rearrangement operators 
(see \S\ref{subsec:continuous-funcal}, \S\ref{subsec:smooth-funcal}
for definitions), 
such as 
$1+ \mathbf y +\mathbf y^2$
and $1+\mathbf y^{(1)} +\mathbf y^{(1)} \mathbf y^{(2)}$ appeared in 
\cref{eq:toyeg-modspecfun} above, where $\mathbf y = k^{-1}(\cdot )k$. 
A more interesting example is the Duhamel's formula for the
derivative of the exponential of a self-adjoint
$h \in  C^\infty(\mathbb{T}^m_\theta)$: 
\begin{align*}
     \nabla  (e^{h}) = 
    \int_0^1  e^{(1-s)h} ( \nabla  h) e^{sh} ds  
    = e^h \frac{ \exp( -\ad_h ) -1}{-\ad_h} (\nabla (h)), 
    \, \, \, \, 
    \ad_h = [h,\cdot].
\end{align*}

What is universal behind the rearrangement operators is the spectral functions
like $1+y+y^2$, $1+y_1 + y_1 y_2$ and $(e^x-1) /x$. 
It has been observed by Connes and Moscovici
that the one variable functions appeared in their work \cite{MR3194491}  
show great resemblance to those in topology   generating   characteristic
classes.
In the conclusion of \cite{2016arXiv161109815C}, it is pointed out that the
functions obtained in the paper seems familiar in transcendence theory.
The author added hypergeometric features \cite{Liu:2018aa,Liu:2018ab} 
into the rearrangement lemma that 
concerns what kind of functions shall arise in the pseudo-differential approach
to the heat coefficients.
It turns out that the arguments in \cite{Liu:2018aa,Liu:2018ab} 
can be adapted to handle 
the general situation involving $m \times  m$ operator-valued
coordinates where $m$ denotes the dimension. 
The key feature is that the spectral functions are given by hypergeometric
integrals. For non-experts of special functions, ``hypergeometric'' refers to
the property which sounds interesting from algebraic geometry point of view:
namely, the integrals (\cref{eq:Falpha(A-xi)-scalar})
are constructed out of the combinatorial data
(cf. \cref{eq:omega-alpha-u-defn,eq:B(A-u)-scalar-version})
of standard simplexes $\triangle^n$. 
In fact, for the diagonal case $\mathsf F_\alpha(z)_{s_1, \ldots, s_N}$ 
(\cref{eq:sfFa-defn}) and the conformal case  $\mathsf H_\alpha(z;m;j)$ 
(\cref{eq:sfHalpha}), the functions belong to a general class of 
hypergeometric functions over Grassmannians \cite[\S3]{MR2799182}. 

Back to the technical  question raised at the beginning, 
we briefly outline the algorithm for computing heat coefficients in 
\S\ref{subsec:symbol-cal}-\S\ref{subsec:rearr-lem}. 
The remaining sections 
\S\ref{subsec:gen-form-V2}-\S\ref{subsec:diagonal-case} are devoted to 
the main results: the functional density of the
$V_2$-term (as in \cref{eq:heatop-P-defn}). 
Several versions of the explicit local expressions of $v_2(P)$ are recorded. 
The first one (\cref{thm:v2P-F(A)}) is presented in a compact form which has the
 merit for communicating the formulas. 
To get a precise and detailed understanding of the notations, one should look at 
\cref{thm:v2P-I-components,thm:v2P-II-components}, 
in which rearrangement operators $F_\alpha(\mathbf A)$ are fully expanded into
components. 
Despite the complexity of the formulas, 
the result 
can be directly applied for all potential applications.
At last, we examine, in \S\ref{subsec:diagonal-case},  a situation in which
 the coefficient matrix $\mathbf A$ is diagonal. 
 By restricting to ``eigenvalues'' of the general form,
 the simplified formulas make  
the underlying geometric features  more transparent, which is the next step 
of our exploration of curvature  beyond conformal geometry.

 In the last two sections, we  focus on the Laplacian $\Delta_\varphi$
 representing the simplified model of conformal geometry studied in
 \cite{Liu:2018aa}.
We are able to derive some functional relations 
(cf. \cref{cor:Gdelk-vs-Gdelvfi-fun-relations}) 
by  computing the associated 
$V_2$-term in  two ways based on the general result in \S\ref{subsec:diagonal-case}.
The functions on two sides of the equations are quite different at the first
glance. The complete cancellation with each other yields strong support to the
validation  of our calculation in \S\ref{sec:heat-coef-via-pscal}.
A notable feature of our computation in \S\ref{sec:Verification} is that, 
instead of invoking the lengthy algebraic expressions, we present and manipulate
the spectral functions through a basis of functions consisting of 
the hypergeometric family and  additional  $\mathsf
G_{\mathrm{pow}}^{(1)}$ and $G_{\mathrm{pow}}^{(1,1)}$ 
(see \cref{lem:delk1/2-to-delk}). 
The cancellation can be seen through the two features among the basis
 discussed  in \cite{Liu:2018ab}:
 differential and recursive relations and the action of the variational
 operators (especially the cyclic permutations \cref{eq:tau1-and-2-defn}).


\subsection*{Acknowledgement}

The author would like to thank Alain Connes, Henri Moscovici and Matthias Lesch
for their comments and inspiring conversations.

The paper is written during the author's postdoctoral fellowships at Max
Planck Institute for Mathematics, Bonn and
SISSA, Trieste, as well as short term visits at Leibniz University (Hannover),
Sichuan University (Chengdu), Shanghai Center for Mathematics Sciences and
University of New South Wales (Sydney).  All institutes are greatly appreciated 
for provding  marvelous working environment during the author's stay.

\section{Preliminaries}

\subsection{Noncommutative $m$-tori  $C^\infty(\T^m_\theta) $}

Let $\theta = (\theta_{ij}) \in  M_{n \times n}(\R)$ be a  skew-symmetric
matrix. The smooth
noncommutative $m$-torus $C^\infty(\T^m_\theta) = (C^{\infty}(\T^m),
\times_\theta )$ (viewed as smooth coordinate
functions on $\T^m_\theta$) is identical to $C^{\infty}(\T^m)$ as a topological
vector space with a deformed multination. 
Similar to the existence of Fourier expansion for functions on tori, the
generators of $C^\infty(\T^m_\theta)$ consists of
$m$ unitary elements: $U_s U_s^* = U_s^* U_s =1$,
$s=1, \ldots, m$ and for $\bar l = (l_1, \ldots, l_m) \in  \Z^n$, put  
\begin{align}
    \label{eq:a-bar-l-U-bar-l}
    a_{\bar l} U^{\bar l} \defeq
    a_{l_1, \ldots, l_m} U_1^{l_1} \cdots U_m^{l_m}, 
    \, \, \, a_{\bar l} \in  \mathbb{C},
\end{align}
then
\begin{align*}
    C^\infty(\T^m_\theta) = \set{
    \sum_{\bar l \in  \Z^n} a_{\bar l} U^{\bar l}  \mid 
    \text{ 
        $a_{\bar l} \in  \mathcal S(\Z)$ is a Schwartz function in $\bar l$. 
}
} .
\end{align*}
The deformed multiplication is given in terms of the generators: 
\begin{align*}
    U_s U_l  = e^{2\pi \mathrm i \theta_{l s} } U_l U_s,
    \, \, \, 1 \le  s, l \le m.
\end{align*}
There exist a canonical trace $\varphi_0: C^\infty(\T^m_\theta) \to
\mathbb{C}$ taking the constant term of an element:
\begin{align*}
    \varphi_0\brac{ \sum_{\bar l} a_{\bar l} U^{\bar l}} = a_0.
\end{align*}
We denote by $\mathcal H$ the corresponding GNS representation 
obtained by completing $C^\infty(\T^m_\theta)$ with respect to the inner product
\begin{align}
    \label{eq:GNS-dotprod-defn}
    \abrac{f, \tilde f} \defeq \varphi_0( \tilde f^* f), \, \, \,
    \forall f, \tilde f \in  C^\infty(\T^m_\theta).
\end{align}

The noncommutative $m$-torus $C(\T^m_\theta)$ 
(playing the role of all continuous function on $\T^m_\theta$) is the
$C^*$-algebra  $C(\T^m_\theta) = \overline{C^\infty(\T^m_\theta)}$ 
in which the completion is taken with respect to 
the operator norm 
from the following representation:
$\rho: C^\infty(\mathbb{T}^m_\theta) \to  B(L^2(\mathbb{T}^m))$, 
 with $\T^m \cong \R^m/(2\pi \Z)^{m}$:  
\begin{align}
    \label{eq:repTmtheta-on-generators}
    ( \rho(U_s)f)(x) \defeq e^{i x_s} f(x+\pi \bar \theta_s), \, \, \,
    \forall f \in  L^2(\T^m),
\end{align}
where $\bar \theta_s$ is the $s$-column of $\theta$.


The differential calculus ( moreover pseudo-differential calculus) is built upon 
a $C^*$-dynamical  system $(C(\T^m_\theta), \R^m, \sigma)$. The action $\sigma$
is periodic, namely, a lift of a $\T^m$ action attached to the $\Z^m$ grading.
In other words, $U^{\bar{l}}$ in \cref{eq:a-bar-l-U-bar-l} are the eigenvectors:
\begin{align}
    \label{eq:sigmas-Ualpha}
    \sigma_r( U^{\bar{l}}) = e^{i r \cdot \bar l} U^{\bar{l}},\, \, \,
    r \in  \R^m, \, \, \, \bar l\in \Z^m.
\end{align}
The representation $\rho$  in Eq.  \eqref{eq:repTmtheta-on-generators}
and the translation action of $\R^m$ 
\begin{align*}
    V_r(f)(x) \defeq f(x+r), \, \, \, r,x \in  \R^m.
\end{align*}
form a covariant representation of the $C^*$-dynamical system:
$ \rho (\sigma_r(T)) = V_{-r} \rho(T) V_r$. 
The smooth noncommutative
torus $C^\infty(\T^m_\theta)$ consists of exactly those smooth elements in the
$C^*$-dynamical  system, that is, all $a \in  C(T^m_\theta)$ such that $r \to
\sigma_r (a)$ is a smooth function in $r \in  \R^m$.  

If we identify $\R^m \cong \mathrm{Lie}(\R^m)$, the Lie derivatives of 
the standard basis
give rise to  the basic derivations $\{ \delta_l\}_{l=1}^m$ acting as
$-i\partial_{x_l}$  on $C^\infty(\T^m)$ regarding to the coordinate 
$(x_1, \ldots, x_m)$. More precisely,  we have
on the generators:
\begin{align}
    \label{eq:basic-derivs-defn}
    \delta_l(U_j) = \mathbf 1_{ij} U_j, \, \, \,
    1\le  i,j \le  m.
\end{align}

The basic derivations  are the generators of the algebra of differential operators. 
In particular, a second
order differential operator is always of the form:
\begin{align}
    \label{eq:P-gen-form-defn}
    P = \sum_{1\le s,l \le m} k_{sl} \delta_s \delta_l + \sum_{s=1}^m r_s
    \delta_s + p_0,
\end{align}
where the coefficients $ k_{sl}, r_s, p_0 \in  C^\infty(\T^m_\theta)$.
As in the commutative case, the ellipticity concerns only the coefficients of
the leading term:
\begin{defn}
    \label{defn:conds-for-A}
 The differential operator 
 $P: C^\infty(\T^m_\theta) \to  C^\infty(\T^m_\theta)$ given in  
 \cref{eq:P-gen-form-defn} is called elliptic if
the coefficient matrix $\mathbf A = (k_{ij})_{1\le i,j \le m}$ admit
a self-adjoint log.
Precisely, we require that $\mathbf A = \exp( \log \mathbf A)$ where the matrix 
$\log \mathbf A = (h_{ij})$ is symmetric with self-adjoint and mutually commute
entries:   
\begin{align*}
  h_{ij} = h_{ji} = h_{ij}^* = h_{ji}^*,\, \, \,
  [h_{ij}, h_{st}] =0, \, \, \,
  1\le  i,j,s,t \le m.
\end{align*}
As a consequence, $\mathbf A$ is positive invertible with mutually commute entries.   
\end{defn}

\subsection{Spectral geometry}
If $P$ fulfills ellipticity above,  
 Connes's pseudo-differential calculus implies that  
there exists a order one pseudo-differential operator $Q$ such
that  $P - Q^* Q$ is of order one. Arguments in Gilkey's book
\cite{gilkey1995invariance} can be applied to
show that $P$   has discrete spectrum  contained  in a conic region of
$\mathbb{C}$. In our examples, the spectrum of $P$ is contained  in
$[c,\infty)$ for some $c \in  \R$.  
The heat operator can be defined via holomorphic functional calculus:
\begin{align}
    \label{eq:heatop-P-defn}
    e^{-tP} = \frac{1}{2 \pi i}
    \int_{\mathcal C} e^{-t\lambda} (P -\lambda)^{-1} d\lambda
    ,
\end{align}
where $\mathcal C$ is a suitable contour winding around the spectrum of $P$. 

Our primary interest is the spectral geometry of $C^\infty(\mathbb{T}^m_\theta)$ 
in which $P$ plays the role of a geometric differential operator.
In the conformal case on $C^\infty(\mathbb{T}^2_\theta)$
\cite{MR3540454,MR3194491}, $P$ is modeled on the Dolbeault Laplacian while on
toric noncommutative manifolds $C^\infty(M_\theta)$ \cite{LIU2017138},
$P$ comes from the squared Dirac operator.  
On $\mathbb{T}^m_\theta$, 
Even for noncommutative tori, the notion of general metrics and the associated 
construction of $P$ is still widely open. 
A recent  proposal  of  constructing Laplace-Beltrami operators 
can be found in \cite{ha2019laplace}.
Among the mentioned examples, a common feature inherited from Riemann geometry
is the property 
that the metric tensor $g = (g_{ij})$ is implemented as
the coefficient matrix $\mathbf A$ of the leading term while 
for lower order terms, $r_s$ and $p_0$
consist of the first and second derivatives of entries of $\mathbf A$ respectively. 

Once the metric $P$ is chosen,  the corresponding local invariants can be
extracted from the small time asymptotic of the heat trace functional 
$a \to  \Tr(a e^{-tP})$: 
\begin{align}
    \label{eq:e^tP-asym-defn}
    \Tr(a e^{-tP}) \backsim_{t \searrow 0}
    \sum_{j=0}^\infty
    t^{\frac{j-m}{2}}
    V_j(a, P), \, \, \, a \in C^\infty(\T^m_\theta),
\end{align}
where $\Tr$ is the operator trace with respect to the Hilbert space $\mathcal
H$ defined in \cref{eq:GNS-dotprod-defn}. 
Each coefficient is absolutely continuous with respect to the canonical trace
$\varphi_0$ with Radon-Nikodym derivative $v_j(P) \in C^\infty(\T^m_\theta)$
(also referred as functional densities in the paper):  
\begin{align}
    \label{eq:V_j-defn}
    V_j(a, P) = \varphi_0 \brac{ a v_j(P) }, \, \, \,
    \forall  a \in  C^\infty(\T^m_\theta).
\end{align}

If $P$ is the scalar Laplacian $\Delta$ on a closed Riemann manifold $(M,g)$, 
the invariants $v_j(\Delta) \in  C^\infty(M)$ are known as
Minakshisundaram-Pleijel coefficients which can written, in principle, as
polynomial functions  in the derivatives of metric tensor.
In particular, the first non-trivial one is  $v_2(\Delta) = \mathcal S_g /6$
proportional to  the scalar curvature function. It gives a geometric interpretation 
to  our main results in \S\ref{subsec:gen-form-V2}
as a model of the scalar curvature from the spectral geometry perspective. 
The spectral paradigm has also been implemented in great detail, known as
spectral action principle, in the noncommutative geometry approach to standard
model, see \cite[\S11]{connes2008noncommutative} for further references.

The pseudo-differential calculus is able to, not only establish the existence
of the expansion, but also provide a efficient algorithm for the computation of
$v_j(P)$. We shall see in later sections that they can be written as finite sums of
 differential expressions in the coefficients of $P$: $k_{sl}$, $r_s$ and $p_0$.   
 Nevertheless, the length of of $v_j(P)$  grows substantially as $j$ going up,
 cf. \cite{2016arXiv161109815C} for an impression of the complexity of
 $V_4$-term even in the conformal case.
 Sum up, a challenging task in the spectral geometry is to explore
 universal structures behind the intimidating differential expressions, 
so that further applications, such as related variational problems, can be
carried out.  

 \section{Hypergeometric Functions in the Rearrangement Lemma}
 \label{sec:hyperfun-in-Rlemma}


\subsection{Continuous functional calculus}
\label{subsec:continuous-funcal}

Let $A$ be a unital commutative $C^*$-algebra and $\mathcal M_A$ be the spectrum
(the space of maximal ideals).
The Gelfand-Naimark theorem asserts that $A$ is isomorphic to the
$C^*$-algebras of continuous functions on $\mathcal M_A$. We denote  the
$*$-isomorphism (the inverse of the Gelfand map), by:
\begin{align}
    \label{eq:Psi-on-CM_A}
   \Psi: C(\mathcal M_A) \to  A.
\end{align}

Let $\bar a =(a_1, \ldots, a_J)$ be a tuple
of mutually commutative self-adjoint elements in $A$. They generate a commutative
unital $C^*$-algebra $C(1,\bar a) \subset A$  whose spectrum
$\mathcal M $ can be identified with a compact subset in $\R^J$: 
\begin{align}
    \label{eq:mathcal-M-in-R^J}
  \mathcal M \subset \prod_{j=1}^n X_{a_j}\subset \R^{J}, \, \, \, \, 
  \mathfrak m \in  \mathcal M \to 
  \brac{ 
  \Psi_{\bar a}^{-1}(a_1)(\mathfrak m), \ldots,
  \Psi_{\bar a}^{-1}(a_J)(\mathfrak m)},
\end{align}
where $\Psi^{-1}_{\bar a}:C(1,\bar a) \to  C(M)$ is the Gelfand map and 
the evaluation $\Psi_{\bar a}^{-1}(a_j)(\mathfrak m)$ belongs to  the
spectrum $X_{a_j}$ of $a_j$, $1\le  j \le J$. 
For any $f(\bar x) \in  C(\mathcal M)$, we have several notations for  the
functional calculus:  
\begin{align*}
    f(a_1, \ldots, a_J) \defeq  
    \int_{\mathcal M} f(\bar x) dE^{\bar a}(\bar x) \defeq \Psi_{\bar a}(f)
    \in  C(1, \bar a),
\end{align*}
where $dE^{\bar a}$  and the map $\Psi_{\bar a}$ are both referred to as the
spectral measure when no confusion arises. 
In the case of $J=1$ with a normal element, that is, $\bar a = (a_1)$ with 
$[a_1,a_1^*]=0$, the $C^*$-algebra  $C(1,a_1)$ is well-defined and the space of
maximal ideals is identical to the spectrum: $M \cong X_{a_1} \subset
\mathbb{C}$.    

\subsection{Smooth functional calculus}
\label{subsec:smooth-funcal}
Let $A$ be a unital $C^*$-algebra as before.
We shall briefly review the construction in \cite[\S3]{leschdivideddifference}.
Consider the (algebraic) contraction map
$\cdot: A^{\otimes n+1} \times A^{\otimes  n} \to  A$,
on elementary tensors, it reads: 
\begin{align}
    \label{eq:contraction-dot-defn}
    (a_0 \otimes  \cdots \otimes  a_n) \cdot 
    (\rho_1 \otimes \cdots  \otimes  \rho_n) 
    = a_0 \rho_1 a_1 \cdots  \rho_n a_n.
\end{align}
It makes elements of $A^{\otimes n+1}$ into linear operators from
$A^{\otimes n}$ to $A$. We denote the induced map by 
\begin{align}
    \label{eq:iota-defn-induced-contraction}
    \iota: A^{\otimes  n+1} \to  L(A^{\otimes  n}, A).
\end{align}
For any $a \in  A$, for $0\le  j \le n$, depending on the context, we denote by
$a^{(j)}$ either the elementary tensor 
\begin{align}
    \label{eq:a(j)-as-tensors-no-iota}
    a^{(j)} \defeq 1\otimes \cdots  \otimes  a \otimes  \cdots  \otimes  1
    , \, \, 
    \text{$a$ occurs at the $j$-the factor,}
\end{align}
or the operator
\begin{align}
    \label{eq:a(j)-as-ops-iota}
    a^{(j)} \defeq 
    \iota(1\otimes \cdots  \otimes  a \otimes  \cdots  \otimes  1)
    \in  L(A^{\otimes n}, A).
\end{align}
The left  and right multiplications correspond to $a^{(0)},a^{(1)} \in L(A,A)$.
Generally, the superscript $(j)$ simply indicates that, the multiplication
occurs at the $j$-slot of elementary tensors in $A^{\otimes n}$.
For self-adjoint $a = a^* \in  A$
and $k = e^a$,  we put $\mathbf x_a = -\mathrm{ad}_a=[\cdot,a]$ and 
$\mathbf y_a =\mathrm{Ad}_{k^{-1}}= k^{-1}(\cdot )k$, 
then the associated lifted operators in $L(A^{\otimes n},A)$ are given by:
\begin{align}
    \label{eq:mathbf-x=a-and-mathbf-y=k}
    \mathbf x_a^{(j)} = -a^{(j-1)} + a^{(j)},\, \, 
    \mathbf y_a^{(j)} = k^{(j-1)} k^{(j)} = e^{\mathbf x^{(j)}}, \, \, 
    1\le  j \le  n,
\end{align}
in which the superscript $(j)$ indicates the commutator or conjugation operator
acts only  on the $j$-th factor on   elementary tensors.
We shall make use of the inverse of the relations in
\cref{eq:mathbf-x=a-and-mathbf-y=k} in later sections:
\begin{align}
    \label{eq:a=mathbf-x-and-k=mathbf-y}
    a^{(j)} = -a^{(0)} + \mathbf x_a^{(1)}+ \cdots  +\mathbf x_a^{(j)}, 
    \, \, \, \, 
    k^{(j)} = (k^{(0)})^{-1} \mathbf y_a^{(1)} \cdots \mathbf y_a^{(j)}
    ,\, \, \, \, 
    1\le  j \le  n,
\end{align}

The functional calculus requires different completions of the algebraic tensor
products of $A$. 
The  projective tensor product $A^{\otimes_\gamma n}$ is the
norm completion with respect to
\begin{align*}
    \norm{a}_\gamma \defeq 
    \inf \sum_s \norm{\alpha_0^{(s)}} \cdots  \norm{\alpha_n^{(s)}},
\end{align*}
where the infimum runs over all possible decomposition of $a$  as elementary
tensors $a = \sum_s \alpha_0^{(s)} \otimes  \cdots  \otimes  \alpha_n^{(s)}$. 
The signature property of the   projective tensor product 
is that the multiplication map: $m: A^{\otimes n}\to  A$
extends continuously. 
In particular, the algebraic map defined in 
\cref{eq:iota-defn-induced-contraction}   induces a continuous map
\begin{align}
    \label{eq:iota-cont-defn}
    \iota: A^{\otimes_\gamma n+1} \to  
   L_{\mathrm{cont}}(A^{\otimes_\gamma n}, A) .
\end{align}


Now let us consider a tuple of mutually commuting self-adjoint elements $\bar
a =(a_1, \ldots,
a_J)$ with spectra $X_{a_j}$, $1\le  j \le  J$. As in \cref{eq:mathcal-M-in-R^J},
the space of maximal ideals $\mathcal M_{\bar a} $ of $C(1, \bar a)$ is
a subset of $\prod_{j=1}^J X_{a_j} \subset \R^J$. Now let $U \subset \R^J$ be
an open subset containing $\mathcal M_{\bar a}$. Followed by the restriction map
$ C^\infty(U) \to  C(\mathcal M_{\bar a})$, the continuous functional calculus
$\Psi_{\bar a}$  in \cref{eq:Psi-on-CM_A} leads to a map, 
\begin{align*}
    \Psi_{U}
    : C^\infty(U) \to  C(1, \bar a), \, \, \, \, 
  f \in  C^\infty(U) 
  \to   f(a_1, \ldots, a_J) \defeq  \Psi_{\bar a} (f|_{\mathcal M_{\bar a}}). 
\end{align*}

The smooth functional calculus relies on the nuclearity of the Fr\'echet
topology of $C^\infty(U)$, which states that the projective $\otimes_\gamma$ and
injective $\otimes_\epsilon$ tensor product agree and they are both isomorphic
to the smooth functions on the Cartesian product:
\begin{align*}
    C^\infty(U)^{\otimes_\gamma n+1} \cong C^\infty(U^{n+1}) 
    \cong C^\infty(U)^{\otimes_\epsilon n+1}. 
\end{align*}
The injective side allows us to approximate multivariable functions 
$f(x_0, \ldots, x_n)$ by those of the form of separating variables. More
precisely, the algebraic map $ C^\infty(U)^{\otimes n+1} \to  C^\infty(U^{n+1})$: 
\begin{align*}
  f_0 \otimes  \cdots  \otimes  f_n \to  f, \, \, \, \, 
  \text{with $f(x_0, \ldots, x_n) \defeq f_0(x_0)\cdots  f_n(x_n)$}
\end{align*}
 extends by continuity to an
isomorphism of $ C^\infty(U)^{\otimes_\epsilon n+1} \to C^{\infty}(U^{n+1})$.  
The projective feature implies that, after the projective completion, 
the algebraic map
\begin{align*}
    \Psi_{U}^{\otimes n+1}: C^\infty(U)^{\otimes  n+1}\to  A^{\otimes n+1}:
    f_0\otimes  \cdots  \otimes f_n \to 
    f_0(\bar a) \otimes \cdots  \otimes  f_n(\bar a)
\end{align*}
induces a continuous map
\begin{align}
    \label{eq:Psi-gamma-defn}
    \Psi_\gamma: C^\infty(U^{n+1}) \to  A^{\otimes_\gamma n+1} .
\end{align}
As in \cref{eq:a(j)-as-ops-iota}, we denote 
 $\bar a^{(j)} = (a_1^{(j)}, \ldots, a_J^{(j)})$ and  
$\mathbf a = (a_{i}^{(j)})_{J \times  n}$, with $1\le i\le J$ and $0\le j\le n$. 
The operators $a_i^{(j)} \in  L(A^{\otimes n},A)$ defined in 
\cref{eq:a(j)-as-ops-iota} are the images of the
coordinate functions under the functional calculus:
\begin{align}
    \label{eq:uij-sent-aij}
(\iota \circ \Psi_\gamma) ( u_i^{(j)}) = a_i^{(j)}.
\end{align}
where $\bar u = (\bar u^{(1)}, \ldots, \bar u^{(n)}) \in U^{n+1} \subset \R^{J
\times n}$, with $\bar u^{(j)} = (u_1^{(j)}, \ldots, u_J^{(j)})$.

\begin{defn}[Smooth functional calculus]
    \label{defn:smooth-funcal}
   Keep the notations as above.
  For any $f \in  C^\infty(U^{n+1}) $, 
  we define the smooth functional calculus in the following way:
  \begin{align*}
      f(\mathbf a) = 
      f(\bar a^{(0)}, \ldots, \bar a^{(n)}) \defeq (\iota \circ \Psi_\gamma)(f) \in 
      L_{\mathrm{cont}}(A^{\otimes_\gamma n}, A),
  \end{align*}
  where $\iota$ and $\Psi_\gamma$ are defined in
  \cref{eq:iota-cont-defn,eq:Psi-gamma-defn} respectively. 
\end{defn}

\begin{prop}
    \label{prop:Fubini-type-result}
    Consider functions given via  integral representations:
    \begin{align*}
    f(\bar u) = 
    \int_{\mathcal B} F(p,\bar u) d\mu_p 
    \end{align*}
where  $(\mathcal B, \mu)$ is a Borel space and  
$F(p,u): \mathcal B \times  U^{n+1} \to  \mathbb{C}$ is continuous in $p$ and
smooth in $\bar u$. With the integrability condition: for any compact set $K
\subset U^{n+1}$ and multiindex $\alpha$, 
\begin{align}
    \label{eq:integrability-condition}
    \int_{\mathcal B}  \sup_{ \bar u \in K}
    \abs{\partial^\alpha_{\bar u} F(p,\bar u)} d\mu_p <\infty ,
\end{align}
we have the Fubini type result
\begin{align}
    \label{eq:fubibi-type-result}
    f(\bar a^{(0)}, \ldots, \bar a^{(n)}) \defeq \Psi_\gamma 
    \brac{ \int_{\mathcal B} F(p, \cdot )d\mu_p} =
    \int_{\mathcal B} 
    \Psi_\gamma
    \brac{F(p, \cdot )}d\mu_p,
\end{align}
where the last integral is a Bochner one taking values in $A^{\otimes_\gamma n+1}$.  
\end{prop}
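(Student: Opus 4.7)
The plan is to reduce the statement to the standard fact that continuous linear maps commute with Bochner integrals, after establishing that $p \mapsto F(p,\cdot)$ is a Bochner integrable $C^\infty(U^{n+1})$-valued function.

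First I would recall that the Fr\'echet topology on $C^\infty(U^{n+1})$ is generated by the seminorms $\|g\|_{K,\alpha} = \sup_{\bar u \in K} |\partial^\alpha_{\bar u} g(\bar u)|$ indexed by compact $K \subset U^{n+1}$ and multiindices $\alpha$. The integrability hypothesis \cref{eq:integrability-condition} is precisely the assertion that
\begin{align*}
\int_{\mathcal B} \|F(p,\cdot)\|_{K,\alpha}\, d\mu_p < \infty
\end{align*}
for every such seminorm. Together with continuity of $p \mapsto F(p,\cdot)$ into $C^\infty(U^{n+1})$ (which follows from the hypothesis that $F$ is jointly continuous in $p$ and smooth in $\bar u$, combined with standard differentiation-under-the-integral estimates that show each seminorm varies continuously in $p$), this is the Bochner integrability criterion for Fr\'echet-valued maps. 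Consequently the Bochner integral $G \defeq \int_{\mathcal B} F(p,\cdot)\, d\mu_p$ exists as an element of $C^\infty(U^{n+1})$.

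Next I would identify $G$ with the original function $f$. For each fixed $\bar u \in U^{n+1}$ the evaluation map $\mathrm{ev}_{\bar u}: C^\infty(U^{n+1}) \to \mathbb{C}$ is continuous and linear, hence commutes with the Bochner integral, giving $G(\bar u) = \int_{\mathcal B} F(p,\bar u)\, d\mu_p = f(\bar u)$. Therefore $G = f$ as smooth functions, and hence as elements of $C^\infty(U^{n+1})$.

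Finally I would invoke the continuity of $\Psi_\gamma: C^\infty(U^{n+1}) \to A^{\otimes_\gamma n+1}$ established in the construction leading to \cref{eq:Psi-gamma-defn}. Since $\Psi_\gamma$ is a continuous linear map between locally convex spaces, the standard theorem on Bochner integrals yields
\begin{align*}
\Psi_\gamma(f) = \Psi_\gamma\!\left(\int_{\mathcal B} F(p,\cdot)\, d\mu_p\right) = \int_{\mathcal B} \Psi_\gamma\bigl(F(p,\cdot)\bigr)\, d\mu_p,
\end{align*}
which is \cref{eq:fubibi-type-result}. The integral on the right is Bochner in $A^{\otimes_\gamma n+1}$, whose existence also follows from the continuity of $\Psi_\gamma$ applied to the integrability estimates above.

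The main obstacle is the passage through a Fr\'echet (rather than Banach) space: one must verify that the standard Bochner integration theory extends in the form needed, i.e., that integrability against every generating seminorm suffices to produce a genuine integral and that continuous linear maps still commute with it. This is well known but needs to be stated carefully, since the naive definition of Bochner integral is phrased for Banach-valued functions; the extension typically proceeds either by writing $C^\infty(U^{n+1})$ as a projective limit of Banach spaces $C^{k}(K)$ and integrating in each factor, or by working directly with weak measurability plus essential separability to apply the Pettis measurability theorem.
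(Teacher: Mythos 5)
Your proof is correct and takes essentially the same approach as the paper: the integrability hypothesis is exactly what is needed for $p \mapsto F(p,\cdot)$ to be Bochner (Fr\'echet-valued) integrable in $C^\infty(U^{n+1})$, and then the result follows because $\Psi_\gamma$ is a continuous linear map and therefore commutes with the Bochner integral. The paper compresses all of this into one sentence and defers the details to Lesch's Theorem~3.4; your write-up is simply that argument made explicit, including the identification of the Bochner integral with $f$ via the continuous evaluation functionals and the caveat about extending Bochner integration to Fr\'echet-valued functions via the projective-limit description.
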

\begin{proof}
    The integrability for all derivatives shown in
    \cref{eq:integrability-condition} implies the converges of the integral 
    $\int_{\mathcal B} F(p,\cdot )d\mu_p$ with respect to the Fr\'echet
    topology of $C^\infty(U^{n+1})$, so that $f(\bar u)$ is smooth in $\bar u$.
    We refer to \cite[Theorem 3.4]{leschdivideddifference}  for more details.
\end{proof}

The integral form mentioned in the proposition above leads to a  more
elementary construction of the functional calculus making use of  Fourier
transform. For any $f \in  C^\infty(U^{n+1})$, taking any extension $ \tilde
f \in  \mathscr S(\R^{J \times  (n+1)})$ to a Schwartz function so that it can
be written as a Fourier transform, 
with $\bar{\mathsf u} = ( u_l^{(j)})$ and $\xi =(\xi_l^{(j)})$ , $1\le  l \le
J$ and $0\le  j\le  n$:   
\begin{align*}
    \tilde f(\bar{\mathsf u}) = 
    \int_{\R^{J \otimes (n+1)}}
    ( \tilde f)^{\vee}  (\xi) \exp\brac{
    \sum_{l,j}
    i \xi^{(j)}_i u^{(j)}_i} d\xi,
\end{align*}
where $( \tilde f)^{\vee}$ denotes the normalized Fourier transform. 
Now we are ready to define   the Schwartz functional calculus 
    \begin{align*}
        \Psi_{\mathscr S}:\mathscr S(\R^{n}) \to  L(A^{\otimes  n},A):
        f \to  \Psi_{\mathscr S}(f) \defeq
        f_{\mathscr S}   ( a^{(0)}, \ldots, a^{(n)}).
    \end{align*}
    by substituting $u_i^{(j)} \to  a_i^{(j)}$ into the integral form above.
    More precisely, we have, on elementary tensors:
\begin{align}
    \label{eq:schwartz-fcal-in-a}
    &\, \,  f_{\mathscr S}  
     (\bar a) ( \rho_1 \otimes \cdots  \otimes  \rho_n) \\
    =&\, \, 
    \int_{\R^{J \otimes (n+1)}}
    ( \tilde f)^{\vee}  (\xi) \exp\brac{
    \sum_{l,j}
    i \xi^{(j)}_i u^{(j)}_i}
    \brac{
     \rho_1 \otimes \cdots  \otimes  \rho_n
    }
    d\xi,
   \nonumber \\
     =&\, \,  
     \int_{\R^{J \otimes  (n+1)}} ( \tilde f)^{\vee} (\xi)
     \brac{
         e^{i \sum_{l=1}^J\xi_{l}^{(0)} a_l} \rho_1
         e^{i \sum_{l=1}^J\xi_{l}^{(1)} a_l} \cdots \rho_n
e^{i \sum_{l=1}^J\xi_{l}^{(n)} a_l}
     } d\xi.
     \nonumber
\end{align}


Finally, following the substitution in \cref{eq:mathbf-x=a-and-mathbf-y=k}, 
we obtain the corresponding functional calculus for
the  commutator and conjugation operators given in 
\cref{eq:mathbf-x=a-and-mathbf-y=k}: 
$\{\mathbf x_{a_i}^{(j')}, \mathbf y_{a_i}^{(j')}\} \subset L(A^{\otimes  n},A)$, 
with $1\le i\le J$ and $ 1\le j'\le n$.
In more detail, let $\bar u = (u_i^{(j)})_{J \times
(n+1)}$ be the coordinate function  in \cref{eq:uij-sent-aij}, 
we denote matrices, viewed as maps:
\begin{align*}
    \bar{\mathsf x} 
    \defeq \bar{\mathsf x}(\bar{\mathsf u})
  = (x_i^{(j')})_{J \times  n}, \, \, \, \,   
\bar{\mathsf y} 
    \defeq \bar{\mathsf x}(\bar{\mathsf u})
= (y_i^{(j')})_{J \times  n} :
U^{n+1} \subset \R^{J \times (n+1)} \to \R^{{J \otimes  n}}
,
\end{align*}
in which the entries come from the change of variables in
\cref{eq:mathbf-x=a-and-mathbf-y=k}: 
\begin{align*}
    x_i^{(j')} \defeq  x_i^{(j')} (\bar{\mathsf u}) 
    = -u_i^{(j'-1)} + u_i^{(j')}, \, \, \, \, 
    y_i^{(j')} \defeq  y_i^{(j')} (\bar{\mathsf u} )
    = e^{-u_i^{(j'-1)}} e^{u_i^{(j')}}, \, \, \, \, 
1\le i\le J, \, \, 1\le j'\le n.
\end{align*}
Let $\bar{\mathbf x} = (\mathbf x_{a_i}^{(j')})$ and 
$\bar{\mathbf y} = (\mathbf y_{a_i}^{(j')})$, we define  
\begin{align}
    \label{eq:schwartz-fcal-in-x-and-y}
    f_{\mathscr S}
    \brac{(\bar{\mathbf x})} 
    \defeq (f\circ \bar{\mathsf x})_{\mathscr S}\brac{
    \mathbf a
    }, \, \, 
f_{\mathscr S}
    \brac{(\bar{\mathbf y})} 
    \defeq (f\circ \bar{\mathsf y})_{\mathscr S}\brac{
    \mathbf a
    }, \, \, 
\end{align}
whenever the right hand sides make sense as Schwartz functional calculus
(cf. \cref{eq:schwartz-fcal-in-a}) in
$\mathbf a = (\bar a^{(0)}, \ldots, \bar a^{(n)})$.
\subsection{Hypergeometric integrals for the rearrangement lemma}
\label{subsec:gen-case}
%

We now  describe the spectral functions required in
the rearrangement lemma (\cref{prop:rearg-lem}).
For a multiindex 
$\alpha = (\alpha_0, \alpha_1, \ldots, \alpha_n) \in  \Z_{>0}^{n+1}$ and a point
$u = (u_1, \ldots, u_n) \in \triangle^n$ in the standard $n$-simplex: 
\begin{align*}
    \triangle^n  = \set{\sum_1^n u_j \le  1, u_1\ge 0, \ldots , u_n\ge 0}
\subset \R^n,  
\end{align*}
  we set:
\begin{align}
    \label{eq:omega-alpha-u-defn}
    \omega_\alpha(u) =
    \brac{\prod_1^n \Gamma(\alpha_l)}^{-1}
        \brac{ 1- \sum_1^n u_l }^{\alpha_0-1} 
        \left( \prod_1^n u_l^{\alpha_l-1}\right)
    \end{align}
where $\Gamma(z)$ is the standard Gamma function. 
Observe that $\omega_\alpha(u)$ couples all the boundary hyperplanes of
the standard $n$-simplex $\triangle^n$ with the index $\alpha$ in
a multiplicative fashion. 
Similarly, for a tuple $\bar A = (A_0, \ldots, A_n)$ of 
positive invertible $m\times m$  matrices, we denote   
\begin{align}
    \label{eq:B(A-u)-scalar-version}
    B_n(\bar A,u) = A_0 (1-\sum_1^n u_l) + \sum_1^n A_l u_l = 
    A_0 (1-\sum_1^n Z_l u_l) \in M_{m \times  m}(\R),
\end{align}
where, in the last equal sign, we have substituted, 
$A_l = A_0 Y_1 \cdots Y_l$ with $Y_l = A_{l-1}^{-1} A_l$, so that 
\begin{align*}
    Z_l = 1- A_0^{-1} A_l = 1- Y_1 \cdots  Y_l ,\, \, \, \,  
    1\le l\le n.
\end{align*}
By assembling the notations together with $\xi = (\xi_1, \ldots, \xi_m) \in
\R^m$, we introduce a family of hypergeometric integrals as below: 
\begin{align}
    \label{eq:Falpha(A-xi)-scalar}
    F_\alpha (\bar A,\xi)
    & = 
    \int_{\triangle^n} \omega_\alpha(u) 
    \frac{ (2\pi)^{m/ 2} }{\sqrt{\det B_n(\bar A,u)} }
    \exp \brac{
            \frac{1}{4} \sum_{1\le  i,j \le n}
            ( B_n(\bar A,u))_{ij}  \xi_i \xi_j
} 
    du . 
\end{align}

Let us apply the functional calculus in \S\ref{subsec:smooth-funcal}
to the hypergeometric family given above.  
The $J$-tuple (with $J=m^2$)  of commuting elements  comes from the entries of
the coefficient matrix
$\mathbf A = (k_{ij}) \in \op{GL}_m(C^\infty(\T^m_\theta))$ 
of the leading term of the differential operator $P$ in Eq.
\eqref{eq:P-gen-form-defn}.
It gives rise to 
$\mathbf{\bar A} = (\mathbf A^{(1)}, \ldots, \mathbf A^{(n)})$ with 
$\mathbf A^{(l)} = (k_{ij}^{(l)})$, $l=0,\ldots,n$, where 
$k_{ij}^{(l)} \in  L(C^\infty(\mathbb{T}^m_\theta)^{\otimes  n},
C^\infty(\mathbb{T}^m_\theta))$ as in \cref{eq:a(j)-as-ops-iota}.
Furthermore, matrix in \cref{eq:B(A-u)-scalar-version} becomes
\begin{align}
    \label{eq:Bu_n-defn}
    \mathbf B_n(u) \defeq
    \mathbf B_n (u, \mathbf{\bar A}) & =   \mathbf A^{(0)}( 1- \sum_1^n u_l) +  
    \sum_1^n \mathbf A^{(l)} u_l  = 
        \mathbf A^{(0)}(1- \sum_1^n \mathbf Z_l u_l ) 
    \end{align}
 with $\mathbf A^{(l)} = \mathbf A^{(0)} \mathbf Y^{(1)} \cdots \mathbf Y^{(l)}$ 
 and $\mathbf Y^{(l)} = (\mathbf y_{ij}^{(l)})
 = (\mathbf A^{(l-1)})^{-1} \mathbf A^{(l)}$  so that:
\begin{align}
    \label{eq:bfZ-defn}
    \mathbf Z^{(l)} = (\mathbf z_{ij}^{(l)}) =
    1- (\mathbf A^{(0)})^{-1} \mathbf A^{(l)} =
    1- \mathbf Y^{(1)} \cdots \mathbf Y^{(l)}, 
    \, \, \,
    l=1,\ldots,n.
\end{align}

Finally, we are ready to describe the generalization of the rearrangement operators
appeared in the conformal setting. They are formal differential operators
acting on the algebra of polynomial symbols $C^{\infty}(\T^m_\theta)[\xi]$. 
By setting $\bar A \to  \mathbf{\bar A}$ and $\xi \to  \partial_\xi$ in
\cref{eq:Falpha(A-xi)-scalar}, we obtain
\begin{align*}
  F_\alpha (\mathbf A) \defeq 
  F_\alpha (\mathbf{\bar A},\partial_\xi) 
  : ( C^{\infty}(\T^m_\theta)[\xi])^{\otimes n} \to
 C^{\infty}(\T^m_\theta)[\xi].  
\end{align*}
In fact, we need
$F_\alpha (\mathbf A) |_{\xi=0}:
 ( C^{\infty}(\T^m_\theta)[\xi])^{\otimes n}\to  C^\infty(\T^m_\theta) $
 with the evaluation map
 $|_{\xi=0}: C^\infty(\mathbb{T}^m_\theta)[\xi] \to  C^\infty(\T^m_\theta)$:
\begin{align}
    \label{eq:Falpha(A)-defn}
    F_\alpha (\mathbf A)
    & = 
    \int_{\triangle^n} \omega_\alpha(u) 
    \frac{ (2\pi)^{m/ 2} }{\sqrt{\det \mathbf B_n(u)} }
    \exp \brac{
            \frac{1}{4} \sum_{1\le  i,j \le n}
            ( \mathbf B^{-1}_n (u))_{ij} \partial_{\xi_{i} } \partial_{\xi_j}
} 
    du . 
\end{align}
Since the domain involves  only polynomial symbols (in $\xi$), 
the exponential of differential operators make sense as the
power series:
\begin{align*}
    \exp
    \brac{
   \sum_{ij} B^{-1}_{ij} \partial_{\xi_i}  \partial_{\xi_j}
    }
&=
    \sum_{N=1}^\infty \frac{1}{N!}
\brac{
   \sum_{ij} B^{-1}_{ij} \partial_{\xi_i}  \partial_{\xi_j}
    }^N \\
    &=  
    \sum_{N=1}^\infty \frac{1}{N!}
    \sum_{l_1, \ldots, l_{2N}}
    \brac{
    B^{-1}_{l_1 l_2 } \partial_{\xi_{l_1}} \partial_{\xi_{l_2}}
    }
    \cdots 
    \brac{
    B^{-1}_{l_{2N-1} l_{2N} } \partial_{\xi_{l_{2N-1}}} \partial_{\xi_{l_{2N}}}
    },
\end{align*}
where $i,j$ and $l$'s are summed over $1$ to $m$.
In practice, we often have to expand $F_\alpha (\mathbf A)$ into components: 
\begin{align}
    \label{eq:FaplhaA-expands-to-components}
    F_\alpha (\mathbf A)
    =&
    \sum_{N=1}^\infty 
    \sum_{l_1, \ldots, l_{2N}}
    F_\alpha (\mathbf A)_{(l_1l_2)\cdots (l_{2N-1}l_{2N})}
\partial_{\xi_{l_1}}  \cdots \partial_{\xi_{l_{2N}}} 
\end{align}
with  $F_\alpha (\mathbf A)_{(l_1l_2)\cdots (l_{2N-1}l_{2N})}:
C^\infty(\T^m_\theta)^{\otimes  n} \to  C^\infty(\T^m_\theta)$:
\begin{align}
    \label{eq:FaplhaA-components}
    \begin{split}
        F_\alpha (\mathbf A)_{(l_1l_2)\cdots (l_{2N-1}l_{2N})}
     & = 
  \frac{1}{4^N N!}  \int_{\triangle^n} \omega_\alpha(u) 
    \frac{ (2\pi)^{m/ 2} }{\sqrt{\det \mathbf B_n(u)} }
\mathbf B^{-1}_n (u)_{l_1 l_2} \cdots 
\mathbf B^{-1}_n (u)_{ l_{2N-1} l_{2N}}
du, \\
F_\alpha (\mathbf A)_{\emptyset} 
     & = 
  \frac{1}{4^N N!}  \int_{\triangle^n} \omega_\alpha(u) 
    \frac{ (2\pi)^{m/ 2} }{\sqrt{\det \mathbf B_n(u)} } du
    ,\, \, \, \text{for $N=0$.}
    \end{split}
\end{align}
Since $\mathbf B \defeq \mathbf B_n(u)$ is symmetric and has mutually commuting
entries, the inverse can be computed from the adjugate matrix\footnote{
  The transpose of the cofactor matrix.}
$\mathbf E_{\mathbf B}$  of $\mathbf B$, which is also symmetric:
\begin{align*}
    (\mathbf B)^{-1} = (\det \mathbf B)^{-1} \mathbf E_{\mathbf B}
    ,\, \, \,
    (\mathbf E_{\mathbf B})_{ij} = 
    (\mathbf E_{\mathbf B})^T_{ij} =
\partial_{\mathbf B_{ij}} (\det \mathbf B).
\end{align*}
As a result, we can replace the inverse in   Eq. \eqref{eq:FaplhaA-components} 
by derivatives of the determinant:
\begin{align*}
     &\, \,    F_\alpha (\mathbf A)_{(l_1l_2)\cdots (l_{2N-1}l_{2N})}\\
    = &\, \, 
     \frac{(2\pi)^{m/ 2}}{4^N N!}  \int_{\triangle^n} \omega_\alpha(u) 
     \sbrac{
  (\det \mathbf B)^{-N-1/2} 
  \brac{
      \partial_{\mathbf B_{l_1 l_2}}(\det \mathbf B)  \cdots 
      \partial_{\mathbf B_{l_{2N-1} l_{2N}}}(\det \mathbf B)  
  }      } \bigg |_{(u,n)}
du.
\end{align*}

We remind the reader three parameters $m$, $n$ and $N$ which has been
frequently used in the construction of $F_\alpha(\mathbf A)$  above:
\begin{enumerate}
    \item $m$ denotes the dimension of the noncommutative tori. It determines the
        length of $\xi \in  \R^m$ and the dimensions  of the matrices such as:
        $\mathbf A^{(j)}$, $\mathbf Y^{(j)}$, $\mathbf Z^{(j)}$   and $\mathbf
        B(u,n)$.  
    \item $n$ comes from the length of $\alpha$: for  $\alpha \in
        \Z_{\ge 1}^{n+1}$, $F_\alpha(\mathbf A)$ acts on $A^{\otimes  n}$ 
       and the integration is taken over the standard $n$-simplex: 
       $u \in \triangle^n$.  
    \item $N$ appears when expanding the formal differential operator 
$F_\alpha(\mathbf A)$  (in $\xi$) 
         in \cref{eq:FaplhaA-expands-to-components}. In applications,
         $N$ is subject to a relation \cref{eq:homogeneq-for-xi}.
\end{enumerate}

\subsection{The diagonal case}
\label{subsec:the-diag-case-spec-funs}
We now assume that $\mathbf A = \mathrm{diag}(k_{11}, \ldots ,k_{mm})$ is diagonal,
that is the leading symbol reads $p_2(\xi) = \sum_{s=1}^m k_s \xi_s^2$ with
abbreviation 
$k_s \defeq k_{ss}$. All the matrices defined in the previous section are
diagonal: $\{\mathbf A^{(l)}\}_{l=0}^n  $, $\{\mathbf Y^{(l)}\}_{l=1}^n $,
$\{\mathbf Z^{(l)}\}_{l=1}^n $, and $\mathbf B_n(u)$. For $s=1, \ldots, m$:  
\begin{align}
    \label{eq:A-Y-Z-defn-diag}
    (\mathbf A^{(l)})_{ss} = k^{(l)}_s, \, \, \,
    (\mathbf Y^{(l)})_{ss} = \mathbf y_s^{(l)}=(k_s^{(l-1)})^{-1} k^{(l)}_s,
    \, \, \,
    (\mathbf Z^{(l)})_{ss} = \mathbf z_s^{(l)} =
    1- \mathbf y_s^{(1)} \cdots \mathbf y_s^{(l)},
\end{align}
and for $u= (u_1, \ldots, u_n)$ 
\begin{align*}
    (\mathbf B_n)_{ss} (u) = k_s^{(0)} 
    (1- \sum_{l=1}^{n} \mathbf z_s^{(l)} u_l).
\end{align*}
Therefore the components of $F_\alpha(\mathbf A)$ are ``diagonal'' in the sense
that
\begin{align*}
    F_\alpha(\mathbf A)_{(l_1 l_2) \cdots  (l_{2N-1} l_{2N})} = 
    (\mathbf 1_{l_1 l_2} \cdots \mathbf 1_{l_{2N-1} l_{2N}} ) 
    F_\alpha (\mathbf A)_{(l_2 l_2) \cdots  (l_{2N} l_{2N})} , 
\end{align*}
where $\mathbf 1_{ij}$ is the $(i,j)$-entry of the identity matrix. 
The components of $F_\alpha(\mathbf A)$ can be written as a product 
in which the first factor collects the contribution from the 
the left multiplications (entries of $\mathbf A \defeq \mathbf A^{(0)}$),
while the second factor consists of the action of the conjugation operators
generated by $\mathbf Z^{(l)}$ or $\mathbf Y^{(l)}$:
\begin{align}
    \label{eq:FaplhaA-vs-mathsfFaA}
    F_\alpha (\mathbf A)_{(s_1 s_1) \cdots  (s_N s_N)} = 
    \frac{1}{4^N N!}
    \det(\mathbf A)^{-\frac{1}{2}}
    \brac{ \prod_{l=1}^N k_{s_l} }^{-1} 
    \mathsf F_\alpha(\mathbf z)_{s_1, \ldots, s_N}
\end{align}
where $\mathbf z$ denotes the matrix $\mathbf z = ( \mathbf z_j^{(l)})$, 
$1\le  j \le m$ and $1\le  l \le  n$. 
The spectral   functions of $\mathsf F_\alpha(\mathbf z)_{s_1, \ldots, s_N}$ 
are of $m \times  n$ variables: $z \defeq z(m,n) = (z_j^{(l)})$, 
with $1\le  j \le m$ and  $1\le l \le  n$: 
\begin{align}
    \mathsf F_\alpha(z)_{(s_1, \ldots, s_N)} =
  (2\pi)^{\frac{m}{2}}  \int_{\triangle_n} 
    \omega_\alpha(u) (\det Z_n(z,u))^{-\frac{1}{2}}
    \brac{
       \prod_{l=1}^N Z_n(z,u)_{s_l s_l} 
    }^{-1} 
    d u,
    \label{eq:sfFa-defn}
\end{align}
where $Z_n(z,u)_{s_l s_l}$ is the $s_l$-th diagonal entry of the $m \times  m$
matrix: 
\begin{align*}
Z_n(z,u) =\mathrm{diag}(
1- \sum_{l=1}^n z_1^{(l)}u_l ,\cdots,                                     
1- \sum_{l=1}^n z_m^{(l)}u_l 
)   .
\end{align*}
When $N=0$,
\begin{align*}
    \mathsf F_\alpha(z)_{\emptyset} = 
  (2\pi)^{\frac{m}{2}}
  \int_{\triangle_n} 
    \omega_\alpha(u) 
    (\det Z_n(z,u))^{-\frac{1}{2}}
   du. 
\end{align*}

\subsection{The conformal case}
\label{subsec:the-conformal-case-Halpha}

We now further assume that
$k =k_1 = \cdots =k_m $,  that is, there 
is only one noncommutative coordinate $k \in C^\infty(\T^m_\theta)$
positive and invertible and $\mathbf A = k I$ is a scalar matrix. 
The $\mathbf z$-variables in Eq.
\eqref{eq:A-Y-Z-defn-diag} becomes: 
$\mathbf Z^{(l)} = \mathbf z^{(l)} I$ with $\mathbf y = k^{-1}(\cdot) k$ and 
\begin{align*}
  \mathbf  z^{(1)} = 1-\mathbf y, \, \, \,
  \mathbf  z^{(2)} = 1- \mathbf y^{(1)} \mathbf y^{(2)}, \, \, \,
    \cdots, \, \, \,
   \mathbf z^{(n)} = 1- \mathbf y^{(1)} \cdots \mathbf y^{(n)}.
\end{align*}
With $z^{(l)}_1 = \cdots =z^{(l)}_m = z^{(l)}$, 
the integrand of in Eq. \eqref{eq:sfFa-defn} is reduced to:
\begin{align*}
    (\det Z_n(z,u))^{-\frac{1}{2}}
    \brac{
       \prod_{l=1}^N Z_n(z,u)_{s_l s_l} 
    }^{-1} 
    =\brac{
        1- \sum_{l=1}^n z^{(l)} u_l
    }^{-\frac{m}{2}-N}.
\end{align*}
The components $\mathsf F_\alpha(\mathbf z)_{(s_1, \ldots, s_N)}$ are all
identical, in other words, only the length $N$ matters. 
    In later computation, $F_\alpha(\mathbf A)$ turns up when integrating
    the resolvent approximations $\{b_j(\xi,\lambda)\}_{j=0}^\infty $ (cf.
    \S\ref{subsec:resolvent-appox}). Each $b_j(\xi,\lambda)$ is of              
    of degree $-j-2$ (in $\xi$) and consists of summands of the form 
        \begin{align}
            \label{eq:N-remarks}
            b_0^{\alpha_0} \rho_1  b_0^{\alpha_1} \cdots \rho_n b_0^{\alpha_n}  
            = (b_{0}^{(0)})^{\alpha_0} \cdots (b_{0}^{(n)})^{\alpha_n} 
            \cdot \brac{
            \rho_1 \otimes  \cdots \otimes  \rho_n
            }
        \end{align}
        which contributes to a term $F_{\alpha}(\mathbf A)(\rho_1 \otimes
        \cdots \otimes  \rho_n)$ in the final result of the heat coefficient. 
In \cref{eq:N-remarks},
$b_0 = (p_{2}(\xi) - \lambda)^{-1}$ is of degree $-2$ in $\xi$, while           
$\rho_1(\xi) \otimes \cdots \otimes  \rho_n(\xi)$ is of degree $2N$
\footnote{If $\rho_1(\xi) \otimes \cdots \otimes  \rho_n(\xi)$ is of odd degree
in $\xi$, it automatically killed by $F_\alpha(\mathbf A) |_{\xi = 0}$
according to \cref{eq:FaplhaA-expands-to-components}. This observations also
explains the vanishing of all odd heat coefficients.}.   
        Hence $N$, $\alpha$ and $j$ are subjected to the  condition:
        \begin{align}
            -2(\sum_{s=0}^{n} \alpha_s )+ 2N = -j-2, \, \, \,
            \text{or} \, \, \,
            N= \sum_{s=0}^{n} \alpha_s -j/2 -1.
            \label{eq:homogeneq-for-xi}
        \end{align}
Let  $\bar z = (z^{(1)}, \ldots, z^{(n)})$, $\alpha = (\alpha_1, \ldots,
\alpha_n)$,  all the functions $\mathsf F_\alpha((z^{(j)}_l))_{s_1,
\ldots, s_N}$ in Eq. \eqref{eq:sfFa-defn}  are equal to  
\begin{align}
    \label{eq:sfHalpha}
    \mathsf H_\alpha(\bar z;m;j) =
(2\pi)^{\frac{m}{2}}
  \int_{\triangle_n} 
    \omega_\alpha(u) 
    \brac{
        1- \sum_{l=1}^n z^{(l)} u_l
    }^{-\frac{m}{2}-\sum_{s=0}^{n} \alpha_s +j /2+1} du.
\end{align}
We set a default value for  $j$: $\mathsf H_\alpha(\bar z;m)  \defeq \mathsf
H_\alpha(\bar z;m;2)$ when dealing with the second heat coefficient. 
Compared to the hypergeometric family $H_{\alpha}(\vec
z;m)$  used in \cite{Liu:2018aa,Liu:2018ab}, we have
\begin{align}
    \label{eq:Hnow-vs-Hbefore}
    \mathsf H_{\alpha}(\vec z;m) =  
    \frac{ (2\pi)^{ m / 2}}{ \Gamma(d(\alpha,m)) } 
    H_{\alpha}(\vec z;m)  = C_m 
    \frac{\Gamma(m / 2)}{ \Gamma(d(\alpha,m)) } 
    H_{\alpha}(\vec z;m)  ,
\end{align}
where $\alpha = (\alpha_0, \ldots , \alpha_n)$, and 
$ d(\alpha,m) \defeq  d(\alpha,2,m) $ with 
$d(\alpha,j,m) = \sum_0^n \alpha_l +  m /2 -j$. The constant $C_m$ is the
overall factor used in \cite{Liu:2018aa,Liu:2018ab}: 
\begin{align}
    C_m = 2^{m / 2} \frac{ \pi^{m / 2} }{\Gamma(m / 2)} = 2^{ m /2 -1} 
    \mathrm{vol}( \mathbb S^{m-1}) .
    \label{eq:overall-factor-before}
\end{align}


\section{Heat coefficients via Pseudo-differential Calculus} 
\label{sec:heat-coef-via-pscal}
We assume, in the section, the reader's  acquaintance with Connes's
pseudo-differential calculus attached to a $C^*$-dynamical system
see \cite{connes1980c} and \cite{MR967366,MR967808}.
In recent papers, \cite{MR3985230,MR3985231} give detailed discussions on  
on pseudo-differential operators on arbitrary noncommutative tori and 
\cite{MR3540454,lesch2018modular}  deal with pseudo-differential calculus 
acting on Heisenberg modules. 
The author 


 We only consider (pseudo-differential) operators acting on functions. 
 Without extra indication,  $P: C^\infty(\mathbb{T}^m_\theta) \to
 C^\infty(\mathbb{T}^m_\theta)$ will always denote
 an elliptic self-adjoint second order differential operator of the form in 
 \cref{eq:P-gen-form-defn} whose coefficient matrix $\mathbf A$ of the lead
 term fulfills  the conditions in \cref{defn:conds-for-A}.  
We would like to outline the computation of the heat coefficients 
in the small time asymptotic of $\Tr(a e^{-tP})$ in 
\cref{eq:e^tP-asym-defn} with special focus on the $V_2$-term.

\subsection{Symbol calculus}
\label{subsec:symbol-cal}
The space of parametric symbols is contained in  $p(\xi,
\lambda) \in C^\infty(\R^m \times \Lambda, C^\infty(\T^m_\theta))$,
the analog of functions on the cotangent bundle of $\T^m_\theta$, where the
domain of the resolvent parameter $\Lambda \subset \mathbb{C}$ is a conic
subset.  We will encounter only homogeneous symbols in the paper,  on which
the filtration (or the graded structure)  is reflected on the homogeneity
condition:
a degree $d$ symbol satisfies, with $d \in  \R$,   
\begin{align}
    p( c \xi, \sqrt{c} \lambda) = c^d p(\xi, \lambda), \, \, \,
    \forall  c>0.
    \label{eq:p-homogeniety-defn}
\end{align}
Notice that $\lambda$ is treated as a degree two symbol since 
the elliptic operator in question is of second order.
The key ingredient of the symbolic calculus is
the formal star product $\star$ represents the symbol of the
composition of two  pseudo-differential operators $P$ and $Q$:
\begin{align}
    \label{eq:star-prod-formal}
    \pmb\sigma(PQ) = p \star q   \backsim \sum_{j=0}^\infty a_j(p,q), \, \, \,
    p = \pmb\sigma(P), q = \pmb\sigma (Q),
\end{align}
where $a_j(\cdot, \cdot)$ are bi-differential operators lowering the total degree
by $j$.

To incorporate the notations in \cite{Liu:2015aa,LIU2017138},
we put 
\begin{align*}
  \nabla_j = -i \delta_j: C^\infty(\R^m, C^\infty(\T^m_\theta)) \to
C^\infty(\R^m, C^\infty(\T^m_\theta))  
\end{align*}
which are the horizontal covariant differentials if we think $C^\infty(\R^m,
C^\infty(\T^m_\theta))$ as the smooth functions on the cotangent space of
$\T^m_\theta$. 
The vertical differentials $D$ is simply the derivatives in $\xi \in  \R^m$:
$D_s = \partial_{\xi_s}$. 
Notations, for higher derivatives, such as:
\begin{align*}
    (D^2 p)_{st} = D^2_{st} p \defeq \partial_{\xi_s}  \partial_{\xi_t} p, \, \, \,
    (\nabla ^2 p)_{st} = \nabla^2_{st} p \defeq 
    \nabla_s \nabla _t (p)=
    - \delta_s \delta_t (p)
\end{align*}
are freely used in later calculations.
In particular,the bi-differential operators in the $\star$-product 
(\cref{eq:star-prod-formal}) are given by: 
\begin{align}
    \label{eq:conness-a_j-defn}
    a_j(p,q) = \frac{(-i)^j}{j!} (D^j p)(\nabla^j q) \defeq
\frac{(-i)^j}{j!} 
    \sum_{ 1\le  l_1, \ldots, l_j \le m}
    (D^j p)_{l_1 \cdots l_j}
    (\nabla ^j p)_{l_1 \cdots l_j},
\end{align}
where $(D^j p)(\nabla^j q)$ is the contraction of two rank $j$ tensor:
contravariant $D^j p$ and covariant $\nabla^j q$ respectively.    
Of course, the multiplication among the summands is the one induced from 
$C^\infty(\mathbb{T}^m_\theta)$. 

For differential operators, computation of symbols is similar to the classical
counterpart. 
In detail, they are polynomials in $\xi$ constructed on generators in   the
following way: 
for the basic derivations $\{\delta_j\}_{j=1}^m$ and coordinate functions $f \in
C^\infty(\T^m_\theta)$ (via left-multiplication), we have:
\begin{align*}
   \pmb\sigma(\nabla_j) = \pmb\sigma(-i\delta_j) = -i \xi_j, \, \, \,
    \pmb\sigma(f) = f.
\end{align*}
The rest is determined via the $\star$-product Eq. \eqref{eq:star-prod-formal},
which is a finite sum when $p,q$ are polynomials in $\xi$.   


\subsection{Resolvent approximation}
\label{subsec:resolvent-appox}

Let $P$ be such an elliptic operator in of the form in Eq.
\eqref{eq:P-gen-form-defn} with the heat operator given in 
\cref{eq:heatop-P-defn} via holomorphic functional calculus which 
suggests that one shall start with the resolvent $(P-\lambda)^{-1}$.
We write the symbol $\pmb\sigma(P - \lambda) 
= p_2(\xi, \lambda) +p_1 (\xi)+p_0$, 
so that $p_l$ is homogeneous of degree (cf. \cref{eq:p-homogeniety-defn})
$l$, $l=0,1,2$.  
Note that  the resolvent parameter $\lambda$ is
grouped with the leading terms $p_2(\xi,\lambda) = p_2(\xi) -\lambda$ 
as they are both of degree $2$. 
We assume formally $ \pmb\sigma((P
-\lambda)^{-1}) \sim \sum_{j=0}^\infty b_j(\xi, \lambda)$ with $b_j$ of degree
$-2-j$.
The inverse is taken
with respect to the $\star$-product:
\begin{align*}
     (b_0+b_1+ \cdots) \star   \sigma(P) 
        = \sum_{0\le j,r\le \infty} \sum_{l=0}^2 a_j(b_r,p_l )
        \sim 1.
    \end{align*}
We compare two sides according to the homogeneity.     
Since summand $a_j(b_r,p_l )$ is of degree $l-2-r-j$, we get, by collecting 
terms of degree $i=0,-1,-2,\ldots$: 
\begin{align}
    \label{eq:deg-0-resolvent-aprx}
    a_0(b_0,p_2) &= 1, \\\, \, \,
    \sum_{l=0}^2 \sum_{r=0}^{N} a_{l-2-r+N}(b_r,p_l) &= 0, \, \, \,
   N=1,2,\ldots.
    \label{eq:deg-N-resolvent-aprx}
\end{align}
Since $a_0(p,q) = pq$, 
the first approximation $b_0$ is simply the resolvent of the
leading symbol $p_2$: 
$b_0 = (p_2(\xi,\lambda))^{-1} = (p_2(\xi) - \lambda)^{-1}$, where 
the inverse  is taken in
$C^\infty(\R^m, C^\infty(\T^m_\theta))$, $\forall  \xi \neq 0$ whose existence is 
provided by the ellipticity of $P$.
By solving equation in Eq. \eqref{eq:deg-N-resolvent-aprx} one by one (for
$N=1,2, \ldots$), we obtain the recursive formulas of $b_N$:
\begin{align}
    \label{eq:bN-genernal}
    b_N =   \brac{
        \sum_{l=0}^2 \sum_{r=0}^{N-1}
        a_{l-2-r+N}(b_r,p_l)
    }(-b_0).
\end{align} 
For example, the first two terms are given by:
\begin{align}
    \label{eq:b2cal-b1term-gen}
 b_1 &= [ 
 a_0\left(b_0,p_1\right)
 + a_1\left(b_0,p_2\right)]
 (-b_0)
\\
 b_2 &= 
 [a_0\left(b_0,p_0\right)+a_0\left(b_1,p_1\right)+
 a_1\left(b_0,p_1\right)+a_1\left(b_1,p_2\right)+
 a_2\left(b_0,p_2\right)]
 (-b_0),
\label{eq:b2cal-b2term-gen}
\end{align}
with 
\begin{align}
\label{eq:a_0-to-a_2}
    a_0(p,q) = p q, \, \, \,
    a_1(p,q) = -i (D p) (\nabla q), \, \, \,
    a_2(p,q) = -\frac{1}{2} ( D^2 p) ( \nabla^2  q)
    .
\end{align}
By carefully expand the right hand sides of
\cref{eq:b2cal-b1term-gen,eq:b2cal-b2term-gen} according to 
\cref{eq:a_0-to-a_2}, we get
\begin{align*}
    b_2 = (b_2)_{\mathbf{I}}  + (b_2)_{\mathbf{II}},
\end{align*}
where the terms are grouped apropos to the number of $b_0$-factors.   
Here is part I:
\begin{align}
    \label{eq:b2I-gen}
\begin{split} 
    (b_2)_{\mathbf{I}} &= -\frac{1}{2} 
    b_0^{2}
     ( D^2_{st} p_2) \cdot (\nabla^2_{st} p_2) b_0 +
     b_0^3 ( D_s p_2) \cdot ( D_t p_2) \cdot (\nabla^2_{st} p_2) b_0
     \\
& -ib_0^2 (D_s p_2) (\nabla_s p_1) b_0 - b_0 (p_0) b_0,                     
\end{split}
\end{align}
and then part II:
\begin{align}
    \label{eq:b2II-gen}
        \begin{split}
       (b_2)_{\mathbf{II}} 
    & =
     b_0^2 (D_s p_2) (D_t (\nabla_s p_2)) b_0 (\nabla_t p_2) b_0  
     - b_0^2 (D_s p_2) (\nabla_s p_2) b_0^2 (D_t p_2) (\nabla_t p_2) b_0 
     \\ &+
   b_0^2 (D^2_{st} p_2) (\nabla_s p_2) b_0 (\nabla_t p_2) b_0 
   -2  b_0^3 (D_s p_2) (D_t p_2) (\nabla_s p_2) b_0 (\nabla_t p_2) b_0
   \\
        &+i b_0^2 ( D_s p_2) ( \nabla_s p_2)b_0 (p_1) b_0 
        -i b_0 (D_s p_1)b_0 (\nabla_s p_2)b_0
        + i b_0 (p_1)b_0^2 (D_s p_2)(\nabla_s p_2) b_0    \\ 
        & + b_0 (p_1) b_0 (p_1)b_0,
        \end{split}
\end{align}
where summations are taken over repeated indices $s,t$ from $1$  to $m$.

\subsection{Rearrangement lemma}
\label{subsec:rearr-lem}
 The resolvent approximation $\{b_j\}_{j=0}^\infty$ determines the heat
 coefficients in the following way.
\begin{prop}
    \label{prop:bj-to-vjP}
    In the light of Eq. \eqref{eq:heatop-P-defn},
    the heat coefficient $v_j(P)$, $j=0,1,2,\ldots$,  is completely  determined
    by $b_j$ in the
    following way:
    \begin{align}
    \label{eq:bj-to-vjP}
        v_j(P) = 
        \frac{1}{2\pi i}
        \int_{\R^m}\int_{\mathcal C}
        e^{-\lambda} b_j(\xi,\lambda)d\lambda d\xi
        .
    \end{align}
\end{prop}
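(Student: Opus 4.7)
The plan is to start from the contour representation \cref{eq:heatop-P-defn} and reduce the trace of the heat operator to a double integral of the resolvent symbol. Concretely, Connes's pseudo-differential trace formula on $\T^m_\theta$ asserts that $\Tr(aT)=\varphi_0\bigl(a\cdot\int_{\R^m}\pmb\sigma(T)(\xi)\,d\xi\bigr)$ for any trace class pseudo-differential operator $T$, so that
\begin{align*}
\Tr\!\bigl(a e^{-tP}\bigr)=\frac{1}{2\pi i}\,\varphi_0\!\left(a\cdot \int_{\R^m}\!\int_{\mathcal C}e^{-t\lambda}\,\pmb\sigma\!\bigl((P-\lambda)^{-1}\bigr)(\xi)\,d\lambda\,d\xi\right).
\end{align*}

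Next, I would substitute the asymptotic expansion $\pmb\sigma((P-\lambda)^{-1})\sim\sum_{j\ge 0}b_j(\xi,\lambda)$ constructed in \S\ref{subsec:resolvent-appox} term by term, and for each fixed $j$ perform the rescaling $\xi\mapsto t^{-1/2}\xi$, $\lambda\mapsto t^{-1}\lambda$ (with $\mathcal C$ scaled accordingly). The homogeneity \cref{eq:p-homogeniety-defn} yields $b_j(t^{-1/2}\xi,t^{-1}\lambda)=t^{(2+j)/2}b_j(\xi,\lambda)$; the Jacobian contributes $t^{-m/2}\cdot t^{-1}$ and the exponential becomes $e^{-\lambda}$, so that
\begin{align*}
\int_{\R^m}\!\int_{\mathcal C}e^{-t\lambda}\,b_j(\xi,\lambda)\,d\lambda\,d\xi \;=\; t^{(j-m)/2}\int_{\R^m}\!\int_{\mathcal C}e^{-\lambda}\,b_j(\xi,\lambda)\,d\lambda\,d\xi .
\end{align*}
Matching the resulting expansion with \cref{eq:e^tP-asym-defn} and using the Radon--Nikodym identification $V_j(a,P)=\varphi_0(a\, v_j(P))$ from \cref{eq:V_j-defn} then reads off \cref{eq:bj-to-vjP}.

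The main obstacle is making the formal substitution of the asymptotic expansion under the trace rigorous. The standard remedy is a truncation argument: write $\pmb\sigma((P-\lambda)^{-1})=\sum_{j<N}b_j(\xi,\lambda)+r_N(\xi,\lambda)$, bound $r_N$ in the parametric symbol class on $\R^m\times\mathcal C$, and show that after integration against $e^{-t\lambda}\,d\lambda\,d\xi$ and pairing with $\varphi_0(a\,\cdot\,)$ the remainder is $O(t^{(N-m)/2})$ as $t\searrow 0$. These Gilkey-type resolvent estimates rely crucially on the ellipticity hypothesis of \cref{defn:conds-for-A}, which permits $\mathcal C$ to be taken as a Hankel contour around a conic sector disjoint from the spectrum of $P$; the adaptation from the commutative setting of \cite{gilkey1995invariance} to arbitrary noncommutative tori follows the lines of \cite{MR3985230,MR3985231}. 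Given this analytic input, the homogeneity/rescaling computation above is the whole story.
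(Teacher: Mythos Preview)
Your proposal is correct and follows essentially the same approach that the paper indicates: the paper's own proof is only a two-line pointer to the trace formula for pseudo-differential operators (relating $\Tr$ to the $\xi$-integral of the symbol) together with Gilkey's resolvent-expansion argument in \cite[\S1.8]{gilkey1995invariance}, and your write-up simply spells out those two ingredients --- the trace formula, the homogeneity rescaling $(\xi,\lambda)\mapsto(t^{-1/2}\xi,t^{-1}\lambda)$, and the truncation/remainder estimate --- in more detail than the paper does.
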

\begin{proof}
    This is a standard result in pseudo-differential calculus. One first 
    establishes a trace formula linking the operator 
 trace of a pseudo-differential operator with its symbol
 (cf. \cite[Theorem 5.7]{MR538027} for instance)
and then follows the argument in
\cite[\S1.8]{gilkey1995invariance} to reach the heat coefficients.
\end{proof}

Similar to the $b_2$-term given in  \cref{eq:b2I-gen,eq:b2II-gen}, we have, 
in general, that $b_j$ consists of finite sums of the form:
\begin{align}
      b_j = \sum 
      b_0^{\alpha_0} \rho_1  b_0^{\alpha_1} \cdots \rho_n b_0^{\alpha_n}  
            \defeq \sum 
            (b_{0}^{(0)})^{\alpha_0} \cdots (b_{0}^{(n)})^{\alpha_n} 
            \cdot \brac{
            \rho_1 \otimes  \cdots \otimes  \rho_n
            },
            \label{eq:bj-summands}
        \end{align}
where the $\rho$'s are the derivatives of $p_j$, $j=2,1,0$, which 
are polynomial in $\xi$ and have no dependence on $\lambda$. 
Therefore we factor out the $\rho$'s by making used of the notations in 
\cref{eq:a(j)-as-ops-iota} which leads to the rearrangement operator 
from $C^\infty(\mathbb{T}^m_\theta)[\xi]^{\otimes  n}$ to
$C^\infty(\mathbb{T}^m_\theta)$:  
\begin{align*}
    \rho_1 \otimes  \cdots \otimes  \rho_n
\to  \frac{1}{2\pi i}
        \int_{\R^m}\int_{\mathcal C}
        e^{-\lambda} (
            (b_{0}^{(0)})^{\alpha_0} \cdots (b_{0}^{(n)})^{\alpha_n} 
            )d\lambda \brac{
    \rho_1 \otimes  \cdots \otimes  \rho_n
            } d\xi.
\end{align*}
The rearrangement lemma (\cref{prop:rearg-lem}) asserts that it is exactly
the operator $F_\alpha(\mathbf A)  \big |_{\xi=0}$ 
described in Eq. \eqref{eq:Falpha(A)-defn}, 
where $\mathbf A = (k_{ij}) \in  \op{GL}_{m }(C^\infty(\T^m_\theta))$ 
is the coefficient matrix of $p_2(\xi)$.

We first deal with the contour integral in \cref{eq:bj-to-vjP}, for which
the origin is no longer a singularity\footnote{compared to the one in 
\cref{eq:heatop-P-defn}, in which the elliptic operator $P$ might have
non-trivial kernel}.  
We fix  $\mathcal C$ to be the imaginary axis $\lambda
= i x$, with  $x \in \R $ oriented from $-\infty$ to $\infty$. 
By replacing $b_j$ with the summands shown in \cref{eq:bj-summands}, we have
arrived at the integral in \cref{eq:G-bar-l-bar-A-defn}, which turns out to be
an hypergeometric integral (\cref{eq:G-bar-l-bar-A-lemma}).
\begin{lem}
        \label{lem:step1-contour-int}
    Let $\overline A = (A_0, \ldots, A_n) \in  \R^{n+1}$, $l_0, \ldots , l_n
    \in \N_+$, denote 
    \begin{align}
        \label{eq:G-bar-l-bar-A-defn}
        G_{l_0, \ldots , l_n}(\overline A) = 
        \frac{1}{2 \pi } \int_{-\infty}^\infty 
        e^{-ix} (A_0 - ix)^{-l_0} \cdots (A_n  - ix)^{-l_n}
       dx
    \end{align}
    Then $G_{l_0, \ldots , l_n}(\overline A)$  is equal to the following  confluent 
    type hypergeometric integral:
    \begin{align}
        & \, \, 
        G_{l_0, \ldots , l_n}(\overline A)  \nonumber \\
        = & \, \, 
        \brac{  \prod_{0}^n \Gamma(l_j) }^{-1}
     \int_{\triangle^n}
        (1 - \sum_1^n u_j)^{l_0 -1} \brac{  \prod_0^n u_j^{l_j -1}  }
        e^{ - \brac{ A_0(1- \sum_1^n u_l)
+ \sum_1^n  A_l u_l
        }
        }         du_1 \cdots du_n \nonumber \\
        =&\, \, 
        \int_{\triangle^n}
     \omega_l(u) 
        e^{ - \brac{  A_0(1- \sum_1^n u_l) + \sum_1^n  A_l u_l }}
        du
        \label{eq:G-bar-l-bar-A-lemma}
    \end{align}
where $\triangle^n  = \{ \sum_1^n u_j \le  1, u_1\ge 0, \ldots , u_n\ge 0\}
\subset \R^n$  is the standard simplex and  $\omega_l(u)$ is defined in 
Eq. \eqref{eq:omega-alpha-u-defn}.
\end{lem}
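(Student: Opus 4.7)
The plan is to recognize $G_{l_0, \ldots, l_n}(\overline A)$ as a convolution of Gamma densities evaluated at $t = 1$ via the Fourier inversion formula. The starting point is the elementary identity
\begin{equation*}
(A_j - ix)^{-l_j} = \frac{1}{\Gamma(l_j)} \int_0^\infty t_j^{l_j - 1} e^{-(A_j - ix) t_j} \, dt_j, \qquad A_j > 0,
\end{equation*}
which exhibits $(A_j - ix)^{-l_j}$ as the Fourier transform (in $x$) of the tempered function $\Gamma(l_j)^{-1} t_j^{l_j - 1} e^{-A_j t_j}$ supported on $[0,\infty)$.

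Substituting the product of these representations into the defining integral \cref{eq:G-bar-l-bar-A-defn}, I would apply Fubini to interchange the $x$-integral with the $(n+1)$-fold $t$-integral. The inner $x$-integral then collapses to a Dirac mass,
\begin{equation*}
\frac{1}{2\pi} \int_{-\infty}^{\infty} \exp\!\left(ix\bigl(\textstyle\sum_{j=0}^n t_j - 1\bigr)\right) dx = \delta\!\left(\textstyle\sum_{j=0}^n t_j - 1\right),
\end{equation*}
restricting the remaining integration from $[0,\infty)^{n+1}$ onto the affine slice $\{t_0 + \cdots + t_n = 1,\ t_j \ge 0\}$.

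Finally, I would parametrize that slice by $u_j = t_j$ for $j = 1, \ldots, n$ with $t_0 = 1 - \sum_{j=1}^n u_j$; the positivity constraint on $t_0$ cuts out precisely the standard simplex $\triangle^n$, and the integrand becomes
\begin{equation*}
\frac{1}{\prod_{j=0}^n \Gamma(l_j)} \bigl(1 - \textstyle\sum_1^n u_l\bigr)^{l_0 - 1} \prod_1^n u_l^{l_l - 1}\, \exp\!\Bigl(-\bigl[A_0(1-\sum_1^n u_l) + \sum_1^n A_l u_l\bigr]\Bigr),
\end{equation*}
which is exactly $\omega_l(u)$ (from \cref{eq:omega-alpha-u-defn}) times the exponential factor in \cref{eq:G-bar-l-bar-A-lemma}. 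Reading off the result proves the identity.

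The only subtle step is the Fubini justification: it is immediate when $\sum_j l_j \ge 2$, since then $\prod_j (A_j - ix)^{-l_j}$ decays faster than $|x|^{-2}$ on the real axis and the double integral is absolutely convergent. In the borderline cases (at most one factor, or a pair of $l_j = 1$ with $n=1$) one can either insert a Gaussian regulator $e^{-\epsilon x^2}$, carry out the computation, and let $\epsilon \to 0^+$, or argue within the space of tempered distributions where the Fourier-inversion identity displayed above is a tautology. Either way, the passage is routine, so no genuine obstacle remains.
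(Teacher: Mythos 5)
Your proof is correct and follows essentially the same route as the paper: both start from the Mellin/Laplace representation $(A_j - ix)^{-l_j} = \Gamma(l_j)^{-1}\int_0^\infty t_j^{l_j-1} e^{-(A_j-ix)t_j}\,dt_j$, interchange integrals, and collapse the $x$-integral onto the hyperplane $\sum t_j = 1$; the paper phrases this last step via a change of variables and the Fourier inversion theorem rather than a Dirac delta, but the computation is identical.
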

\begin{proof}
    We begin with rewriting each $ (A_j - ix)^{-l_j }$, $j=0,\ldots,n$ as a 
    Mellin transform: 
    \begin{align*}
        (A_j - ix)^{-l_j } = \frac{1}{\Gamma(l_j)} \int_0^\infty
        s_j^{l_j - 1} e^{ - (A_j - ix) s_j } ds_j
    \end{align*}
    so that $ G_{l_0, \ldots , l_n}(\overline A)$ becomes:
    \begin{align*}
& \, \, 
        \brac{  \prod_{0}^n \Gamma(l_j) }
        G_{l_0, \ldots , l_n}(\overline A)  \\
        = & \, \,         
        \frac{1}{2 \pi } \int_{-\infty}^\infty 
        \int_{[0,\infty)^{n+1}}
        e^{ ( \sum_0^n s_j - 1)ix}  \prod_{0}^n
        \brac{
        s_j^{l_j -1} e^{-A_j s_j}
        }
        ds_0 \cdots  ds_n  dx \\
        = & \, \,         
        \int_{[0,\infty)^{n}}  \frac{1}{2 \pi }
        \int_{\R} \int_{\sum_1^n u_j -1}^{\infty}
        e^{u_0 i x} (u_0 + 1- \sum_1^n u_j )^{-l_0-1}     
        e^{-A_0 (u_0 +1 - \sum_{1}^n u_j)} 
         \\
          & \, \,         
        \prod_{0}^n
        \brac{
        u_j^{l_j -1} e^{-A_j u_j}
        }
        du_0 dx (du_1 \cdots  du_n), 
    \end{align*}
    where the last line is obtained by the substitution:
    \begin{align*}
    u_0 = \sum_0^n s_j -1, \, \, \, u_1 = s_1, \ldots,
    u_n = s_n.
    \end{align*}
  We denote 
  \begin{align*}
      f(u_0, \ldots , u_n) = (u_0 + 1- \sum_1^n u_j )^{-l_0-1}     
        e^{-A_0 (u_0 +1 - \sum_{1}^n u_j)} 
        \prod_{0}^n
        \brac{
        u_j^{l_j -1} e^{-A_j u_j}
        }
  \end{align*}
  and view it as  a function in $u_0$. Set $ \tilde f (u_0) = 
  \mathbf 1_{ \{ u_0\ge  \sum_1^n u_j -1 \} } (u_0)
  f(u_0, \ldots, u_n) $, where $\mathbf 1_{ \{ u_0\ge \sum_1^n u_j -1 \}
  } (u_0) $ is the characteristic function in $u_0$ 
  of the set $ \{ u_0\ge  \sum_1^n u_j -1 \} \subset \R$.
  The Fourier inversion theorem with respect to $du_0 dx$ gives:
  \begin{align*}
      &  \int_{\R} \int_{\sum_1^n u_j -1}^{\infty}
        e^{u_0 i x} (u_0 + 1- \sum_1^n u_j )^{-l_0-1}     
        e^{-A_0 (u_0 +1 - \sum_{1}^n u_j)} 
        \prod_{0}^n
        \brac{
        u_j^{l_j -1} e^{-A_j u_j}
        }
        du_0 dx  \\ 
        = & \, \, 
        \int_{\R} \int_{\sum_1^n u_j -1}^{\infty}
        e^{u_0i x} f(u_0, \ldots , u_n) d u_0 dx =
        \int_{\R}  \int_{\R} 
        e^{u_0 i x} \tilde f(u_0, \ldots , u_n) d u_0 dx  \\
        = & \, \, 
\tilde f(0, u_1,\ldots , u_n) = f(0, u_1,\ldots , u_n)
\mathbf 1_{ \{ \sum_1^n u_j \le  1\} }(u_1, \ldots, u_n) .
  \end{align*}
We conclude the proof by observing 
that $f(0, u_1,\ldots
, u_n)$ is exactly the integral (including the factor $ e^{-A_0}$) appeared
on the right hand side of \eqref{eq:G-bar-l-bar-A-defn} and 
$\triangle^n = [0,\infty)^n \cap  \{ \sum_1^n u_j \le  1\}$. 
\end{proof}
The next step is the integration in $\xi$ in \cref{eq:bj-to-vjP}, which will be 
handled by the well-known Gaussian integral for polynomials.
\begin{lem}
    [Gaussian Integral]
    \label{lem:gaussian-int}
    Let $f(\xi)$ be a polynomial in $\xi = (\xi_1, \ldots, \xi_m)$ and  $B
    = (B_{ij})$ be a symmetric positive-definite $m \times m$ matrix
 with the inverse denoted by $(B^{-1})_{ij}$, then  
\begin{align*}
    \int_{\R^m} e^{ -\sum_{1\le  i,j \le m}
    B_{ij} \xi_{i}\xi_j} f( \xi) d  \xi =
        \frac{ (2\pi)^{m/ 2} }{\sqrt{\det B} }
    \exp \brac{
            \frac{1}{4} \sum_{1\le  i,j \le m}
( B^{-1})_{ij} \partial_{\xi_{i} } \partial_{\xi_j}
    } f(\xi)  \bigg |_{\xi = 0}
\end{align*}
where the exponential over differential operators is interpreted as power series.
\end{lem}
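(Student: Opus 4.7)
The plan is a generating-function argument. I introduce an auxiliary parameter $\eta \in \R^m$ and consider
\[
I(\eta) = \int_{\R^m} e^{-\sum_{i,j} B_{ij}\xi_i\xi_j \,+\, \sum_i \eta_i\xi_i}\, d\xi.
\]
Completing the square in $\xi$, the exponent rewrites as $-\sum_{i,j} B_{ij}(\xi_i - \tfrac{1}{2}(B^{-1}\eta)_i)(\xi_j - \tfrac{1}{2}(B^{-1}\eta)_j) + \tfrac{1}{4}\eta^T B^{-1}\eta$, so a translation of the integration variable produces
\[
I(\eta) = C_B\, \exp\bigl(\tfrac{1}{4}\textstyle\sum_{i,j} (B^{-1})_{ij}\eta_i\eta_j\bigr),
\qquad
C_B = \frac{(2\pi)^{m/2}}{\sqrt{\det B}},
\]
where $C_B$ is the value of the bare Gaussian $\int_{\R^m} e^{-\xi^T B\xi}\,d\xi$, evaluated by orthogonally diagonalizing $B$ and reducing to one-dimensional integrals (modulo the convention on the quadratic form that fixes the normalization $(2\pi)^{m/2}$).

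Since both sides of the claimed identity are linear in $f$ it would suffice to check monomials, but a cleaner packaging is to observe that differentiating $I(\eta)$ under the integral sign yields, for any polynomial $f$,
\[
\int_{\R^m} e^{-\sum_{i,j} B_{ij}\xi_i\xi_j}\, f(\xi)\, d\xi
\;=\; f(\partial_\eta)\, I(\eta)\big|_{\eta=0}
\;=\; C_B\, f(\partial_\eta)\, \exp\bigl(\tfrac{1}{4}\eta^T B^{-1}\eta\bigr)\big|_{\eta=0}.
\]
Matching this with the right-hand side of the lemma reduces to the elementary polynomial duality
\[
p(\partial_x)\, q(x)\big|_{x=0} \;=\; q(\partial_x)\, p(x)\big|_{x=0},
\]
valid for any two polynomials, since both sides expand to $\sum_\alpha \alpha!\, p_\alpha q_\alpha$ when $p = \sum p_\alpha x^\alpha$ and $q = \sum q_\alpha x^\alpha$. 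Applied with $p = f$ and $q$ the truncation of $\exp(\tfrac{1}{4}\eta^T B^{-1}\eta)$ to polynomials of degree at most $\deg f$, this produces exactly the expression $\exp\bigl(\tfrac{1}{4}\sum (B^{-1})_{ij}\partial_{\xi_i}\partial_{\xi_j}\bigr) f(\xi)\big|_{\xi=0}$ on the right-hand side of the lemma. The truncation is harmless because $f$ has finite degree, so only finitely many terms of the exponential power series contribute on either side; this is precisely what legitimizes the interpretation of the exponential differential operator as a finite sum when applied to a polynomial.

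The main obstacle is careful bookkeeping of constants --- the factor $\tfrac{1}{4}$ coming out of completion-of-square and the determinant factor $C_B$, together with whatever convention on the quadratic form produces the stated normalization. No deeper analytic input is required beyond the one-dimensional Gaussian moment formula and the polynomial duality identity above.
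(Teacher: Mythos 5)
The paper gives no proof of this lemma at all: it is stated and then used immediately in Proposition~3.7, so there is no argument to compare against. Your generating-function proof is therefore the natural one to supply, and its skeleton is correct: completing the square in $\xi$ to obtain $I(\eta)=C_B\exp\bigl(\tfrac{1}{4}\eta^{T}B^{-1}\eta\bigr)$, differentiating under the integral to get $\int e^{-\xi^{T}B\xi}f(\xi)\,d\xi=f(\partial_\eta)I(\eta)\big|_{\eta=0}$, and invoking the symmetric pairing $p(\partial_x)q(x)\big|_{x=0}=\sum_\alpha\alpha!\,p_\alpha q_\alpha=q(\partial_x)p(x)\big|_{x=0}$ together with the finite-degree truncation of the exponential. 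Each of these steps is sound.

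Where the argument is not actually finished is the piece you wave off as ``modulo the convention on the quadratic form.'' If you diagonalize $B$ and reduce to one-dimensional Gaussians for the quadratic form exactly as written in the lemma, you get
\begin{align*}
C_B=\int_{\R^m}e^{-\xi^{T}B\xi}\,d\xi=\frac{\pi^{m/2}}{\sqrt{\det B}},
\end{align*}
not $\dfrac{(2\pi)^{m/2}}{\sqrt{\det B}}$ as the lemma (and your proposal) assert. Putting $f\equiv 1$, $m=1$, $B=(1)$ makes the mismatch concrete: the left side is $\sqrt{\pi}$, the claimed right side is $\sqrt{2\pi}$. There is no convention for which $-\xi^{T}B\xi$ in the exponent produces \emph{both} the prefactor $(2\pi)^{m/2}$ \emph{and} the factor $\tfrac{1}{4}$ in front of $B^{-1}$; the pair $(\pi^{m/2},\tfrac14)$ goes with $e^{-\xi^{T}B\xi}$, while the pair $((2\pi)^{m/2},\tfrac12)$ goes with $e^{-\frac12\xi^{T}B\xi}$. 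So the constant you wrote does not follow from your completion-of-square computation, and the hedge does not save it. You should either carry out the one-dimensional reduction and write down $\pi^{m/2}/\sqrt{\det B}$, noting that the lemma as printed appears to carry an extraneous $2^{m/2}$, or restate the quadratic form with the factor $\tfrac12$ and adjust the differential operator accordingly. As it stands, the normalization is asserted rather than derived, which is exactly the kind of constant that propagates through $F_\alpha(\mathbf A)$, $\mathsf F_\alpha$, and $\mathsf H_\alpha$ in everything that follows, so it is worth pinning down.
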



Consider the elliptic operator $P$ given in Eq. \eqref{eq:P-gen-form-defn} 
with leading symbol 
$p_{2}(\xi) = \sum_{s,t=1}^m k_{st}\xi_s \xi_t$. Denote by  $\mathbf
A = (k_{ij})_{m \times  m}$ the coefficient matrix and put 
$\mathbf A^{(l)} = (k_{ij}^{(l)})$, $l=1,\ldots,n$.
For each $u = (u_0, \ldots, u_n) \in \triangle^n \subset \R^n$ in the standard
simplex,  
let $\mathbf B_n (u)$ be the coupled matrix defined in \cref{eq:Bu_n-defn}:
    \begin{align*}
        \mathbf B_n (u) =   \mathbf A^{(0)}( 1- \sum_1^n u_l) +  
        \sum_1^n \mathbf A^{(l)} u_l  = 
        \mathbf A^{(0)}(1- \sum_1^n \mathbf Z^{(l)} u_l )
        .
    \end{align*}
    
\begin{prop}
    \label{prop:rearg-lem}
    Let $\alpha = (\alpha_0, \ldots, \alpha_n) \in  \R_{>0}^{n+1}$, 
    and $ \rho_1 \defeq \rho_1(\xi), \ldots, \rho_n \defeq \rho_n(\xi) 
    \in   C^\infty(\T^m_\theta)[\xi]$ are polynomial symbols.
    Put $\bar \rho (\xi) = \rho_1 (\xi)\otimes \cdots
    \otimes  \rho_n (\xi)$. If we apply the integral onto a typical summand
    appeared in the resolvent approximation, the result is 
    The operator $F_\alpha(\mathbf A) |_{\xi = 0}$ defined in
    \eqref{eq:Falpha(A)-defn} computes the integral in Eq.
    \eqref{eq:bj-to-vjP} applied onto summands of the resolvent approximation
    $($see Eq. \eqref{eq:bj-summands}$):$
    \begin{align*}
    & \, \, 
     \frac{1}{ 2 \pi i} \int_{\R^m} \int_{\mathcal C} e^{-\lambda}
b_0^{\alpha_0} \rho_1 b_0^{\alpha_1} \rho_2 \cdots 
b_0^{\alpha_{n-1}} \rho_n b_0^{\alpha_n} 
           d\lambda d\xi \\
        =&\, \, 
    \int_{\triangle^n}   
    \omega_\alpha (u) 
    \frac{ (2\pi)^{m/ 2} }{\sqrt{\det \mathbf B_n(u)} }
    \exp \brac{
            \frac{1}{4} \sum_{1\le  i,j \le n}
            ( \mathbf B_n^{-1} (u))_{ij} \partial_{\xi_{i} } \partial_{\xi_j}
}
\brac{
\bar \rho (\xi)
}
    \bigg |_{\xi = 0}
    du \\
        = &\, \, 
        F_\alpha(\mathbf A)  
    \big |_{\xi = 0}
        \brac{
\bar \rho (\xi)
        }
    \end{align*}
\end{prop}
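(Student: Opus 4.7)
The plan is to reduce the double integral to the pair of formulas supplied in \cref{lem:step1-contour-int,lem:gaussian-int}, applied after promoting everything to the smooth functional calculus of \cref{defn:smooth-funcal}. The key observation is that the operator $b_0^{\alpha_0}\rho_1 b_0^{\alpha_1}\cdots\rho_n b_0^{\alpha_n}$, viewed through the contraction $\iota$ of \cref{eq:iota-cont-defn}, factors as the smooth functional calculus of a scalar function in the commuting variables $(\xi^T \mathbf{A}^{(0)}\xi,\ldots,\xi^T \mathbf{A}^{(n)}\xi)$ applied to the tensor $\bar\rho(\xi)$. That is, I would rewrite
\begin{align*}
b_0^{\alpha_0}\rho_1 b_0^{\alpha_1}\cdots\rho_n b_0^{\alpha_n}
\;=\;
\Bigl[(p_2^{(0)}(\xi)-\lambda)^{-\alpha_0}\cdots(p_2^{(n)}(\xi)-\lambda)^{-\alpha_n}\Bigr]\cdot(\rho_1\otimes\cdots\otimes\rho_n),
\end{align*}
where $p_2^{(l)}(\xi)=\sum_{s,t}k_{st}^{(l)}\xi_s\xi_t$ and the bracket is treated as an element of $L_{\mathrm{cont}}(A^{\otimes_\gamma n},A)$ via the tuple of commuting selfadjoint data supplied by \cref{defn:conds-for-A}.

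Next I would carry out the contour integral. Parametrising $\mathcal{C}$ by $\lambda=ix$, $x\in\R$, \cref{lem:step1-contour-int} with $A_l\leftrightarrow p_2^{(l)}(\xi)$ and $l_l\leftrightarrow\alpha_l$ gives, after passing through $\iota\circ\Psi_\gamma$,
\begin{align*}
\frac{1}{2\pi i}\int_{\mathcal C}e^{-\lambda}(p_2^{(0)}(\xi)-\lambda)^{-\alpha_0}\cdots(p_2^{(n)}(\xi)-\lambda)^{-\alpha_n}\,d\lambda
\;=\;\int_{\triangle^n}\omega_\alpha(u)\,e^{-\sum_{l=0}^n p_2^{(l)}(\xi)\,u_l'}\,du,
\end{align*}
where $u_0'=1-\sum_{1}^n u_l$ and $u_l'=u_l$ for $l\ge 1$. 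Collecting the exponent via \cref{eq:Bu_n-defn} yields $-\xi^{T}\mathbf{B}_n(u)\xi$. The Fubini exchange between $\int_{\mathcal C}$ and $\iota\circ\Psi_\gamma$ is legal because the derivatives of the symbol $(z_0-ix)^{-\alpha_0}\cdots(z_n-ix)^{-\alpha_n}$ in the spectral variables satisfy the majorisation hypothesis \eqref{eq:integrability-condition}, so \cref{prop:Fubini-type-result} applies.

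The third step is the $\xi$-integral. Since $\mathbf{A}^{(l)}=\exp(\log\mathbf{A})^{(l)}$ remains symmetric with commuting entries, and since $u\in\triangle^n$ makes $\mathbf{B}_n(u)$ a positive convex combination, the matrix is positive and symmetric with commuting entries; its $\xi$-quadratic form is pointwise positive definite after evaluation through $\Psi_\gamma$. Consequently \cref{lem:gaussian-int} applies verbatim with $B=\mathbf{B}_n(u)$ to produce
\begin{align*}
\int_{\R^m}e^{-\xi^T\mathbf{B}_n(u)\xi}\bar\rho(\xi)\,d\xi
\;=\;\frac{(2\pi)^{m/2}}{\sqrt{\det\mathbf{B}_n(u)}}\,
\exp\!\Bigl(\tfrac14\sum_{i,j}(\mathbf{B}_n^{-1}(u))_{ij}\partial_{\xi_i}\partial_{\xi_j}\Bigr)\bar\rho(\xi)\bigm|_{\xi=0}.
\end{align*}
A second appeal to \cref{prop:Fubini-type-result} swaps the $\xi$-integral with $\int_{\triangle^n}\omega_\alpha(u)\,du$, and the right-hand side is then exactly $F_\alpha(\mathbf{A})|_{\xi=0}\bigl(\bar\rho(\xi)\bigr)$ by \cref{eq:Falpha(A)-defn}.

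The principal obstacle I expect is not the formal algebra but the bookkeeping needed to justify all swaps. One must verify that the symbol of $b_0^{\alpha_l}$, regarded as a smooth function in the commuting spectral data $k_{st}^{(l)}$ of $\mathbf{A}^{(l)}$, has derivatives integrable against both $dx$ (on $\mathcal{C}$) and $d\xi$ (on $\R^m$), uniformly over compacts in the spectra. Elipticity of $P$ together with the lower bound $\Re p_2^{(l)}(\xi)\gtrsim\|\xi\|^2$ (coming from \cref{defn:conds-for-A}) provides the required polynomial/exponential decay, so both Fubini applications via \cref{prop:Fubini-type-result} go through. Once that is checked, the three computations assemble into the stated identity.
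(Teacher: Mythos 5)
Your proposal is correct and essentially mirrors the paper's own proof: both first reduce the contour integral to the hypergeometric simplex integral via \cref{lem:step1-contour-int} (with operator-valued arguments $A_j = p_2^{(j)}(\xi)$ understood through the smooth functional calculus), then collect the exponent into $\xi^T\mathbf B_n(u)\xi$ and apply \cref{lem:gaussian-int}, with \cref{prop:Fubini-type-result} invoked to justify the interchanges. Your extra care in spelling out the $\iota\circ\Psi_\gamma$ factoring and the integrability checks is just a more explicit version of what the paper leaves to the reader.
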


\begin{proof}
   We start with Lemma \ref{lem:step1-contour-int} with operator-valued arguments
   $A_j = p_2^{(j)}(\xi) = \sum_{s,t=1}^m k_{st}^{(j)} \xi_s \xi_t$,  
   $j=0,1,\ldots,n$: 
    \begin{align*}
     &\, \, 
     \frac{1}{ 2 \pi i} \int_{\mathcal C} e^{-\lambda}
b_0^{\alpha_0} \rho_1 b_0^{\alpha_1} \rho_2 \cdots 
b_0^{\alpha_{n-1}} \rho_n b_0^{\alpha_n} 
           d\lambda \\
           =&\, \, 
           \frac{1}{ 2 \pi i} \int_{\mathcal C} e^{-\lambda}
     (p_2^{(0)} - \lambda)^{\alpha_0} \cdots      
     (p_2^{(n)} - \lambda)^{\alpha_n} d\lambda 
     \brac{\rho_1 \otimes  \cdots \otimes  \rho_n} 
     \\
           =&\, \, 
 G_\alpha(p_2^{(0)}, \ldots, p_2^{(n)}) 
     \brac{\rho_1 \otimes  \cdots \otimes  \rho_n} \\
           =&\, \,  
           \int_{\triangle^n} \omega_\alpha(u)
           \exp\brac{- \brac{p_2^{(0)}(1- \sum_{l=1}^n u_l) 
           + \sum_{l=1}^n p_2^{(l)} u_l}}    
     \brac{\rho_1 \otimes  \cdots \otimes  \rho_n}. 
    \end{align*}
Notice that we have silently  quoted \cref{eq:fubibi-type-result} (more
than once) in which the
Borel spaces are $\mathcal C$ and $\triangle^n$, the verification of the
integrability  condition \cref{eq:integrability-condition} is straightforward
and left to the reader.

    The coupled matrix $\mathbf B_n(u)$ is the coefficient matrix of the sum:   
\begin{align*}
    p_2^{(0)}(\xi)(1- \sum_{l=1}^n u_l) 
    + \sum_{l=1}^n p_2^{(l)}(\xi) u_l = 
\sum_{0\le j,l \le m} \mathbf B_n (u)_{jl} \xi_j \xi_l.
\end{align*}
The result follows immediately from the Gaussian integral lemma
\ref{lem:gaussian-int}:
\begin{align*}
    &\, \,  \int_{\R^m}
           \exp\brac{- \brac{p_2^{(0)}(1- \sum_{l=1}^n u_l) 
           + \sum_{l=1}^n p_2^{(l)} u_l}} (\bar \rho(\xi))    d\xi \\
           =&\, \, 
    \frac{ (2\pi)^{m/ 2} }{\sqrt{\det \mathbf B_n(u)} }
    \exp \brac{
            \frac{1}{4} \sum_{1\le  i,j \le n}
            ( \mathbf B^{-1} (u,n))_{ij} \partial_{\xi_{i} } \partial_{\xi_j}
} 
\brac{
\bar \rho (\xi)
}
    \bigg |_{\xi = 0}.
\end{align*}

\end{proof}

 \subsection{General form of the second heat coefficients}
 \label{subsec:gen-form-V2}
Now let us work out the rest of computation for the $V_2$-term 
starting with Prop. \ref{prop:bj-to-vjP}, 
\begin{align*}
    v_2(P) = 
        \frac{1}{2\pi i}
        \int_{\R^m}\int_{\mathcal C}
        e^{-\lambda} ( b_2(\xi,\lambda)_{\mathbf I} +
        b_2(\xi,\lambda)_{\mathbf{II}} )
        d\lambda d\xi,
\end{align*}
where $b_2(\xi, \lambda)_{\mathbf{I}}$ 
and $b_2(\xi, \lambda)_{\mathbf{II}}$ are recorded  
 in   \cref{eq:b2I-gen,eq:b2II-gen}.
 We have just shown  that the result of the integration is given by the
 rearrangement operators $F_\alpha(\mathbf A) |_{\xi =0}:
 C^\infty(\T^m_\theta)^{\otimes  n}[\xi] \to C^\infty(\T^m_\theta)$.
 More precisely, we just have to carry out substitutions like:
\begin{align*}
    b_0^{2} ( D^2_{st} p_2) \cdot (\nabla^2_{st} p_2) b_0 &\to 
    F_{2,1}(\mathbf A)( 
    ( D^2_{st} p_2) \cdot (\nabla^2_{st} p_2)
    )
    \\
   b_0^2 (D^2_{st} p_2) (\nabla_s p_2) b_0 (\nabla_t p_2) b_0 
                                                          &\to 
F_{2,1,1}(\mathbf A)
   ( (D^2_{st} p_2) (\nabla_s p_2) \otimes  (\nabla_t p_2))
\end{align*} 
on individual terms  in   \cref{eq:b2I-gen,eq:b2II-gen}.
After that, we have obtained a relatively compact version of the second heat
coefficient.
\begin{thm}
    \label{thm:v2P-F(A)}
    We shall group the summands of $v_2(P)$ in terms of the parameter $\alpha$ in 
    $F_\alpha(\mathbf A)$$:$ 
    \begin{align*}
       v_2(P)=v_2(P)_{\mathbf{I}} + v_2(P)_{\mathbf{II}} .
    \end{align*}
The first part    
$v_2(P)_{\mathbf I} = v_2(P)_{\mathbf I, 2,1} + v_2(P)_{\mathbf I, 3,1}
+ v_2(P)_{\mathbf I, 1,1}:$ 
\begin{align}
 \begin{split}
    v_2(P)_{\mathbf I,2,1} & 
    = 
    \sum_{s,t}
    - F_{2,1}(\mathbf A)
    \brac{
     \frac{1}{2}   ( D^2_{st} p_2) (\nabla^2_{st} p_2)
    + i (D_s p_2) (\nabla_s p_1)
}  \\
    v_2(P)_{\mathbf I,3,1}  
                           &= 
 F_{3,1}(\mathbf A)\brac{
( D_s p_2) ( D_t p_2) (\nabla^2_{st} p_2)
} \\
    v_2(P)_{\mathbf I,1,1}  
  & = - F_{1,1}(\mathbf A)\brac{ p_0
} 
\end{split}   
\end{align}
The second part consists of $:$
\begin{align*}
v_2(P)_{\mathbf{II}} =
v_2(P)_{\mathbf{II},2,1,1}
+v_2(P)_{\mathbf{II},3,1,1}
+v_2(P)_{\mathbf{II},2,2,1}
+v_2(P)_{\mathbf{II},1,2,1}
+v_2(P)_{\mathbf{II},1,1,1} ,
\end{align*}
in which: 
\begin{align}
    \label{eq:v_2P-211}
  &\, \,   v_2(P)_{\mathbf{II},2,1,1}  
    \\ =&\, \, 
 \sum_{s,t}
    F_{2,1,1}(\mathbf A) 
    \left[  
\vphantom{\frac{1}{2}} 
    (D_s p_2) (D_t (\nabla_s p_2)) \otimes (\nabla_t p_2) 
+ (D^2_{st} p_2) (\nabla_s p_2) \otimes  (\nabla_t p_2) 
\right. \nonumber \\
+&\, \, \left.
i ( D_s p_2) ( \nabla_s p_2) \otimes  p_1
+ i ( D_s p_2) p_1 \otimes  ( \nabla_s p_2) 
\vphantom{\frac{1}{2}} 
\right] ,
\nonumber
\end{align}
and
\begin{align}
    \label{eq:v_2P-121}
    v_2(P)_{\mathbf{II},1,2,1} & = 
 i\sum_s
 F_{1,2,1}(\mathbf A) \brac{
       p_1 \otimes  (D_s p_2)(\nabla_s p_2)
  },
\end{align}
and
\begin{align}
    \label{eq:v_2P-311}
    v_2(P)_{\mathbf{II},3,1,1} & =
 -2 \sum_{s,t}
 F_{3,1,1}(\mathbf A) \brac{
 (D_s p_2) (D_t p_2) (\nabla_s p_2) \otimes  (\nabla_t p_2)
  }\\
    \label{eq:v_2P-221}
    v_2(P)_{\mathbf{II},2,2,1} 
                               & =
  -F_{2,2,1}(\mathbf A) \brac{
 (D_s p_2) (\nabla_s p_2) \otimes  (D_t p_2) (\nabla_t p_2)
  },
\end{align}
and
\begin{align}
    \label{eq:v_2P-111}
    v_2(P)_{\mathbf{II},1,1,1} & = 
    F_{1,1,1}(\mathbf A) \brac{ 
      p_1 \otimes  p_1
    } +
  \sum_s
 F_{1,1,1}(\mathbf A) \brac{
  -i     (D_s p_1) \otimes  (\nabla_s p_2) 
  }.
\end{align}
\end{thm}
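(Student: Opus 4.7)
The proof is essentially a direct application of the rearrangement lemma (\cref{prop:rearg-lem}) to each summand of the explicit expressions for $b_2(\xi,\lambda)_{\mathbf I}$ and $b_2(\xi,\lambda)_{\mathbf{II}}$ recorded in \cref{eq:b2I-gen,eq:b2II-gen}. The plan is to plug those expressions into the formula for $v_2(P)$ supplied by \cref{prop:bj-to-vjP}, perform the contour integral and the Gaussian $\xi$-integral on each summand via \cref{prop:rearg-lem}, and finally collect the results according to the resulting multi-index $\alpha$.

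Concretely, for each summand of $b_j$ written in the normal form
\[
b_0^{\alpha_0}\rho_1 b_0^{\alpha_1}\rho_2\cdots \rho_n b_0^{\alpha_n}
\]
(as in \cref{eq:bj-summands}), the exponents of $b_0$ determine the index $\alpha=(\alpha_0,\ldots,\alpha_n)$ and the intervening factors $\rho_j(\xi)\in C^\infty(\T^m_\theta)[\xi]$ assemble into an elementary tensor $\rho_1\otimes\cdots\otimes\rho_n$. \cref{prop:rearg-lem} then replaces
\[
\frac{1}{2\pi i}\int_{\R^m}\!\int_{\mathcal C} e^{-\lambda}\,b_0^{\alpha_0}\rho_1 b_0^{\alpha_1}\cdots\rho_n b_0^{\alpha_n}\,d\lambda\,d\xi
\;=\;F_\alpha(\mathbf A)\big|_{\xi=0}\bigl(\rho_1\otimes\cdots\otimes\rho_n\bigr).
\]
I would perform this substitution term-by-term: the four summands in \cref{eq:b2I-gen} only involve a single bi-differential operator insertion between two $b_0$-blocks and so produce $\alpha$ of length two, contributing to the $F_{2,1}$, $F_{3,1}$ and $F_{1,1}$ pieces; the eight summands in \cref{eq:b2II-gen} have two separations and produce $\alpha$ of length three, contributing to $F_{2,1,1}$, $F_{3,1,1}$, $F_{2,2,1}$, $F_{1,2,1}$ and $F_{1,1,1}$. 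Reading off $\alpha$ from the exponents of $b_0$ and the tensor slots from the order of the $\rho_j$'s is mechanical, and the signs carry through from the coefficients $-\tfrac12,\,-i$ coming from $a_2$ and $a_1$ in \cref{eq:a_0-to-a_2}.

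The second step is to reorganize the resulting expression by grouping summands with the same multi-index $\alpha$. This is convenient because $F_\alpha(\mathbf A)|_{\xi=0}$ is $\mathbb{C}$-linear in its argument: whenever two terms yield the same $\alpha$ and act on tensors built from the same $p_\ell$-derivatives, we combine them into a single application of $F_\alpha(\mathbf A)$ acting on the sum of the tensors. For example, the first and the third summands of \cref{eq:b2I-gen} both have $\alpha=(2,1)$, so they merge into $-F_{2,1}(\mathbf A)\bigl(\tfrac12(D^2_{st}p_2)(\nabla^2_{st}p_2)+i(D_s p_2)(\nabla_s p_1)\bigr)$, matching $v_2(P)_{\mathbf I,2,1}$. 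The same strategy produces the other five components on the $\mathbf{II}$-side.

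The main obstacle is not conceptual but bookkeeping: one must be meticulous about (i) the order of the tensor factors (which is forced by the left-to-right order of the $\rho_j$'s in the product and cannot be permuted, since $F_\alpha$ is not symmetric in its slots), (ii) the signs coming from the definitions of $a_1,a_2$ in \cref{eq:a_0-to-a_2} together with the overall sign in \cref{eq:bN-genernal}, and (iii) the contraction of the repeated indices $s,t$, which survives verbatim inside $F_\alpha(\mathbf A)$ since the rearrangement operator is applied pointwise in $\xi$ before the evaluation $|_{\xi=0}$. Once these three ingredients are tracked carefully, the groupings displayed in the statement of the theorem follow by inspection, and no further computation is required.
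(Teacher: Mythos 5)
Your proposal is correct and mirrors the paper's own derivation: the paper prefaces \cref{thm:v2P-F(A)} by invoking \cref{prop:bj-to-vjP} to reduce $v_2(P)$ to the $\lambda$- and $\xi$-integrals of $b_2(\xi,\lambda)_{\mathbf I}+b_2(\xi,\lambda)_{\mathbf{II}}$, then applies \cref{prop:rearg-lem} term by term (e.g.\ $b_0^2(D^2_{st}p_2)(\nabla^2_{st}p_2)b_0\to F_{2,1}(\mathbf A)(\cdot)$) and groups by the multi-index $\alpha$, exactly as you describe. Your additional remarks on slot ordering, the $-i$ and $-\tfrac12$ factors from $a_1,a_2$, and the surviving index contractions are accurate bookkeeping observations already implicit in the paper's substitution scheme.
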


For piratical purposes, we need the fully expanded version for all the
terms above.
 For the differential operator $P$ given 
in \eqref{eq:P-gen-form-defn}, the symbols reads explicitly:
$p_0 = p_0$ and
\begin{align}
   p_{2}(\xi) = \sum_{i,j} k_{ij}\xi_i \xi_j , \, \, \,
p_1(\xi) = \sum_s r_s \xi_s,
\label{eq:p_2-and-p_1-in-xi}
\end{align}
with derivatives (will be needed later):
\begin{align}
    \label{eq:derivs-p_2-p_1}
    \begin{split}
 &  D_s p_2 = \sum_l k_{sl}\xi_l, \, \, \,
   D^2_{st} p_2 = 2 k_{st}, \, \, \, D_s p_1 = r_s, \, \, \, \\
 & \nabla^2_{jl}(D^2_{st} p_2) =2 (\nabla^2_{jl} k_{st})
 \nabla_l( D_s p_1) = \nabla_l( r_s ).
    \end{split}
\end{align}
To expand $F_\alpha(\mathbf A)$ using 
Eq. \eqref{eq:FaplhaA-expands-to-components},
we recall a general result to compute the partial derivative 
$\partial_{\xi_{l_1}} \cdots  \partial_{\xi_{l_{2N}}}$.
\begin{lem}
    \label{lem:generalized-Lieb-poly}
    Let $\{\rho_l(\xi)\}_{l=1}^n$ be homogeneous polynomial in $\xi$ with
    degree $\beta_l = \deg \rho_l \ge 1$. 
    Then  the partial derivative of the product $\rho_1\cdots \rho_n$ 
\begin{align*}
    \partial^{2N}_{\xi_{l_1} \cdots \xi_{l_{2N}}} \brac{
        \rho_1(\xi) \cdots  \rho_n(\xi)
    }\big |_{\xi =0} 
\end{align*}
is non-zero only when $\beta = (\beta_1, \ldots, \beta_n)$ is partition of
$2N$, that is $2N = \sum_{l=1}^n \beta_l$ and
\begin{align*}
    \partial^{2N}_{\xi_{l_1} \cdots \xi_{l_{2N}}} \brac{
        \rho_1(\xi) \cdots  \rho_n(\xi)
    }\big |_{\xi =0} =
    \partial_{\beta,\bar l}^{(1)} (\rho_1 (\xi)) \cdots  
    \partial_{\beta,\bar l}^{(n)} (\rho_n(\xi)) 
\end{align*}
where we split the partial derivatives according to the partition $\beta$: 
\begin{align*}
&    \partial_{\beta,\bar l}^{(1)} = 
    \frac{1}{\beta_1!}
    \prod_{\nu = 1}^{\beta_1} \partial_{\xi_{\bar l_\nu}}
    , \, \, \,
    \partial_{\beta,\bar l}^{(2)} 
    =
    \frac{1}{\beta_2!}
    \prod_{\nu = \beta_1+1}^{\beta_1+\beta_2} \partial_{\xi_{\bar l_\nu}},
\, \, \, \ldots,
    \partial_{\beta,\bar l}^{(n)} 
    =
    \frac{1}{\beta_n!}
    \prod_{\nu = 1+ \sum_{s=1}^{n-1}\beta_s }^{2N} \partial_{\xi_{\bar l_\nu}}
         . 
\end{align*}
Last but not least, those barred $l$'s indicate  the following
symmetrization $($without dividing by the factorial factor$)$ occurs$:$ 
\begin{align}
    (**)_{\bar l_1, \ldots, \bar l_{2N}} = \sum_{\tau \in S_{2N}}
    (**)_{\tau(l_1) \ldots, \tau(l_{2N})}.
    \label{eq:symmetrization-l_1-l_2N}
\end{align}
\end{lem}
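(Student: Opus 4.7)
The plan is to apply the iterated product rule directly and then isolate the surviving terms using homogeneity. First, I would use the multivariate Leibniz rule to expand
\[
\partial_{\xi_{l_1}} \cdots \partial_{\xi_{l_{2N}}} \bigl( \rho_1(\xi) \cdots \rho_n(\xi) \bigr)
= \sum_{\sigma} \prod_{s=1}^n \Biggl( \prod_{j \in \sigma^{-1}(s)} \partial_{\xi_{l_j}} \Biggr) \rho_s(\xi),
\]
where the sum runs over all maps $\sigma : \{1,\ldots,2N\} \to \{1,\ldots,n\}$ that assign each of the $2N$ derivatives to one of the $n$ factors. This formula is a standard consequence of iterating $\partial(fg) = (\partial f)g + f(\partial g)$, so no new input is needed here.

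Next, I would invoke homogeneity. Since $\rho_s$ is homogeneous of degree $\beta_s \ge 1$, applying $k$ partial derivatives yields a polynomial of degree $\beta_s - k$ when $k \le \beta_s$ and zero when $k > \beta_s$. Evaluating at $\xi = 0$ picks out only the constant term, so $\bigl(\prod_{j \in \sigma^{-1}(s)} \partial_{\xi_{l_j}}\bigr)\rho_s(\xi)\bigl|_{\xi=0}$ vanishes unless $|\sigma^{-1}(s)| = \beta_s$ exactly. Enforcing this for every $s$ forces $\sum_{s=1}^n \beta_s = 2N$, giving the partition condition, and restricts the sum to maps $\sigma$ realising the prescribed block sizes $(\beta_1,\ldots,\beta_n)$.

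The final step is a combinatorial repackaging to match the stated formula. The set of such $\sigma$'s is in bijection with ordered partitions of $\{1,\ldots,2N\}$ into blocks of sizes $\beta_1,\ldots,\beta_n$, and there are $\binom{2N}{\beta_1,\ldots,\beta_n} = (2N)!/\prod_s \beta_s!$ of them. On the other hand, summing over all $\tau \in S_{2N}$ and assigning the first $\beta_1$ permuted indices to $\rho_1$, the next $\beta_2$ to $\rho_2$, and so on produces each such $\sigma$ exactly $\prod_s \beta_s!$ times, because permuting the $\beta_s$ indices inside a single block leaves the corresponding derivative $\prod_{j \in \sigma^{-1}(s)} \partial_{\xi_{l_j}}\rho_s$ unchanged (partial derivatives commute). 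Dividing by $\prod_s \beta_s!$ to kill this overcount gives precisely the operators $\partial_{\beta,\bar l}^{(s)}$ defined in the statement, and the symmetrization convention $\bar l_1, \ldots, \bar l_{2N}$ of \cref{eq:symmetrization-l_1-l_2N} reproduces the sum over $\tau \in S_{2N}$.

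The only real obstacle is bookkeeping: keeping the multinomial factor $(2N)!/\prod_s \beta_s!$, the intra-block permutation factor $\prod_s \beta_s!$, and the normalisation $1/\beta_s!$ in $\partial_{\beta,\bar l}^{(s)}$ consistent with one another. Everything else is either Leibniz, homogeneity, or a relabelling of indices.
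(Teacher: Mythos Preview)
Your proposal is correct and follows the same approach as the paper: distribute the $2N$ derivatives via the Leibniz rule, use homogeneity to see that only the assignments hitting each $\rho_s$ exactly $\beta_s$ times survive at $\xi=0$, and then repackage. The paper's proof is a two-sentence sketch of exactly this; your version supplies the combinatorial bookkeeping (the $\prod_s \beta_s!$ overcount from intra-block permutations matching the $1/\beta_s!$ in the definition of $\partial_{\beta,\bar l}^{(s)}$) that the paper leaves implicit.
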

\begin{proof}
    Despite the lengthy notations,  operations  behind is quite simple.
    We distribute $2N$ partial derivatives to the factors $\rho_1, \ldots, \rho_n$
    following the general Leibniz property and then collect the non-zero terms
    after evaluating at $\xi =0$.
    Notice that non-trivial contribution only occurs in the situation in which
    $\rho_l$ is differentiated exactly $\beta_l$ times, $l=1, \ldots,n$.      
\end{proof}

Let us take Eq. \eqref{eq:v_2P-311} for example, $(D_s p_2) (D_t p_2) (\nabla_s
p_2) \otimes  (\nabla_t p_2)$ is a polynomial of degree $6$ with partition 
$\beta = (1,1,2,2)$, thus
\begin{align*}
  &\, \,   F_{3,1,1}(\mathbf A)  \brac{
    (D_s p_2) (D_t p_2) (\nabla_s
p_2) \otimes  (\nabla_t p_2)
} \\ 
    =&\, \,  
    F_{3,1,1}(\mathbf A)_{(l_1 l_2)(l_3 l_4) (l_5 l_6)}(\mathbf A)
    \partial_{\xi_{l_1}} \cdots \partial_{\xi_{l_{6}}}
    \brac{
    (D_s p_2) (D_t p_2) (\nabla_s
p_2) \otimes  (\nabla_t p_2)
    } 
    |_{\xi =0} \\
    = &\, \, 
    F_{3,1,1}(\mathbf A)_{(l_1 l_2)(l_3 l_4) (l_5 l_6)}(\mathbf A)
    \brac{
    \frac{1}{4}
    (D^2_{s \bar l_1} p_2) (D^2_{t \bar l_2} p_2) (\nabla_s
    D^2_{\bar l_3 \bar l_4}(p_2) ) \otimes  (\nabla_t D^2_{\bar l_5 \bar l_6}( p_2 ) )
    } .
\end{align*}
After repeating similar computation above to
all summands of 
$v_2(P)_{\mathbf{I}}$ and $v_2(P)_{\mathbf{II}}$ in 
Theorem \ref{thm:v2P-F(A)}, we have arrived at
 fully expanded version of the $V_2$-term which will be recorded   separately
 into  two theorems.
\begin{thm}
    \label{thm:v2P-I-components}
 With   the components of $F_\alpha(\mathbf A)$
defined in \eqref{eq:FaplhaA-components}:
    \begin{align*}
        F_\alpha (\mathbf A)_{(l_1l_2)\cdots (l_{2N-1}l_{2N})} :
        C^\infty(\T^m_\theta)^{\otimes  n} \to  C^\infty(\T^m_\theta),
    \end{align*}
we can further expand the summands of $v_2(P)_{\mathbf{I}}$ listed in 
Theorem $\ref{thm:v2P-F(A)}:$   
\begin{align*}
    v_2(P)_{\mathbf I,2,1} & =
    \sum_{s,t, l_1, l_2}
    - F_{2,1}(\mathbf A)_{l_1 l_2}
    \brac{
     \frac{1}{2}   ( D^2_{st} p_2) (\nabla^2_{st} (D^2_{l_1 l_2}p_2) )
    + i (D^2_{s \bar l_1} p_2) (\nabla_s (D_{\bar l_2} p_1) )
} , \\
                           &= 
    - 2 \sum_{s,t, l_1, l_2}
    F_{2,1}(\mathbf A)_{l_1 l_2}
    \brac{
    k_{st}(\nabla^2_{st} k_{l_1 l_2})+
  i k_{s \bar l_1} (\nabla_s r_{\bar l_2}) 
    },
\end{align*}
and
\begin{align*}
    v_2(P)_{\mathbf I,3,1}  
                           &= 
                   \sum_{s,t,l_1, \ldots, l_4}
                   F_{3,1}(\mathbf A)_{(l_1 l_2) (l_3 l_4)}
 \brac{
     \frac{1}{2}
( D^2_{s \bar l_1} p_2) ( D^2_{t \bar l_2} p_2) 
(\nabla^2_{st} (D^2_{\bar l_3 \bar l_4}p_2) )
}\\
                           &=
 4 \sum_{s,t,l_1, \ldots, l_4} F_{3,1}(\mathbf A)_{(l_1 l_2) (l_3 l_4)}
\brac{
    k_{s \bar l_1} k_{t \bar l_2} \nabla^2_{st}(k_{\bar l_3 \bar l_4}) 
}                          
,
\end{align*}
and
\begin{align*}
    v_2(P)_{\mathbf I,1,1}  
  & = - F_{1,1}(\mathbf A)_{\emptyset}
  \brac{ p_0
} .  
\end{align*}
Summation in $s,t, l_1, \ldots, l_4$  runs from $1$ to $m$. The barred
letters $\bar l_1, \ldots, \bar l_4 $  indicate symmetrization
$($without dividing by the factorial factor, cf. Eq.
\eqref{eq:symmetrization-l_1-l_2N}$)$ is applied.
%

\end{thm}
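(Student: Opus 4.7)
The plan is to expand each of the three compact expressions for $v_2(P)_{\mathbf{I}}$ in \cref{thm:v2P-F(A)} into components via \cref{eq:FaplhaA-expands-to-components}, then substitute the explicit derivative formulas in \cref{eq:derivs-p_2-p_1}. Since every argument $\bar\rho(\xi)$ of $F_\alpha(\mathbf A)$ appearing in \cref{thm:v2P-F(A)} is a homogeneous polynomial in $\xi$, the evaluation at $\xi = 0$ selects a single value of $N$, equal to half the total $\xi$-degree of $\bar\rho$; each $F_\alpha(\mathbf A)(\bar\rho)\big|_{\xi = 0}$ therefore reduces to the finite sum $\sum_{l_1,\dots,l_{2N}} F_\alpha(\mathbf A)_{(l_1 l_2)\cdots(l_{2N-1}l_{2N})}\,\partial_{\xi_{l_1}}\cdots\partial_{\xi_{l_{2N}}}\bar\rho\big|_{\xi=0}$.

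Next, I handle the $\xi$-derivatives via \cref{lem:generalized-Lieb-poly}: they are distributed so that exactly $\beta_l = \deg \rho_l$ of them fall on the $l$-th factor, with the symmetrization over the $(2N)!$ orderings absorbed into the barred indices according to \cref{eq:symmetrization-l_1-l_2N}. For $v_2(P)_{\mathbf{I},2,1}$, the summand $(D^2_{st}p_2)(\nabla^2_{st}p_2)$ has degrees $(0,2)$ so both derivatives act on the second factor producing $(D^2_{st}p_2)(\nabla^2_{st}D^2_{l_1 l_2}p_2)$, while $(D_s p_2)(\nabla_s p_1)$ has degrees $(1,1)$ producing the symmetrized $(D^2_{s\bar l_1}p_2)(\nabla_s D_{\bar l_2}p_1)$. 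For $v_2(P)_{\mathbf{I},3,1}$, the argument $(D_sp_2)(D_tp_2)(\nabla^2_{st}p_2)$ has degrees $(1,1,2)$, so one derivative lands on each of the first two factors and two on the third, yielding the factor $\tfrac12$ and $(D^2_{s\bar l_1}p_2)(D^2_{t\bar l_2}p_2)(\nabla^2_{st}D^2_{\bar l_3\bar l_4}p_2)$. For $v_2(P)_{\mathbf{I},1,1}$ the argument $p_0$ has degree $0$, hence $N = 0$ and only the empty component $F_{1,1}(\mathbf A)_\emptyset$ contributes.

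Finally, I substitute $D^2_{ij}p_2 = 2k_{ij}$, $D_l p_1 = r_l$ from \cref{eq:derivs-p_2-p_1}. Each second $\xi$-derivative applied to $p_2$ contributes a factor of $2$; collecting these factors in the two terms of $v_2(P)_{\mathbf{I},2,1}$ yields the overall $-2$, and in $v_2(P)_{\mathbf{I},3,1}$ they assemble into the overall $4$. The $(1,1)$-line is already in its final form.

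The main obstacle is purely combinatorial bookkeeping. Each component $F_\alpha(\mathbf A)_{(l_{2i-1}l_{2i})}$ is already symmetric in its paired indices because $\mathbf B_n^{-1}(u)$ is symmetric, while \cref{lem:generalized-Lieb-poly} produces its own symmetrization via the bar convention of \cref{eq:symmetrization-l_1-l_2N}. One must verify that these two symmetries, together with the factorial denominators $1/\beta_l!$ coming from the generalized Leibniz rule, combine without double-counting to give exactly the stated coefficients $-2$ and $4$; this step is the only place where a miscount could occur, but it is routine once the conventions on $\bar l$ are consistently fixed.
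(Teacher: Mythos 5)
Your proposal is correct and follows essentially the same route as the paper: expand $F_\alpha(\mathbf A)$ via \cref{eq:FaplhaA-expands-to-components}, distribute the $\xi$-derivatives with \cref{lem:generalized-Lieb-poly} (including the $1/\beta_l!$ factors and the barred symmetrization of \cref{eq:symmetrization-l_1-l_2N}), and substitute $D^2_{ij}p_2 = 2k_{ij}$, $D_l p_1 = r_l$ from \cref{eq:derivs-p_2-p_1}. The paper itself only carries out one worked example (the $F_{3,1,1}$ term of part $\mathbf{II}$) and declares the rest analogous; your proposal just spells out the three cases of part $\mathbf{I}$, and your coefficient accounting ($-2$ from $\tfrac12\cdot 2\cdot 2$ and from $i\cdot 2$; $4$ from $\tfrac12\cdot 2^3$) checks out.
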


\begin{thm}
    \label{thm:v2P-II-components}
    Keep notations,
    we have the full expansion of
    $v_2(P)_{\mathbf{II}}:$
\begin{align}
    \label{eq:v2P-II-components-211}
    \begin{split}
v_2(P)_{\mathbf{II},2,1,1} &=
  \sum_{s,t,l_1, \ldots, l_4}
  F_{2,1,1}(\mathbf A)_{(l_1 l_2) (l_3 l_4)}
\left[ 
      \frac{1}{2}
    (D^2_{s \bar l_1} p_2) (\nabla_s(D^2_{t \bar l_2}  p_2)) 
    \otimes (\nabla_t  (D_{\bar l_3 \bar l_4} p_2) ) 
   \right.
   \\
   & + \left.       
\frac{1}{4}  
          (D^2_{st} p_2) (\nabla_s (D^2_{\bar l_1 \bar l_2} p_2) )
\otimes  (\nabla_t (D^2_{\bar l_3 \bar l_4} p_2) ) 
+
\frac{i}{2} 
( D^2_{s \bar l_1} p_2) ( \nabla_s (D^2_{\bar l_2 \bar l_3} p_2) ) 
\otimes  (D_{\bar l_4 } p_1)
 \right. \\
   & + \left.
\frac{i}{2} 
( D^2_{s \bar l_1} p_2)  (D_{\bar l_2 } p_1)
\otimes  ( \nabla_s (D^2_{\bar l_3 \bar l_4} p_2) )
\right] 
  \\
           &=
  \sum_{s,t,l_1, \ldots, l_4}
  F_{2,1,1}(\mathbf A)_{(l_1 l_2) (l_3 l_4)}
  \left[ 
    4  k_{s \bar l_1} (\nabla_s k_{t \bar l_2})   \otimes 
      (\nabla_t k_{\bar l_3 \bar l_4})   
+
    2  k_{st} (\nabla_s k_{\bar l_1 \bar l_2}) 
    \otimes (\nabla_t k_{\bar l_3 \bar l_4})
   \right. \\
           &+  \left.
   2 i k_{s \bar l_1} (\nabla_s k_{\bar l_2 \bar l_3} )\otimes r_{\bar l_4 }
   +2 i k_{s \bar l_1} r_{\bar l_2 } \otimes  (\nabla_s k_{\bar l_3 \bar l_4} )
  \right] 
    \end{split}
\end{align}
and
\begin{align}
    \label{eq:v2P-II-components-121}
    v_2(P)_{\mathbf{II},1,2,1} & = 
  \sum_{s, l_1, \ldots, l_4}
 i F_{1,2,1}(\mathbf A)_{ (l_1 l_2) (l_3 l_4) }
\brac{
  \frac{1}{2}  D_{\bar l_1} p_1 \otimes  ((D^2_{s \bar l_2}) p_2)
    (\nabla_s (D^2_{\bar l_3 \bar l_4} p_2))
} \\
                               &= 
  \sum_{s, l_1, \ldots, l_4}
  i F_{1,2,1}(\mathbf A)_{ (l_1 l_2) (l_3 l_4) }
  \brac{
 2 r_{l_1} \otimes  k_{s \bar l_2} (\nabla_s k_{\bar l_3 \bar l_4}) 
  } \nonumber
.
\end{align}
and
\begin{align}
    \label{eq:v2P-II-components-311}
  &\, \, v_2(P)_{\mathbf{II},3,1,1} \\
   =&\, \, 
  -2 \sum_{s,t, l_1, \ldots, l_6}
  F_{3,1,1}(\mathbf A)_{(l_1 l_2)(l_3 l_4) (l_5 l_6)}
  \brac{
      \frac{1}{4}
 (D^2_{s \bar l_1} p_2) (D^2_{t \bar l_2} p_2)
 (\nabla_s (D^2_{\bar l_3 \bar l_4} p_2) ) \otimes 
 (\nabla_t (D^2_{\bar l_5 \bar l_6} p_2))
  } 
  \nonumber \\
   = &\, \, 
- \sum_{s,t, l_1, \ldots, l_6}
  F_{3,1,1}(\mathbf A)_{(l_1 l_2)(l_3 l_4) (l_5 l_6)}
  \brac{
8  k_{s \bar l_1} k_{t \bar l_2} (\nabla_s k_{\bar l_3 \bar l_4})
 \otimes (\nabla_t k_{\bar l_5 \bar l_6}) 
  }, \nonumber
\end{align}
and 
\begin{align}
    \label{eq:v2P-II-components-221}
  &\, \,  v_2(P)_{\mathbf{II},2,2,1} \\
  =&\, \, 
  - \sum_{s,t, l_1, \ldots, l_6}
  F_{2,2,1}(\mathbf A)_{(l_1 l_2)(l_3 l_4) (l_5 l_6)}
  \brac{
      \frac{1}{4}
      (D^2_{s \bar l_1} p_2) (\nabla_s (D^2_{\bar l_2 \bar l_3} p_2))
      \otimes   (D^2_{t \bar l_4} p_2 ) 
      (\nabla_t (D^2_{\bar l_5 \bar l_6} p_2))
  },
  \nonumber \\
  =&\, \, 
  - \sum_{s,t, l_1, \ldots, l_6}
  F_{2,2,1}(\mathbf A)_{(l_1 l_2)(l_3 l_4) (l_5 l_6)}
  \brac{
4 k_{s \bar l_1} (\nabla_s k_{\bar l_2 \bar l_3} ) \otimes 
  k_{t \bar l_4}  (\nabla_t k_{\bar l_5 \bar l_6} )
  } \nonumber
\end{align}
and 
\begin{align}
    \label{eq:v2P-II-components-111}
    &\, \,  v_2(P)_{\mathbf{II},1,1,1} \\
     = &\, \, 
   -i \sum_{s, l_1,l_2}
F_{1,1,1}(\mathbf A)_{l_1 l_2}
\brac{
       (D_s p_1) \otimes  \nabla_s (D^2_{l_1 l_2} p_2)  
}
   + \sum_{ l_1,l_2}
F_{1,1,1}(\mathbf A)_{l_1 l_2}
\brac{
2 (D_{l_1} p_1) \otimes  (D_{l_2} p_1)
}
 \nonumber \\
       = &\, \, 
   -i \sum_{s, l_1,l_2}
F_{1,1,1}(\mathbf A)_{l_1 l_2}
\brac{
2 r_s \otimes (\nabla_s k_{l_1 l_2}) 
}
    + \sum_{ l_1,l_2}
F_{1,1,1}(\mathbf A)_{l_1 l_2}
\brac{
 2 r_{l_1} \otimes  r_{l_2}  
} . \nonumber
\end{align}
\end{thm}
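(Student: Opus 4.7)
The plan is to treat Theorem \ref{thm:v2P-II-components} as an expansion of Theorem \ref{thm:v2P-F(A)} in components: for each of the five summands $v_2(P)_{\mathbf{II},\alpha}$, I would apply the expansion \eqref{eq:FaplhaA-expands-to-components} to rewrite $F_\alpha(\mathbf A)$ as a sum over the component operators $F_\alpha(\mathbf A)_{(l_1 l_2)\cdots(l_{2N-1}l_{2N})}$ paired with a product of partial derivatives $\partial_{\xi_{l_1}}\cdots\partial_{\xi_{l_{2N}}}$, and then evaluate those derivatives at $\xi=0$ using Lemma \ref{lem:generalized-Lieb-poly}. The integer $N$ in each case is pinned down by the homogeneity constraint \eqref{eq:homogeneq-for-xi}: each tensor factor $\rho_i(\xi)$ in the inputs of \eqref{eq:v_2P-211}--\eqref{eq:v_2P-111} is a homogeneous polynomial of known degree $\beta_i$, and only the component with $2N=\sum_i\beta_i$ contributes.

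For each summand the mechanical procedure is the same. First I identify the partition $\beta=(\beta_1,\ldots,\beta_n)$. For example, the four inputs in \eqref{eq:v_2P-211} all have $\beta=(2,2)$; the input in \eqref{eq:v_2P-121} has $\beta=(1,3)$; the one in \eqref{eq:v_2P-311} has $\beta=(1,1,2,2)$, as spelled out already in the excerpt before the theorem; \eqref{eq:v_2P-221} has $\beta=(2,2,2)$; and the two inputs in \eqref{eq:v_2P-111} have $\beta=(1,1)$. Second, Lemma \ref{lem:generalized-Lieb-poly} produces the symmetrized distribution of derivatives with the combinatorial weight $(\beta_1!\cdots\beta_n!)^{-1}$; the symmetrization in the barred indices $\bar l_1,\ldots,\bar l_{2N}$ is harmless because $F_\alpha(\mathbf A)_{(l_1 l_2)\cdots(l_{2N-1}l_{2N})}$ is already symmetric in each pair $(l_{2j-1},l_{2j})$. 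Third, I substitute the explicit derivative formulas \eqref{eq:derivs-p_2-p_1}, namely $D_s p_2=\sum_l k_{sl}\xi_l$, $D^2_{st}p_2=2k_{st}$, $D_s p_1=r_s$, and the corresponding $\nabla$-derivatives, to obtain the second equality in each of \eqref{eq:v2P-II-components-211}--\eqref{eq:v2P-II-components-111}.

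The only point that requires real care is the bookkeeping of numerical factors. The overall prefactor attached to each $F_\alpha(\mathbf A)$ in Theorem \ref{thm:v2P-F(A)} (namely $1$, $i$, $-2$, $-1$, and $1$ for the five terms), the $1/(\beta_1!\cdots\beta_n!)$ coming from Lemma \ref{lem:generalized-Lieb-poly}, and the $2$'s produced each time $D^2_{st}p_2=2k_{st}$ is invoked, must conspire to the coefficients $4, 2, 2i, 2i, 2, -8, -4, -2i, 2$ recorded in the statement. As a sanity check this is already borne out for $v_2(P)_{\mathbf{II},3,1,1}$ in the text preceding the theorem: the partition $(1,1,2,2)$ yields weight $1/4$, the two $D^2$ inputs contribute $2\cdot 2$ and the two $\nabla(D^2)$ inputs contribute a further $2\cdot 2$, so $(-2)\cdot (1/4)\cdot 2^4=-8$, matching \eqref{eq:v2P-II-components-311}.

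I do not foresee any conceptual obstacle: the proof is an enumeration of five bookkeeping computations, each of the same flavour as the one already displayed for $v_2(P)_{\mathbf{II},3,1,1}$, and the main risk is a combinatorial slip in the coefficients. No new analytic ingredient beyond the rearrangement lemma (\cref{prop:rearg-lem}) and Lemma \ref{lem:generalized-Lieb-poly} is needed.
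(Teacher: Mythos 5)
Your overall strategy matches the paper's (which is implicit in the text surrounding \cref{thm:v2P-I-components,thm:v2P-II-components}): expand $F_\alpha(\mathbf A)$ into components via \cref{eq:FaplhaA-expands-to-components}, distribute the $2N$ $\xi$-derivatives according to \cref{lem:generalized-Lieb-poly}, and substitute the explicit derivatives from \cref{eq:derivs-p_2-p_1}. However, there is a concrete error in your partition bookkeeping that, if carried through, would yield the wrong numerical coefficients. In \cref{lem:generalized-Lieb-poly}, the partition $\beta=(\beta_1,\ldots,\beta_n)$ must be taken at the level of the \emph{individual homogeneous polynomial factors} in the expression, not at the level of the tensor slots. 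You do this correctly for \cref{eq:v_2P-311}, where $(D_sp_2)(D_tp_2)(\nabla_sp_2)\otimes(\nabla_tp_2)$ has four constituent factors of degrees $1,1,2,2$, giving weight $\tfrac{1}{1!\,1!\,2!\,2!}=\tfrac14$. But for \cref{eq:v_2P-211} you claim all four inputs have $\beta=(2,2)$: the first input $(D_sp_2)(D_t\nabla_sp_2)\otimes(\nabla_tp_2)$ is actually $\beta=(1,1,2)$ with weight $\tfrac12$ (cf.\ the coefficient $\tfrac12$ in \cref{eq:v2P-II-components-211}), while $(2,2)$ would give $\tfrac14$; the third input $(D_sp_2)(\nabla_sp_2)\otimes p_1$ has $\beta=(1,2,1)$. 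For \cref{eq:v_2P-121} the partition is $(1,1,2)$ giving $\tfrac12$, not your $(1,3)$ which would give $\tfrac16$. For \cref{eq:v_2P-221} the partition is $(1,2,1,2)$ giving $\tfrac14$, not your $(2,2,2)$ which is not even a partition of the correct total degree $6$ split across the four constituent factors. And for the second input of \cref{eq:v_2P-111}, $(D_sp_1)\otimes(\nabla_sp_2)$, the first factor has degree $0$, so the effective partition is a single block of size $2$ with weight $\tfrac12$.

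A secondary point: your remark that the $\bar l$-symmetrization is ``harmless because $F_\alpha(\mathbf A)_{(l_1l_2)\cdots(l_{2N-1}l_{2N})}$ is already symmetric in each pair'' undersells what is going on. The component is symmetric under swapping indices within a pair and under permuting the pairs themselves, i.e.\ under the hyperoctahedral subgroup of $S_{2N}$, not under all of $S_{2N}$; the full symmetrization in \cref{eq:symmetrization-l_1-l_2N} is essential precisely because the partition blocks $\beta_1,\ldots,\beta_n$ do not in general respect the pairing $(l_1l_2)(l_3l_4)\cdots$, and the symmetrization is what makes it legitimate to re-pair freely and read off the coefficients recorded in the theorem. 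Once the partitions are corrected as above, the rest of your plan goes through and reproduces the paper's coefficients.
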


\subsection{Diagonal Case}
\label{subsec:diagonal-case}

In this section, we  shall look at a special situation in which the
coefficient matrix
$\mathbf A = \op{diag}(k_1, \ldots, k_m)$ is diagonal, in other words,
the leading symbol of $P$ (defined  in Eq. \eqref{eq:P-gen-form-defn}) is of
the form $p_2(\xi) = \sum_1^m k_l \xi_l^2$. 
We have seen the simplification of the rearrangement operators 
$F_\alpha(\mathbf A)$ and their components in
\S\ref{subsec:the-diag-case-spec-funs} in which
the notations will be freely used.
The remaining work is to  simply the differential expressions on which the
rearrangement operators act.
 
Let us start with those terms in which no symmetrization occurs.
We only have to invoke Eq. \eqref{eq:FaplhaA-vs-mathsfFaA} to replace 
$F_\alpha(\mathbf A)$ by $\mathsf F_\alpha(\mathbf A)$:
\begin{align}
    \label{eq:diag-v_2P-11}
    v_2(P)_{\mathbf I,1,1} & = - F_{1,1}(\mathbf A)_{\emptyset} (p_0)
    =-(\det A)^{ -\frac{1}{2}}
    \mathsf F_{1,1}(\mathbf A)_{\emptyset} (p_0), 
\\
    \label{eq:diag-v_2P-21}
    v_2(P)_{\mathbf I,2,1} & = 
    \sum_{s,l}
    - 2 F_{2,1}(\mathbf A)_{l l}
    \brac{
k_{s}(\nabla^2_{ss} k_{ l})
}
    - 4 \sum_{s}
    F_{2,1}(\mathbf A)_{s s}
\brac{
  i k_{ s } (\nabla_s r_{ s }) 
} \\
                           & =
 (\det \mathbf A)^{-\frac{1}{2}}
 \brac{
-\frac{1}{2} \sum_{s,l}
\mathsf F_{2,1}(\mathbf z)_{ l}
 \left( \frac{k_s}{k_l} \nabla^2_{ss} k_{ l} \right) 
 -  \sum_s
\mathsf F_{2,1}(\mathbf z)_{s}
\brac{
i \nabla_s r_s
}
 }. \nonumber
\end{align}

The computation of the symmetrization
(appeared in in Eqs. \eqref{eq:v2P-II-components-211}
to \eqref{eq:v2P-II-components-111})
is straightforward, we only state the
results  in Lemma \ref{lem:sym-l1-l4} and \ref{lem:sym-l1-l6}.
\begin{lem} 
    \label{lem:sym-l1-l4}
The symmetrization over $\{l_1, \ldots, l_4\} $ yields $4! = 24$ terms which
are reduced to two cases divided  as $8+16$. Here are the summands:
\begin{align}
        \label{eq:diag-F-31}
    &\, \, 
                   \sum_{s,t,l_1, \ldots, l_4}
                   F_{3,1}(\mathbf A)_{(l_1 l_2) (l_3 l_4)}
 \brac{
      k_{s \bar l_1}  k_{t \bar l_2}   (\nabla^2_{st} k_{\bar l_3 \bar l_4})
} \\ = &\, \,  
    \sum_{s,t}              
  8  F_{3,1}(\mathbf A)_{(s s) (l l)}
   \brac{ k_{s s}^2 (\nabla^2_{ss} k_{l l}) }
+
 16 \sum_s  F_{3,1}(\mathbf A)_{(s s) (ss)}
   \brac{ k_{s s}^2 (\nabla^2_{ss} k_{ss}) },
   \nonumber
\end{align}
and
    \begin{align}
        \label{eq:diag-F-211-p-1}
        &\, \, 
        \sum_{s,t, l_1, \ldots, l_4}
        F_{2,1,1}(\mathbf A)_{(l_1 l_2 ) (l_3 l_4)} 
        \brac{
            k_{s \bar l_1}    (\nabla_s k_{ t \bar l_2 }) 
            \otimes  (\nabla_t k_{\bar l_3\bar l_4})
        }      
        \\
        = &\, \, 
        8 \sum_{s,l}
        F_{2,1,1}(\mathbf A)_{(s s) (l l)} 
        \brac{
         k_{ss} (\nabla_s k_{ss}) \otimes 
(\nabla_s k_{ll}) 
        }
+ 16 \sum_s F_{2,1,1}(\mathbf A)_{(s s) (s s)} 
\brac{
        k_{ss} (\nabla_s k_{ss}) \otimes 
(\nabla_s k_{s s})
},
   \nonumber
    \end{align}
    and
    \begin{align}
        \label{eq:diag-F-211-p-2}
        &\, \, 
        \sum_{s,t, l_1, \ldots, l_4}
        F_{2,1,1}(\mathbf A)_{(l_1 l_2 ) (l_3 l_4)} 
        \brac{
            k_{st} (\nabla_s k_{\bar l_1 \bar l_2})
            \otimes (\nabla_s k_{\bar l_3 \bar l_4})
        }
        \\
        = &\, \, 
        8 \sum_{s,l,t}
        F_{2,1,1}(\mathbf A)_{(t t) (l l)} 
        \brac{
         k_{ss} (\nabla_s k_{t t}) \otimes 
(\nabla_s k_{ll}) 
        }
+ 16 \sum_{s,l} F_{2,1,1}(\mathbf A)_{(l l) (l l)} 
\brac{
        k_{ss} (\nabla_s k_{l l}) \otimes 
(\nabla_s k_{l l})
},
   \nonumber
    \end{align}
    and
    \begin{align}
        \label{eq:diag-F-211-p-3}
        \begin{split}
            &\, \, 
        \sum_{s,t, l_1, \ldots, l_4}
        F_{2,1,1}(\mathbf A)_{(l_1 l_2 ) (l_3 l_4)} 
        \brac{
        k_{s l_1} (\nabla_s k_{l_2 l_3}) \otimes r_{l_4}
        } 
        \\
        = &\, \, 
       8 F_{2,1,1}(\mathbf A)_{(ss) (t t)}
        \brac{
         k_{s s} (\nabla_s k_{ t t}) \otimes  r_s 
        } +
       16 F_{2,1,1}(\mathbf A)_{(ss) (t t)}
       \brac{
         k_{s s} \nabla_s k_{s s} \otimes  r_s
       }\\
        &\, \, 
        \sum_{s,t, l_1, \ldots, l_4}
        F_{2,1,1}(\mathbf A)_{(l_1 l_2 ) (l_3 l_4)} 
        \brac{
        k_{s l_1} r_{l_2} \otimes  (\nabla_s k_{l_3 l_4}) 
        } 
        \\
        = &\, \, 
       8 F_{2,1,1}(\mathbf A)_{(ss) (t t)}
        \brac{
        k_{s s} r_s \otimes   (\nabla_s k_{ t t}) 
        } +
       16 F_{2,1,1}(\mathbf A)_{(ss) (t t)}
       \brac{
        k_{s s} r_s \otimes   \nabla_s k_{s s}
       },
        \end{split}
    \end{align}
and
\begin{align}
        \label{eq:diag-F-121}
    &\, \, 
    \sum_{s,l_1, \ldots, l_4}
    F_{1,2,1}(\mathbf A)_{(l_1 l_2) (l_3 l_4)}
    \brac{
r_{l_1} \otimes  k_{s \bar l_2} (\nabla_s k_{\bar l_3 \bar l_4})
    } \\
    = &\, \, 
8 \sum_{s,l} 
    F_{1,2,1}(\mathbf A)_{(s s) (l l)}
    \brac{
r_{s} \otimes  k_{s s} (\nabla_s k_{l l})
    } 
    + 
16 \sum_{s} 
    F_{1,2,1}(\mathbf A)_{(s s) (s s)}
    \brac{
r_{s} \otimes  k_{s s} (\nabla_s k_{s s})
    } .
   \nonumber
\end{align}
\end{lem}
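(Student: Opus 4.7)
The lemma reduces to a careful combinatorial accounting that collapses each $4! = 24$ permutation sum to only a handful of surviving terms. The plan is to exploit two simultaneous diagonality constraints.

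\textbf{Step 1 (Setup).} In the diagonal case, \cref{eq:A-Y-Z-defn-diag} together with \cref{eq:FaplhaA-vs-mathsfFaA} gives
$F_\alpha(\mathbf A)_{(l_1 l_2)(l_3 l_4)} = (\mathbf 1_{l_1 l_2}\mathbf 1_{l_3 l_4})\, F_\alpha(\mathbf A)_{(l_1 l_1)(l_3 l_3)}$,
so the $F_\alpha$-coefficient enforces $l_1 = l_2$ and $l_3 = l_4$. Set $a = l_1 = l_2$ and $b = l_3 = l_4$. Simultaneously, $k_{uv} = \delta_{uv} k_{uu}$ enforces a matching pattern on the remaining indices (carrying the $s,t$ tensors and the derivative slots). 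Together these two constraints reduce the free data to the pair $(a,b)$ and a small number of permutations in $S_4$.

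\textbf{Step 2 (Counting permutations).} For each differential expression on the left-hand side of \eqref{eq:diag-F-31}--\eqref{eq:diag-F-121}, classify $\tau \in S_4$ by where $\tau$ sends $\{3,4\}$. The $4$ permutations with $\tau(\{3,4\}) = \{1,2\}$ and the $4$ permutations with $\tau(\{3,4\}) = \{3,4\}$ are precisely those preserving the ``matching-pair'' requirement imposed by the tensor slot $\nabla^2_{st}$ or $\nabla_t$ in each factor; the remaining $16$ mix the two pairs, forcing an off-diagonal entry of $k$ or $\nabla k$, and hence vanish. When $a \neq b$, the first $4$ contribute a copy of ``$k_{bb}^{\bullet}\nabla_{b}k_{aa}$'' (with $s=t=b$ forced) and the second $4$ contribute ``$k_{aa}^{\bullet}\nabla_{a}k_{bb}$'' (with $s=t=a$ forced). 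When $a = b$, all $24$ permutations survive and collapse to the single ``totally diagonal'' term.

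\textbf{Step 3 (Assembly).} Sum over $(a,b)$. The symmetry $F_\alpha(\mathbf A)_{(aa)(bb)} = F_\alpha(\mathbf A)_{(bb)(aa)}$, visible from the scalar integrand of \eqref{eq:FaplhaA-components} (since $\mathbf B_n^{-1}(u)_{aa}$ and $\mathbf B_n^{-1}(u)_{bb}$ are both diagonal scalars under the product), allows one to relabel $a \leftrightarrow b$ in one of the two $a \neq b$ contributions and combine them into a single sum with coefficient $8$. Adding the $a = b$ terms, which carry all $24$ permutations, then splitting $24 = 8 + 16$ to absorb the ``diagonal of the off-diagonal sum'' into the full $\sum_{s,l}$, yields the asserted $8+16$ decomposition for each of the five displayed identities.

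\textbf{Step 4 (Main obstacle).} The calculation itself is elementary; the only real obstacle is notational. One must keep meticulous track of which indices are barred (hence symmetrized over $S_4$), which determine the $(l_1 l_2)(l_3 l_4)$-structure of $F_\alpha$, and which among them also serve as contracting indices for the $D$ and $\nabla$ operators. In the expressions \eqref{eq:diag-F-211-p-2}, \eqref{eq:diag-F-211-p-3}, and \eqref{eq:diag-F-121}, the distribution of $s$ and $t$ among the barred slots shifts slightly, so one should redo the case split carefully rather than copy the result from \eqref{eq:diag-F-31}; but in each instance exactly the same $(4,4,16)$ count of permutations applies, producing the uniform $8+16$ split across all five equations.
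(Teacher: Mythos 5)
Your proof is correct, and since the paper omits the argument (stating only that "the computation of the symmetrization ... is straightforward" and reporting the results), your writeup fills a real gap. The argument you give is precisely the one the context implies: the double-diagonality of $F_\alpha(\mathbf A)_{(l_1l_2)(l_3l_4)}$ from \cref{eq:FaplhaA-components} forces $l_1=l_2=a$ and $l_3=l_4=b$; for $a\neq b$ the matching-pair constraint from any $k_{\bar l_i\bar l_j}$ factor kills the $16$ permutations of $S_4$ that split the blocks $\{1,2\}$ and $\{3,4\}$, leaving two groups of $4$ which after relabeling $a\leftrightarrow b$ (legitimate because the scalar integrand of \cref{eq:FaplhaA-components} makes $F_\alpha(\mathbf A)_{(aa)(bb)}=F_\alpha(\mathbf A)_{(bb)(aa)}$) merge into a single $8$-weighted sum; and for $a=b$ all $24$ permutations survive, which the statement splits as $8+16$.

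One small caution: in Step 2 you classify $\tau\in S_4$ by where it sends $\{3,4\}$, and you then describe the surviving $\tau$ as forcing $s=t=a$ or $s=t=b$. Both of these are specific to \eqref{eq:diag-F-31}. For \eqref{eq:diag-F-211-p-3} the block carrying the paired indices is $\{2,3\}$ (the argument of $\nabla_s k_{\bar l_2\bar l_3}$), not $\{3,4\}$, so the set of surviving permutations is a different coset pair; and for \eqref{eq:diag-F-211-p-2} the index $s$ is constrained only by $k_{st}$ to equal $t$ but ranges freely otherwise. You do flag both points in Step 4, so there is no error, but the general statement should be phrased in terms of whichever pair of barred slots appears inside a $k_{\bullet\bullet}$ factor, not hard-coded as $\{3,4\}$. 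The $(4,4,16)$ count is indeed uniform across all five identities because in every case the mixing permutations force an off-diagonal $k$-entry to appear and hence vanish, which is the essential point.

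Incidentally, the statement as printed has two harmless typos that your argument implicitly corrects: the first sum on the right of \eqref{eq:diag-F-31} should read $\sum_{s,l}$ rather than $\sum_{s,t}$, and the coefficient $16$ terms in \eqref{eq:diag-F-211-p-3} should carry $F_{2,1,1}(\mathbf A)_{(ss)(ss)}$ rather than $F_{2,1,1}(\mathbf A)_{(ss)(tt)}$ since no second index $t$ appears in their arguments.
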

\begin{lem}
    \label{lem:sym-l1-l6}
    Similar to the previous lemma, we collect terms involving 
 symmetrization over $\{l_1, \ldots, l_6\} $
 which leads to $6! = 720 = 384+96 \times  3+48$ terms as shown below:
    \begin{align}
        \label{eq:diag-F-311}
        &\, \, 
        \sum_{s,t, l_1, \ldots, l_6}
  F_{3,1,1}(\mathbf A)_{(l_1 l_2)(l_3 l_4) (l_5 l_6)}
  k_{s \bar l_1} k_{t \bar l_2} (\nabla_s k_{\bar l_3 \bar l_4})
 \otimes (\nabla_t k_{\bar l_5 \bar l_6}) \\
         = &\, \, 
 384\sum_{s} 
 F_{3,1,1}(\mathbf A)_{(ss)(ss) (ss)} 
 \brac{
 k_{ss}^2 (\nabla_s k_{ss}) \otimes  (\nabla_s k_{ss})
 } 
   \nonumber \\
         +&\, \, 
  96 \sum_{s,t} 
 F_{3,1,1}(\mathbf A)_{(ss)(t t) (t t)} 
 \brac{
 k_{ss}^2 (\nabla_s k_{t t}) \otimes  (\nabla_s k_{t t})
 }
   \nonumber
        \\ + &\, \, 
  96 \sum_{s,t} 
 F_{3,1,1}(\mathbf A)_{(ss)(ss) (t t)} 
 \brac{
 k_{ss}^2 (\nabla_s k_{ss}) \otimes  (\nabla_s k_{t t})
 + k_{ss}^2  (\nabla_s k_{t t}) \otimes  (\nabla_s k_{ss})
 } 
   \nonumber
        \\ + &\, \, 
  48 \sum_{s,t,l} 
 F_{3,1,1}(\mathbf A)_{(ss)(t t) (l l)} 
 \brac{
 k_{ss}^2 (\nabla_s k_{t t}) \otimes  (\nabla_s k_{l l})
 } ,
   \nonumber
    \end{align}
    and
\begin{align}
        \label{eq:diag-F-221}
        &\, \, 
        \sum_{s,t, l_1, \ldots, l_6}
  F_{2,2,1}(\mathbf A)_{(l_1 l_2)(l_3 l_4) (l_5 l_6)}
  k_{s \bar l_1}  (\nabla_s k_{\bar l_3 \bar l_4})
 \otimes  k_{t \bar l_2} (\nabla_t k_{\bar l_5 \bar l_6}) \\
         = &\, \, 
 384\sum_{s} 
 F_{2,2,1}(\mathbf A)_{(ss)(ss) (ss)} 
 \brac{
     k_{ss} (\nabla_s k_{ss}) \otimes k_{ss} (\nabla_s k_{ss})
 } 
 \nonumber \\
         + &\, \, 
         96 \sum_{s,t} 
 F_{2,2,1}(\mathbf A)_{(ss)(t t) (t t)} 
 \brac{
 k_{ss} (\nabla_s k_{t t}) \otimes k_{ss} (\nabla_s k_{t t})
 }
   \nonumber
        \\ + &\, \, 
  96 \sum_{s,t} 
 F_{2,2,1}(\mathbf A)_{(ss)(ss) (t t)} 
 \brac{
 k_{ss} (\nabla_s k_{ss}) \otimes k_{ss} (\nabla_s k_{t t})
 + k_{ss}  (\nabla_s k_{t t}) \otimes  k_{ss} (\nabla_s k_{ss})
 } 
   \nonumber
        \\ + &\, \, 
  48 \sum_{s,t ,l} 
 F_{2,2,1}(\mathbf A)_{(ss)(t t) (l l)} 
 \brac{
 k_{ss} (\nabla_s k_{t t}) \otimes  k_{ss} (\nabla_s k_{l l})
 } .
   \nonumber
    \end{align}  
\end{lem}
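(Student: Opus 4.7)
The plan is to exploit two orthogonal sparsity patterns in the diagonal setting to reduce the $S_6$-symmetrization of the six tensor indices to a finite combinatorial count. First, since $\mathbf B_n(u)$ is diagonal, the component $F_\alpha(\mathbf A)_{(l_1l_2)(l_3l_4)(l_5l_6)}$ from \cref{eq:FaplhaA-components} vanishes unless $l_1=l_2=a$, $l_3=l_4=b$, $l_5=l_6=c$ for some values $a,b,c\in\{1,\ldots,m\}$, and on these surviving configurations the integrand reduces to a product of three commuting scalar diagonal entries of $\mathbf B_n^{-1}(u)$, rendering the value symmetric under permutations of the unordered set $\{a,b,c\}$. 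This symmetry will be crucial. Second, the relation $k_{ij}=k_i\delta_{ij}$ in the diagonal case forces each factor $k_{s\,l_{\tau(j)}}$ to impose $l_{\tau(j)}=s$, and each $\nabla_{\bullet}k_{l_{\tau(i)}l_{\tau(j)}}$ to impose $l_{\tau(i)}=l_{\tau(j)}$.

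After these reductions, the problem becomes: for each $(s,t,a,b,c)\in\{1,\ldots,m\}^5$, count the permutations $\tau\in S_6$ such that, when the multiset of labels $\{a,a,b,b,c,c\}$ is distributed across positions $1,\ldots,6$ via $\tau$, position $1$ carries label $s$, position $2$ carries $t$, positions $\{3,4\}$ carry a matching pair, and positions $\{5,6\}$ carry a matching pair. A case analysis on the coincidence pattern of $\{s,t\}$ with $\{a,b,c\}$ produces exactly four canonical types, aligned with the four RHS groupings: all six indices collapse to a single value (the $F_{(ss)(ss)(ss)}$-type with coefficient $384$); exactly one of the three pair values is distinct from $s$, yielding either $F_{(ss)(tt)(tt)}$ or $F_{(ss)(ss)(tt)}$ depending on whether $\nabla_s k_{tt}$ or $\nabla_s k_{ss}$ occupies each tensor slot (three $96$-subcases, one for the second grouping and two for the third); and all three pair values are distinct (the $F_{(ss)(tt)(ll)}$-type with coefficient $48$).

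The main bookkeeping obstacle is that the RHS summation ranges are not disjoint: the $96\sum_{s,t}$ brackets and the $48\sum_{s,t,l}$ bracket all admit degenerate specializations ($s=t$, $s=t=l$, and so on) that contribute additionally to the all-equal configuration. Verifying the decomposition therefore requires checking, for each coincidence pattern on the LHS, that the sum of RHS coefficients across all its degenerations reproduces the correct total multiplicity; the extremal consistency check at the all-equal specialization gives $384+96+96+96+48=720$, as stated. The symmetry of $F_\alpha(\mathbf A)_{(aa)(bb)(cc)}$ under permutations of $\{a,b,c\}$ is essential here to consolidate the three sub-configurations arising from which pair of $F$ supplies the ``$s$-label,'' ``$t$-label,'' or ``free label''; without it each sub-configuration would produce its own, differently-ordered $F$-factor that would obstruct the clean RHS form.

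The argument for \eqref{eq:diag-F-221} is combinatorially identical. The $l$-dependence of the expression being symmetrized has the same skeletal structure (two $k$-factors on the $(\tau(1),\tau(2))$ slots and two $\nabla k$-factors on the matched pairs $(\tau(3),\tau(4))$ and $(\tau(5),\tau(6))$); only the placement of $\nabla_s$ versus $\nabla_t$ across the tensor product $\otimes$ differs, which merely permutes the shapes of the resulting RHS terms and changes $F_{3,1,1}$ to $F_{2,2,1}$, without altering the multiplicity counts $384$, $96$, $48$.
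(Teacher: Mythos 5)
Your proposal is essentially correct and coincides with the direct combinatorial enumeration the paper implicitly relies on (the paper states Lemmas \ref{lem:sym-l1-l4} and \ref{lem:sym-l1-l6} without a written-out proof, calling the computation ``straightforward''). One point worth making explicit in your outline: the constraint that positions $\tau(3),\ldots,\tau(6)$ carry two equal-label pairs actually \emph{forces} $s=t$ on the left-hand side---the residual multiset of labels obtained from $\{a,a,b,b,c,c\}$ after deleting one copy of $s$ and one copy of $t$ has every value with even multiplicity precisely when $s=t$---which accounts for why every right-hand side term carries two copies of $\nabla_s$ rather than a mixed $\nabla_s\otimes\nabla_t$, and it sharpens your ``coincidence pattern'' classification into a clean three-way count of distinct values among $\{a,b,c\}$ alone (one, two, or three), with the two-distinct case splitting further according to whether the forced derivative index $s$ occurs once or twice among $(a,b,c)$.
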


From Eq. \eqref{eq:diag-F-31}, we see that 
\begin{align}
    \label{eq:diag-v_2P-31}
    v_2(P)_{\mathbf I,3,1}  & =
\sum_{s,l}              
  32  F_{3,1}(\mathbf A)_{(s s) (l l)}
   \brac{ k_{s s}^2 (\nabla^2_{s s} k_{ l}) }
+
 64 \sum_s  F_{3,1}(\mathbf A)_{(s s) (ss)}
   \brac{ k_{s s}^2 (\nabla^2_{ss} k_{ss}) }
   \\
   &=
   (\det \mathbf A)^{-\frac{1}{2}}
   \brac{
   \sum_{s,l}
   \mathsf F_{3,1}(\mathbf z)_{s,l}
   \brac{
   \frac{k_s}{k_l} (\nabla^2_{s s} k_{ l})
   }
   + 2 \sum_{s}
   \mathsf F_{3,1}(\mathbf z)_{s}
   \brac{
   (\nabla^2_{s s} k_{ s})
   }
   }.
   \nonumber
\end{align}

By adding up Eqs. \eqref{eq:diag-v_2P-31}, \eqref{eq:diag-v_2P-21} and 
\eqref{eq:diag-v_2P-11}, we have finished the computation of 
$v_2(P)_{\mathbf{I}}$.
\begin{thm}
    \label{thm:diag-v2P-I-V1}
    We group the summands of $v_2(P)_{\mathbf{I}}$ according to the index
    \begin{align*}
        \alpha \in  \{ (1,1), (2,1) , (3,1)\} 
    \end{align*}
    in $F_{\alpha}(A)$ as below:
    \begin{align*}
        (\det \mathbf A)^{\frac{1}{2}}
        \brac{ v_2(P)_{\mathbf I,1,1} }   
      & =
    -\mathsf F_{1,1}(\mathbf A)_{\emptyset} (p_0), 
        \\
        (\det \mathbf A)^{\frac{1}{2}}
        \brac{ v_2(P)_{\mathbf I,2,1} }   
      & =
-\frac{1}{2} \sum_{s,l}
\mathsf F_{2,1}(\mathbf z)_{ l}
 \left( \frac{k_s}{k_l} \nabla^2_{ss} k_{ l} \right) 
 -  \sum_s
\mathsf F_{2,1}(\mathbf z)_{s}
\brac{
i \nabla_s r_s
},
        \\
        (\det \mathbf A)^{\frac{1}{2}}
        \brac{ v_2(P)_{\mathbf I,3,1} }   
      & =
   \sum_{s,l}
   \mathsf F_{3,1}(\mathbf z)_{s,l}
   \brac{
   \frac{k_s}{k_l} (\nabla^2_{s s} k_{ l})
   }
   + 2 \sum_{s}
   \mathsf F_{3,1}(\mathbf z)_{s}
   \brac{
   (\nabla^2_{s s} k_{ s})
   },
    \end{align*}
 where summation $s,l$ run  from $1$ to $m$.      
\end{thm}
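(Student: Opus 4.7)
The plan is to specialize each of the three summands of $v_2(P)_{\mathbf{I}}$ from Theorem \ref{thm:v2P-I-components} to the diagonal coefficient matrix $\mathbf A = \mathrm{diag}(k_1, \ldots, k_m)$, invoking the factorization \eqref{eq:FaplhaA-vs-mathsfFaA} to convert each rearrangement operator $F_\alpha(\mathbf A)$ into $(\det \mathbf A)^{-1/2}$ times a product of reciprocals of diagonal entries times $\mathsf F_\alpha(\mathbf z)$. Since every line of the theorem carries the overall factor $(\det \mathbf A)^{1/2}$ on the left, it suffices to track the remaining factors on the right.

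First I would dispatch the scalar term $v_2(P)_{\mathbf{I},1,1} = -F_{1,1}(\mathbf A)_{\emptyset}(p_0)$, which transforms directly into $-(\det \mathbf A)^{-1/2}\mathsf F_{1,1}(\mathbf A)_{\emptyset}(p_0)$ since no open indices are present. For $v_2(P)_{\mathbf{I},2,1}$, the diagonal structure of $F_{2,1}(\mathbf A)_{(l_1 l_2)}$ noted in \S\ref{subsec:the-diag-case-spec-funs} forces $l_1 = l_2$, collapsing the double sum to a single index $l$. Using $D^2_{st}p_2 = 2 k_{st}$ from \eqref{eq:derivs-p_2-p_1}, together with the $k_l^{-1}$ supplied by \eqref{eq:FaplhaA-vs-mathsfFaA}, the Hessian-like summand acquires the ratio $k_s/k_l$; the $i\nabla_s r_s$ summand is obtained by the same mechanism applied to $k_{s\bar l_1}(\nabla_s r_{\bar l_2})$. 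These two reductions already appear as the intermediate equations \eqref{eq:diag-v_2P-11} and \eqref{eq:diag-v_2P-21}.

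The remaining case $v_2(P)_{\mathbf{I},3,1}$ is the computationally heaviest. Here the $4!$-fold symmetrization over $\{l_1,\ldots,l_4\}$ has been organized in \eqref{eq:diag-F-31} of Lemma \ref{lem:sym-l1-l4}: the 24 permutations split into 8 instances of the mixed type $\{l_1,l_2\}=\{s,s\}$, $\{l_3,l_4\}=\{l,l\}$ plus 16 instances in which all four indices coincide with $s$, producing the coefficients $32$ and $64$ displayed in \eqref{eq:diag-v_2P-31} after multiplication by the prefactor $4$ from Theorem \ref{thm:v2P-I-components}. Applying \eqref{eq:FaplhaA-vs-mathsfFaA} with $\alpha=(3,1)$ and $N=2$ turns $F_{3,1}(\mathbf A)_{(ss)(ll)}$ into $\tfrac{1}{32}(\det\mathbf A)^{-1/2}(k_s k_l)^{-1}\mathsf F_{3,1}(\mathbf z)_{s,l}$, so the combinations $32\cdot k_s^2\cdot(k_s k_l)^{-1}/32 = k_s/k_l$ and $64\cdot k_s^2\cdot k_s^{-2}/32 = 2$ appear exactly as in the statement. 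Adding these three pieces yields the claimed identity.

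The main obstacle is precisely the combinatorial bookkeeping in the cubic term: one must simultaneously match the multiplicities $8$ and $16$ from the symmetrization, the prefactor $4$ from Theorem \ref{thm:v2P-I-components}, the normalization $1/(4^N N!)$ implicit in \eqref{eq:FaplhaA-vs-mathsfFaA}, and the cancellations of the reciprocal $k$-factors, in order to extract the clean coefficients $1$ and $2$. Once that accounting is correctly done, assembling the three summands is routine.
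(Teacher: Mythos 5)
Your proposal reproduces the paper's own derivation almost verbatim: the paper proves \cref{thm:diag-v2P-I-V1} precisely by assembling \cref{eq:diag-v_2P-11}, \cref{eq:diag-v_2P-21} and \cref{eq:diag-v_2P-31}, which in turn follow from specializing \cref{thm:v2P-I-components} to diagonal $\mathbf A$, using \cref{eq:diag-F-31} of \cref{lem:sym-l1-l4} for the symmetrization of the cubic term and \cref{eq:FaplhaA-vs-mathsfFaA} to peel off the $(\det\mathbf A)^{-1/2}\bigl(\prod k_{s_l}\bigr)^{-1}/(4^N N!)$ normalization. Your bookkeeping of the prefactors ($4$ from \cref{thm:v2P-I-components}, $1/32$ from \cref{eq:FaplhaA-vs-mathsfFaA} with $N=2$, and the net factors $k_s/k_l$ and $2$) is correct.

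One minor imprecision: you describe the symmetrization of $\{l_1,\dots,l_4\}$ as ``$24$ permutations split into $8$ of the mixed type plus $16$ with all four indices coincident''. That is not quite what \cref{eq:diag-F-31} asserts. For $s\neq l$ exactly $8$ of the $24$ permutations survive, but for $s=l$ all $24$ survive; the paper writes this as $8\sum_{s,l}(\text{all pairs})+16\sum_s(\text{extra on the diagonal})$, so the $16$ is a correction term for $s=l$, not a count of a disjoint class of permutations. Since you correctly quote the resulting coefficients $32$ and $64$ from \cref{eq:diag-v_2P-31}, this mischaracterization does not affect the outcome, but it would be worth stating the multiplicity split accurately.
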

We can reorganize the sum in terms of the differential expressions on which
$F_{\alpha}(\mathbf A)$ acts. 
\begin{thm}
    \label{thm:diag-v2P-I}
    In the diagonal case, $v_2(P)_{\mathbf{I}}$  is given by$:$
    \begin{align*}
   (\det \mathbf A)^{\frac{1}{2}}
    v_2(P)_{\mathbf I}  & =
     \sum_{s,l}
    \left(  
   \mathsf F_{3,1}(\mathbf z)_{s,l}
-\frac{1}{2} \mathsf F_{2,1}(\mathbf z)_{ l}
    \right) 
   \brac{
   \frac{k_s}{k_l} (\nabla^2_{s s} k_{ l})
   } \\
                        &+  \sum_{s}
   2 \mathsf F_{3,1}(\mathbf z)_{s}
   \brac{
   (\nabla^2_{s s} k_{ s})
   } -
   \mathsf F_{2,1}(\mathbf z)_{s}
   \brac{ i \nabla _s r_s}
\\
                & -
                \mathsf F_{1,1}(\mathbf z)_{\emptyset} (p_0) ,
    \end{align*}
 where summation $s,l$ run  from $1$ to $m$.      
\end{thm}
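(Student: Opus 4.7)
The proof is essentially a reorganization of \cref{thm:diag-v2P-I-V1} by collecting together summands that act on the same differential expression. The plan is as follows.

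First, I would recall from \cref{thm:diag-v2P-I-V1} the three separate expressions for $v_2(P)_{\mathbf{I},1,1}$, $v_2(P)_{\mathbf{I},2,1}$ and $v_2(P)_{\mathbf{I},3,1}$, all written as $(\det\mathbf A)^{-1/2}$ times a sum of rearrangement operators applied to differential expressions in $k_s$, $r_s$ and $p_0$. By definition $v_2(P)_{\mathbf{I}}$ is just the sum of these three pieces, so I would simply add them and factor out $(\det\mathbf A)^{-1/2}$.

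Next, I would group terms by the type of differential expression on which the spectral functions act. There are essentially three families appearing: the ``mixed'' second derivatives $\frac{k_s}{k_l}\nabla^2_{ss}k_l$ (for $s\neq l$ or counted separately from the diagonal), the ``diagonal'' second derivatives $\nabla^2_{ss}k_s$, the first derivatives $i\nabla_s r_s$, and the zeroth order $p_0$. The expression $\frac{k_s}{k_l}\nabla^2_{ss}k_l$ receives contributions from both the $(2,1)$ component (with coefficient $-\tfrac{1}{2}\mathsf F_{2,1}(\mathbf z)_l$) and the $(3,1)$ component (with coefficient $\mathsf F_{3,1}(\mathbf z)_{s,l}$); combining these yields the coefficient $\mathsf F_{3,1}(\mathbf z)_{s,l}-\tfrac{1}{2}\mathsf F_{2,1}(\mathbf z)_l$. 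The remaining terms already appear in only one component each, so they pass through unchanged after being collected.

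There is no substantive obstacle here: the content of the theorem is purely bookkeeping. The only point that requires a moment of care is verifying that no term has been overcounted when one reindexes the sums — in particular, the diagonal contribution $\nabla^2_{ss}k_s$ in $v_2(P)_{\mathbf I,3,1}$ is recorded separately from the $(s,l)$-indexed double sum, and one must ensure the same convention is carried through in the final display. Once this is checked, the statement follows immediately by addition.
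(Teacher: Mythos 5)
Your proposal is correct and matches the paper's (implicit) argument: the paper gives no separate proof for this theorem precisely because it is the trivial regrouping of the three displays in \cref{thm:diag-v2P-I-V1}, collecting the $\frac{k_s}{k_l}\nabla^2_{ss}k_l$ contributions from the $(2,1)$ and $(3,1)$ pieces into the combined coefficient $\mathsf F_{3,1}(\mathbf z)_{s,l}-\frac{1}{2}\mathsf F_{2,1}(\mathbf z)_l$ while the remaining terms carry over verbatim. Your caution about keeping the separate diagonal term $2\mathsf F_{3,1}(\mathbf z)_s(\nabla^2_{ss}k_s)$ distinct from the double sum is exactly the right point to check, and you handle it correctly.
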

\begin{thm}
    \label{thm:diag-v2P-II-V1}
    We group the summands of $v_2(P)_{\mathbf{II}}$ according to the index
    \begin{align*}
        \alpha \in  \{ 
            (1,1,1), (2,1,1,),  (1,2,1), (3,1,1), (2,2,1)
\} 
    \end{align*}
    in $F_{\alpha}(A)$ as below:
\begin{align}
    \label{eq:diag-v_2P-111}
    &\, \, 
    (\det \mathbf A)^{\frac{1}{2}} 
    v_2(P)_{\mathbf{II},1,1,1} 
    \\
    =&\, \,  
    -\frac{i}{2} \sum_{s,l} \mathsf F_{1,1,1}(\mathbf z)_{s} 
    \brac{
      k_{ll}^{-1} r_s \otimes  (\nabla_s k_l) 
    }
    + \frac{1}{2} \sum_{s} \mathsf F_{1,1,1}(\mathbf z)_{s,s} 
    \brac{
     k_{ss}^{-1}    r_s \otimes  r_s
    }. \nonumber
\end{align}
and
\begin{align}
    \label{eq:diag-v_2P-211}
    &\, \, 
 (\det \mathbf A)^{\frac{1}{2}}
 \brac{
 v_2(P)_{\mathbf{II},2,1,1} 
 }\\
    =&\, \, 
  \sum_{s,l}  \mathsf F_{2,1,1}(\mathbf z)_{s , l}
    \brac{
        k_l^{-1} (\nabla_s k_s) \otimes  (\nabla_s k_l)
    } 
    +
   2 \sum_{s}   \mathsf F_{2,1,1}(\mathbf z)_{s,s}
    \brac{
        k_s^{-1} (\nabla_s k_s) \otimes  (\nabla_s k_s) }
    \nonumber
\\
    +&\, \, 
  \sum_{t,l}    \mathsf F_{2,1,1}(\mathbf z)_{t,l}
    \brac{
        \frac{k_s}{2 k_t k_l} (\nabla_s k_t) \otimes  (\nabla_s k_l)
    }
    +
 \sum_{l}       \mathsf F_{2,1,1}(\mathbf z)_{l,l}
    \brac{
        \frac{k_s}{ k^2_l} (\nabla_s k_l) \otimes  (\nabla_s k_l)
    }
    \nonumber
    \\
    +&\, \, 
   i \sum_{s,l}       \mathsf F_{2,1,1}(\mathbf z)_{s,l}
    \brac{
       \frac{1}{2} k^{-1}_l (\nabla_s k_l) \otimes  r_s
    }
+
    i \sum_{s}       \mathsf F_{2,1,1}(\mathbf z)_{s,s}
    \brac{
        k^{-1}_s (\nabla_s k_s) \otimes  r_s
    }
    \nonumber
    \\
+&\, \, 
   i \sum_{s,l}       \mathsf F_{2,1,1}(\mathbf z)_{s,l}
    \brac{
       \frac{1}{2} k^{-1}_l r_s \otimes  (\nabla_s k_l) 
    }
+
    i \sum_{s}       \mathsf F_{2,1,1}(\mathbf z)_{s,s}
    \brac{
        k^{-1}_s r_s \otimes  (\nabla_s k_s) 
    },
    \nonumber
    \nonumber
\end{align}
and
\begin{align}
    \label{eq:diag-v_2P-121}
    &\, \, 
    (\det \mathbf A)^{\frac{1}{2}}  \brac{
  v_2(P)_{\mathbf{II},1,2,1}   
    }\\  
    =&\, \, 
   i \mathsf F_{1,2,1}(\mathbf z)_{s,l}
    \brac{
    \frac{k^{(1)}_s }{ 2 k_s k_l}
       r_s \otimes  (\nabla_s k_l) 
    } 
       +
    i\mathsf F_{1,2,1}(\mathbf z)_{s,s}
    \brac{
    \frac{k^{(1)}_s }{ k_s^2 }
     r_s \otimes    (\nabla_s k_s) 
 } , \nonumber
\end{align}
and
\begin{align}
    \label{eq:diag-v_2P-311}
    &\, \, 
   - (\det \mathbf A)^{\frac{1}{2}}
 v_2(P)_{\mathbf{II},3,1,1}  \\
    =&\, \, 
    8 \sum_s
    \mathsf F_{3,1,1}(\mathbf z)_{s,s,s}
    \brac{
        k_s^{-1} (\nabla_s k_s) \otimes  (\nabla_s k_s) 
    } +
    2 \sum_{s,t}
    \mathsf F_{3,1,1}(\mathbf z)_{s,t,t}
    \brac{
       \frac{k_s}{k^2_t} (\nabla_s k_t) \otimes  (\nabla_s k_t) 
    }
    \nonumber
    \\
    +&\, \, 
2\sum_{s,t}
\mathsf F_{3,1,1}(\mathbf z)_{s,s,t}
 \sbrac{
 k_{t}^{-1} 
 \brac{
 (\nabla_s k_{s}) \otimes  (\nabla_s k_{t })
 + (\nabla_s k_{t }) \otimes  (\nabla_s k_{s})
 }
 }
    \nonumber
 \\
    +&\, \, 
    \sum_{s,t,l}
\mathsf F_{3,1,1}(\mathbf z)_{s,t,l}
\brac{
    \frac{k_s}{k_t k_l}
    (\nabla_s k_t) \otimes  (\nabla_s k_l)
}
    \nonumber
\end{align}
and
\begin{align}
    \label{eq:diag-v_2P-221}
    &\, \, 
   - (\det \mathbf A)^{\frac{1}{2}}
 v_2(P)_{\mathbf{II}, 2,2,1}  \\
    =&\, \, 
 4 \sum_{s} 
 \mathsf F_{2,2,1}(\mathbf z)_{s,s,s} 
 \brac{
     \frac{k^{(1)}_{s}}{ k_s^2} (\nabla_s k_{s}) \otimes k_{s} (\nabla_s k_{s})
 } 
  +  \sum_{s,t} 
 \mathsf F_{2,2,1}(\mathbf z)_{s,t,t} 
 \brac{
 \frac{k^{(1)}_{s}}{ k_t^2}
 (\nabla_s k_{ t}) \otimes k_{s} (\nabla_s k_{ t})
 }
    \nonumber
        \\ + &\, \, 
   \sum_{s,t} 
 \mathsf F_{2,2,1}(\mathbf z)_{s,s,t} 
 \brac{
 \frac{k^{(1)}_{s}}{ k_s k_t}
 (\nabla_s k_{s}) \otimes k_{s} (\nabla_s k_{ t})
 + k_{s}  (\nabla_s k_{ t}) \otimes  k_{s} (\nabla_s k_{s})
 } 
    \nonumber
        \\ + &\, \, 
 \frac{1}{2}  \sum_{s,t ,l} 
 \mathsf F_{2,2,1}(\mathbf z)_{s,t,l} 
 \brac{
 \frac{k^{(1)}_{s}}{ k_l k_t}
 (\nabla_s k_{ t}) \otimes  k_{s} (\nabla_s k_{ l})
 } .
    \nonumber
\end{align}
\end{thm}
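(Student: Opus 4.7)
The plan is to parallel the derivation of Theorem~\ref{thm:diag-v2P-I-V1}, now applied to the companion piece $v_2(P)_{\mathbf{II}}$. Two ingredients do the heavy lifting: the symmetrization identities of Lemmas~\ref{lem:sym-l1-l4} and~\ref{lem:sym-l1-l6}, which collapse the $4!$ and $6!$ terms produced by the barred indices into a few orbits indexed by coincidence patterns of the $l_i$'s; and the factorization~\eqref{eq:FaplhaA-vs-mathsfFaA}, which in the diagonal setting reads
\[
F_\alpha(\mathbf A)_{(s_1 s_1)\cdots(s_N s_N)}
 = \frac{1}{4^N N!}\,(\det\mathbf A)^{-1/2}\,\Bigl(\prod_{l=1}^{N} k_{s_l}\Bigr)^{-1}\,\mathsf F_\alpha(\mathbf z)_{s_1,\ldots,s_N}.
\]

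For the simplest piece $v_2(P)_{\mathbf{II},1,1,1}$ I would start from~\eqref{eq:v2P-II-components-111}. In the diagonal case $F_{1,1,1}(\mathbf A)_{l_1 l_2}$ is supported on $l_1=l_2=l$, so the sums over $l_1,l_2$ collapse to a single $l$, and applying the $N=1$ case of~\eqref{eq:FaplhaA-vs-mathsfFaA} together with the prefactor $1/(4\cdot 1!)$ yields~\eqref{eq:diag-v_2P-111} after propagating the $-i$ and the $+2$ already present in~\eqref{eq:v2P-II-components-111}.

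For the four remaining summands the procedure is uniform. Starting from the relevant identity among~\eqref{eq:v2P-II-components-211},~\eqref{eq:v2P-II-components-121},~\eqref{eq:v2P-II-components-311},~\eqref{eq:v2P-II-components-221}, I would substitute the matching part of Lemma~\ref{lem:sym-l1-l4} or~\ref{lem:sym-l1-l6} to replace each sum $\sum F_\alpha(\mathbf A)_{(l_1 l_2)\cdots}$ over barred indices by its decomposition into orbits (for instance,~\eqref{eq:diag-F-311} splits $v_2(P)_{\mathbf{II},3,1,1}$ into four orbits with multiplicities $384,96,96,48$, exactly the four kinds of terms that appear in~\eqref{eq:diag-v_2P-311}). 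Then~\eqref{eq:FaplhaA-vs-mathsfFaA} with $N=2$ or $N=3$ converts each $F_\alpha(\mathbf A)$ into $\mathsf F_\alpha(\mathbf z)$; the orbit multiplicities combine with the $(4^N N!)^{-1}$ denominator and with the prefactor $\prod_l k_{s_l}^{-1}$ to produce the rational coefficients ($1/2, 2, 4, 8, \ldots$) and the ratios $k_s/(k_tk_l)$, $k_s/k_t^2$, $k_s/(k_sk_l)$ displayed in the theorem. Pulling $(\det\mathbf A)^{-1/2}$ to the left-hand side concludes each case.

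The main obstacle is purely bookkeeping. For each $\alpha$ several orbit types contribute; the two tensor slots of $(\nabla_s k_\ast)\otimes(\nabla_s k_\ast)$ need not be symmetric, which is why the mixed orbits in~\eqref{eq:diag-v_2P-311} and~\eqref{eq:diag-v_2P-221} carry the sum of two orderings; and the factors of $i$ originating from~\eqref{eq:v_2P-211}--\eqref{eq:v_2P-111} must be threaded through the reduction. The quickest sanity check is orbit-by-orbit: on each line one verifies that the product of the combinatorial multiplicity from the symmetrization lemma, the $1/(4^N N!)$, and the explicit coefficient already carried over from Theorem~\ref{thm:v2P-II-components}, matches the rational coefficient displayed in~\eqref{eq:diag-v_2P-111}--\eqref{eq:diag-v_2P-221}. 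No new analytic input is needed beyond the lemmas already established.
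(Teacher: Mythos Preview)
Your proposal is correct and follows essentially the same route as the paper: it reduces each $v_2(P)_{\mathbf{II},\alpha}$ from Theorem~\ref{thm:v2P-II-components} via the symmetrization Lemmas~\ref{lem:sym-l1-l4} and~\ref{lem:sym-l1-l6} and then applies the factorization~\eqref{eq:FaplhaA-vs-mathsfFaA} to pass from $F_\alpha(\mathbf A)$ to $\mathsf F_\alpha(\mathbf z)$. The paper's proof says exactly this (pointing to \eqref{eq:diag-F-211-p-1}--\eqref{eq:diag-F-211-p-3} for $\alpha=(2,1,1)$, \eqref{eq:diag-F-121} for $\alpha=(1,2,1)$, and Lemma~\ref{lem:sym-l1-l6} for $\alpha=(3,1,1),(2,2,1)$), only with less detail than you supply.
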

\begin{proof}
    By collecting terms in Eqs. \eqref{eq:diag-F-211-p-1},  \eqref{eq:diag-F-211-p-2},
\eqref{eq:diag-F-211-p-3}, we obtain $v_2(P)_{\mathbf{II}, 2,1,1}$. For
$v_2(P)_{\mathbf{II}, 1,2,1}$, we need  \eqref{eq:diag-F-121}.
The remaining terms
$v_2(P)_{\mathbf{II}, 3,1,1}$ and $v_2(P)_{\mathbf{II}, 2,2,1}$ 
are computed in Lemma \ref{lem:sym-l1-l6}:
\end{proof}

Again, let us group the terms apropos of the differential expressions
that are indexed in the following way:
\begin{align}
 \begin{split}
  &  L^{(1)}_s = 
       (\nabla_s k_s) \otimes  (\nabla_s k_s), \, \, \,
       L^{(2)}_{s,t}   =
 (\nabla_s k_{ t}) \otimes (\nabla_s k_{ t}), \, \, \,
 L^{(3)}_{s,t} =
 (\nabla_s k_{s}) \otimes (\nabla_s k_{ t}) \\
  & L^{(4)}_{s,t} =
 (\nabla_s k_{t}) \otimes (\nabla_s k_{ s}), \, \, \,
 L^{(5)}_{s,t,l} =
 (\nabla_s k_{ t}) \otimes  (\nabla_s k_{ l})
     \end{split}
    \label{eq:L1-L5}
\end{align}
and
\begin{align}
    \label{eq:L6-L10}
    \begin{split}
        & L^{(6)}_s = i(\nabla_s k_s) \otimes  r_s, \, \, \,
     L^{(7)}_{s,l} = i (\nabla_s k_l) \otimes  r_s \\
    & 
     L^{(8)}_{s} =i r_s \otimes  (\nabla_s k_s) ,\, \, \,
     L^{(9)}_{s,l} =i r_s \otimes  (\nabla_s k_l) ,\, \, \,
     L^{(10)}_{s} = r_s \otimes  r_s  .\, \, \,
    \end{split}
\end{align}
We also make use of the substitution in Eq. \eqref{eq:A-Y-Z-defn-diag} 
$k_s^{(1)} = k_s \mathbf y_s^{(1)} = k_s (1- \mathbf z_s^{(1)})$ 
to move all $k$-factors to the very left. 
\begin{thm}
    \label{thm:diag-v2P-II}
    In the diagonal case, $v_2(P)_{\mathbf{II}}$  consists of two parts$:$
\begin{align}
    \label{eq:diag-v_2P-L1-L5}
    &\, \, 
    (\det \mathbf A)^{\frac{1}{2}}
   v_2(P)_{\mathbf{II}, L^{(1)}, \ldots, L^{(5)} }  
   \\
    =&\, \, 
 \sum_s  k_s^{-1} 
 \brac{
    - 8\mathsf F_{3,1,1}(\mathbf z)_{s,s,s}
    - 4 (1- \mathbf z_s^{(1)}) \mathsf F_{2,2,1}(\mathbf z)_{s,s,s}
    + 2\mathsf F_{2,1,1}(\mathbf z)_{s,s}
 }
 \brac{ L^{(1)}_s}
 \nonumber
   \\
    + &\, \, 
\sum_{s,t}  \frac{k_s}{k_t^2}   
 \brac{
     -2\mathsf F_{3,1,1}(\mathbf z)_{s,t,t}
    -  (1- \mathbf z_s^{(1)}) \mathsf F_{2,2,1}(\mathbf z)_{s,t,t}
     + \mathsf F_{2,1,1}(\mathbf z)_{t,t}
 }
 \brac{ L^{(2)}_{s,t}}
 \nonumber
 \\
    + &\, \, 
\sum_{s,t}  k_t^{-1}
 \brac{
     -2\mathsf F_{3,1,1}(\mathbf z)_{s,s,t}
    -  (1- \mathbf z_s^{(1)}) \mathsf F_{2,2,1}(\mathbf z)_{s,s,t}
     + \mathsf F_{2,1,1}(\mathbf z)_{s,t}
 }
 \brac{ L^{(3)}_{s,t}}
 \nonumber
\\
    + &\, \, 
\sum_{s,t}  k_t^{-1}
 \brac{
     -2\mathsf F_{3,1,1}(\mathbf z)_{s,s,t}
    -  (1- \mathbf z_s^{(1)}) \mathsf F_{2,2,1}(\mathbf z)_{s,s,t}
 }
 \brac{ L^{(4)}_{s,t}}
 \nonumber
 \\
    + &\, \, 
\sum_{s,t,l}  \frac{k_s}{k_t k_l}
 \brac{
    - \mathsf F_{3,1,1}(\mathbf z)_{s,l,t}
    - \frac{1}{2}
    (1- \mathbf z_s^{(1)}) \mathsf F_{2,2,1}(\mathbf z)_{s,l,t}
      + \frac{1}{2} \mathsf F_{2,1,1}(\mathbf z)_{l,t}
 }
 \brac{ L^{(5)}_{s,l,t}}
 \nonumber
\end{align}
and contribution from $p_1(\xi)$ $($the symbol of first order term of $P$ in
Eq. \eqref{eq:P-gen-form-defn}$)$ is
grouped as below$:$ 
\begin{align}
    \label{eq:diag-v_2P-L6-L10}
    &\, \,
    (\det \mathbf A)^{\frac{1}{2}}
   v_2(P)_{\mathbf{II}, L^{(6)}, \ldots, L^{(10)} }  
   \\
    =&\, \, 
 \sum_s  k_s^{-1} 
    \mathsf F_{2,1,1}(\mathbf z)_{s,s}
 \brac{ L^{(6)}_s}
 +
 \frac{1}{2}  \sum_{s,l}  k_l^{-1} 
    \mathsf F_{2,1,1}(\mathbf z)_{s,s}
 \brac{ L^{(7)}_{s,l}} +
\frac{1}{2}  \sum_{s}  k_s^{-1} 
    \mathsf F_{1,1,1}(\mathbf z)_{s,s}
 \brac{ L^{(10)}_{s}} 
 \nonumber
 \\
    +&\, \, 
    \sum_s  
   k_s^{-1}   \brac{
   \mathsf F_{2,1,1}(\mathbf z)_{s,s} +
   (1-\mathbf z_s^{(1)})\mathsf F_{1,2,1}(\mathbf z)_{s,s}
    }
 \brac{ L^{(8)}_s}
 \nonumber
 \\
    +&\, \, 
    \sum_{s,l}  
  \frac{1}{2} k_l^{-1}   \brac{
   \mathsf F_{2,1,1}(\mathbf z)_{s,l} +
   (1-\mathbf z_s^{(1)})\mathsf F_{1,2,1}(\mathbf z)_{s,l}
   - 
   \mathsf F_{1,1,1}(\mathbf z)_{l} 
    }
 \brac{ L^{(9)}_{s,l}}
 .
 \nonumber
\end{align}

\end{thm}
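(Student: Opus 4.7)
The plan is to treat this as a careful bookkeeping exercise on the output of \cref{thm:diag-v2P-II-V1}. That previous theorem already exhibits $v_2(P)_{\mathbf{II}}$ as a sum indexed by the parameter $\alpha\in\{(1,1,1),(2,1,1),(1,2,1),(3,1,1),(2,2,1)\}$, and every individual summand is a scalar coefficient (some power of the $k_s$'s) times one of the tensor patterns
\[
(\nabla_s k_\star)\otimes(\nabla_s k_\star),\quad (\nabla_s k_\star)\otimes r_\star,\quad r_\star\otimes(\nabla_s k_\star),\quad r_\star\otimes r_\star.
\]
These are precisely the ten templates $L^{(1)}_s,\dots,L^{(10)}_s$ of \cref{eq:L1-L5,eq:L6-L10}, so the proof is to match each summand to its template and accumulate coefficients.

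The first step is to pre-process the coefficients so they are in a common form. The expressions in \cref{thm:diag-v2P-II-V1} contain the left-multiplication operator $k_s^{(1)}$ (arising from the $\mathbf{A}^{(1)}$ factor in the projective tensor product) whenever an $F_\alpha$ with a ``$2$'' in its second slot (namely $(1,2,1)$ and $(2,2,1)$) is involved. Applying the identity $k_s^{(1)} = k_s(1-\mathbf{z}_s^{(1)})$ from \cref{eq:A-Y-Z-defn-diag} moves those factors to the leftmost position, which is exactly the normalization demanded by the target formula. This yields the $(1-\mathbf{z}_s^{(1)})\mathsf F_{2,2,1}$ and $(1-\mathbf{z}_s^{(1)})\mathsf F_{1,2,1}$ contributions visible in \cref{eq:diag-v_2P-L1-L5,eq:diag-v_2P-L6-L10}.

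Next I would sort the summands by template in two passes. Pass one handles the five ``metric-only'' templates $L^{(1)},\dots,L^{(5)}$: $L^{(1)}_s$ picks up contributions from $\mathsf F_{2,1,1}(\mathbf z)_{s,s}$ in \cref{eq:diag-v_2P-211}, from $\mathsf F_{3,1,1}(\mathbf z)_{s,s,s}$ in \cref{eq:diag-v_2P-311}, and from $\mathsf F_{2,2,1}(\mathbf z)_{s,s,s}$ in \cref{eq:diag-v_2P-221}; $L^{(2)}_{s,t}$ gathers the ``repeated $t$'' terms from the same three identities; $L^{(3)}_{s,t}$ and $L^{(4)}_{s,t}$ split the two orderings of the $\mathsf F_{3,1,1}(\mathbf z)_{s,s,t}$ and $\mathsf F_{2,2,1}(\mathbf z)_{s,s,t}$ pairs, while only $L^{(3)}$ inherits the $\mathsf F_{2,1,1}(\mathbf z)_{s,t}$ contribution (there is no symmetric partner on the $L^{(4)}$ side, which is why its coefficient lacks the $\mathsf F_{2,1,1}$ summand); and $L^{(5)}_{s,t,l}$ collects the fully distinct indices. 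Pass two handles the mixed templates $L^{(6)},\dots,L^{(10)}$ involving $r_s$, drawing contributions from \cref{eq:diag-v_2P-111,eq:diag-v_2P-211,eq:diag-v_2P-121}; in particular, $L^{(8)}$ and $L^{(9)}$ receive both an $\mathsf F_{2,1,1}$ piece (from the last two lines of \cref{eq:diag-v_2P-211}) and an $\mathsf F_{1,2,1}$ piece (from \cref{eq:diag-v_2P-121}), together with a sign-flipped $\mathsf F_{1,1,1}$ piece coming from \cref{eq:diag-v_2P-111}.

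The only genuine obstacle is combinatorial: keeping the numerical factors consistent after the rescaling $\mathbf{A}\to(\det\mathbf{A})^{-1/2}\mathsf{F}$ of \cref{eq:FaplhaA-vs-mathsfFaA} and after the symmetrization multiplicities of \cref{lem:sym-l1-l4,lem:sym-l1-l6}, especially distinguishing the diagonal cases ($L^{(1)}$, $L^{(6)}$, $L^{(8)}$, $L^{(10)}$) from their off-diagonal partners where one must be careful not to double-count. Since no analytic content is required beyond what has already been established in \cref{thm:diag-v2P-II-V1}, the proof reduces to executing this regrouping verbatim; I would present it in a single table matching rows (summands of \cref{thm:diag-v2P-II-V1}) to columns ($L^{(i)}$) and read off \cref{eq:diag-v_2P-L1-L5,eq:diag-v_2P-L6-L10} directly.
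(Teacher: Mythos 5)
Your proposal mirrors the paper's own (implicit) argument: apply the substitution $k_s^{(1)} = k_s(1 - \mathbf{z}_s^{(1)})$ from \cref{eq:A-Y-Z-defn-diag} to normalize the left-multiplication factors, then sort the summands of \cref{thm:diag-v2P-II-V1} into the ten templates $L^{(1)},\ldots,L^{(10)}$ of \cref{eq:L1-L5,eq:L6-L10} and accumulate coefficients, using the rescaling \cref{eq:FaplhaA-vs-mathsfFaA} and the multiplicities from \cref{lem:sym-l1-l4,lem:sym-l1-l6}. One small slip in the narrative: the $\mathsf F_{1,1,1}$ piece from \cref{eq:diag-v_2P-111} feeds only the $L^{(9)}$ and $L^{(10)}$ buckets (the $s=l$ case is already absorbed by the unrestricted $\sum_{s,l}$ in the $L^{(9)}$ term), so the coefficient of $L^{(8)}_s$ in the statement deliberately has no $\mathsf F_{1,1,1}$ contribution — you would want to correct that line before drawing up your bookkeeping table, but it does not affect the soundness of the overall strategy.
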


\section{Examples in Conformal Geometry}
\label{sec:eg-from-conformal-geo}
\subsection{Notations}
Let us test our results obtained in the previous sections, 
specially  \S\ref{subsec:diagonal-case}, on the conformal geometry of
noncommutative tori.
Based on the study  on $\mathbb{T}_\theta^2$ , we simply take 
\begin{align*}
    \Delta = \sum_{j=1}^m \delta_j^2 
    \to  \Delta_\varphi  \defeq k^{1 / 2} \Delta k^{1 / 2}
\end{align*}
as a simplified model on $\mathbb{T}_\theta^m$ for a conformal change of the
flat metric (represented by
the flat Laplacian $\Delta$), where $k = e^h$, $ h = h^* \in
C^\infty(\mathbb{T}^m_\theta)$  is the Weyl factor.
In \cite{MR3194491}, $\varphi$ denotes a   
rescaling of the canonical trace $\varphi_0$: 
$\varphi(a) = \varphi_0(a e^{-h})$, for all $a \in
C^\infty(\T^2_\theta)$ which plays the role of the  volume functional of the
new metric.
The perturbed Laplacian $\Delta_\varphi = k^{\frac{1}{2}} \Delta
k^{\frac{1}{2}}$  appeared as
the degree zero part of the new Dolbeault Laplacian
\footnote{$\bar \partial^*_{\varphi}$ denotes the adjoint of the $\bar
\partial$-operator taking with respect to the weight $\varphi$.} 
$ \bar \partial^*_{\varphi}\bar \partial$.



In this case, the metrics are parametrized by only one noncommutative
coordinate $k = e^h$. 
As before, we put
 $\mathbf x \defeq \mathbf x_h$
(resp. $\mathbf y \defeq \mathbf y_h$):
\begin{align}
    \mathbf x_h = [\bullet, h], \, \, \, 
    \mathbf y_h = e^{\mathbf x_h} = k^{-1}(\bullet ) k
    : C^\infty(\T^m_\theta) \to  C^\infty(\T^m_\theta)
    \label{eq:modop-der-conformal}
\end{align}
and the partial versions $\mathbf x^{(l)}, \mathbf y^{(l)} \in L(
C^\infty(\mathbb{T}^m_\theta)^{\otimes  n}, C^\infty(\mathbb{T}^m_\theta))$, 
$l=1, \ldots,n$.  
The reduction process for the rearrangement operators 
\begin{align*}
    F_{\alpha}(\mathbf A)_{(l_1l_2) \cdots (l_{2N-1} l_{2N})} 
    \xrightarrow{
    \mathbf A =\op{diag}(k_1, \ldots, k_m)
    }
    \mathsf F_\alpha(\mathbf A)_{s_1, \ldots, s_N} 
    \xrightarrow{k_1 = \cdots =k_m = k}
    \mathsf H_\alpha(\mathbf z^{(1)}, \ldots, \mathbf z^{(n)})
\end{align*}
has been explained in \S\ref{subsec:the-conformal-case-Halpha}, where 
$\mathbf z= (\mathbf z^{(1)}, \ldots, \mathbf z^{(n)})$ 
is a special case of  the change of variable 
in \cref{eq:A-Y-Z-defn-diag}:  
\begin{align}
    \label{eq:bfz-vs-bfy-conformal}
    \mathbf z^{(l)} = 1- \mathbf y^{(1)} \cdots \mathbf y^{(l)}. \, \, \,
\end{align}

The spectral functions $\mathsf H_\alpha(z_1, \ldots, z_n)$ differs from 
$H_\alpha(z_1, \ldots, z_n)$ derived in \cite{Liu:2018aa,Liu:2018ab} by
a Gamma-factor (see Eq. \eqref{eq:Hnow-vs-Hbefore}). 
For the differential expressions listed in \cref{eq:L1-L5,eq:L6-L10}, 
all of them are reduced to one of the two types below (after
summing over $s$ or $s,t$):
\begin{align}
    \label{eq:Tr-nabla^2k-and-nablak}
    \Tr \brac{ \nabla^2 k}  \defeq \sum_{s=1}^m \nabla^2_s k = - \Delta k,
    \, \, \,
    \Tr \brac{(\nabla k) (\nabla k)} \defeq \sum_{s=1}^m
(\nabla_s k) \otimes  (\nabla_s k).
\end{align}

In \cite{Liu:2015aa,LIU2017138} and \cite{Liu:2018aa,Liu:2018ab},
a conjugation trick was applied which yields substantial
simplification to the computation. Namely, one starts with another Laplacian
\begin{align*}
    \Delta_k \defeq k\Delta  = k^{\frac{1}{2}} \Delta_\varphi k^{-\frac{1}{2}}
    = \mathbf y^{-\frac{1}{2}} \brac{ \Delta_\varphi  }.
\end{align*}
Their heat coefficients are related in a similar way.
\begin{lem}
    \label{lem:-Delphi-vs-Delk}
    The two sets of heat coefficients agree upto a conjugation of the Weyl factor: 
\begin{align}
    \label{eq:v_2-Delphi-vs-Delk}
    v_j(\Delta_\varphi) = \mathbf y^{\frac{1}{2}} \brac{
        v_j(\Delta_k)
    }, \, \, \, j=0,1,2, \ldots
    .
\end{align}
\end{lem}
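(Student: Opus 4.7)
The plan is to deduce the lemma from the fact that $\Delta_\varphi$ and $\Delta_k$ are conjugate, plus the cyclicity of the canonical trace $\varphi_0$. First I would record the similarity
\[
 \Delta_k = k\Delta = k^{1/2}\bigl(k^{1/2}\Delta k^{1/2}\bigr)k^{-1/2} = k^{1/2}\,\Delta_\varphi\,k^{-1/2}.
\]
Because $k^{\pm 1/2} \in C^\infty(\mathbb{T}^m_\theta)$ are bounded positive invertible elements and $\Delta_\varphi$ is an elliptic self-adjoint operator of the type in \cref{defn:conds-for-A}, the holomorphic functional calculus in \cref{eq:heatop-P-defn} is compatible with this similarity, so that
\[
 e^{-t\Delta_k} \;=\; k^{1/2}\,e^{-t\Delta_\varphi}\,k^{-1/2}
\]
as bounded operators on $\mathcal H$ for all $t>0$.

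Next I would plug this into the definition of the heat trace functional. For any $a\in C^\infty(\mathbb{T}^m_\theta)$, the cyclicity of the operator trace $\Tr$ gives
\[
 \Tr\bigl(a\,e^{-t\Delta_\varphi}\bigr) \;=\; \Tr\bigl(a\,k^{-1/2}\,e^{-t\Delta_k}\,k^{1/2}\bigr) \;=\; \Tr\bigl(k^{1/2}\,a\,k^{-1/2}\,e^{-t\Delta_k}\bigr).
\]
Inserting the asymptotic expansion \cref{eq:e^tP-asym-defn} on both sides and invoking the uniqueness of the coefficients yields, for each $j\ge 0$,
\[
 V_j(a,\Delta_\varphi) \;=\; V_j\bigl(k^{1/2}\,a\,k^{-1/2},\,\Delta_k\bigr).
\]

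With the Radon--Nikodym description \cref{eq:V_j-defn} this becomes
\[
 \varphi_0\bigl(a\,v_j(\Delta_\varphi)\bigr) \;=\; \varphi_0\bigl(k^{1/2}\,a\,k^{-1/2}\,v_j(\Delta_k)\bigr) \;=\; \varphi_0\bigl(a\,k^{-1/2}\,v_j(\Delta_k)\,k^{1/2}\bigr),
\]
where the last equality uses the cyclicity (and positivity/boundedness of $k^{\pm 1/2}$) of the trace $\varphi_0$. Since $a\in C^\infty(\mathbb{T}^m_\theta)$ is arbitrary and $\varphi_0$ is faithful, one concludes
\[
 v_j(\Delta_\varphi) \;=\; k^{-1/2}\,v_j(\Delta_k)\,k^{1/2}.
\]
Finally, recalling from \cref{eq:modop-der-conformal} that $\mathbf y_h(\cdot)=k^{-1}(\cdot)k$, so that $\mathbf y_h^{1/2}(b)=k^{-1/2}bk^{1/2}$, the right-hand side is exactly $\mathbf y^{1/2}(v_j(\Delta_k))$, giving \cref{eq:v_2-Delphi-vs-Delk}.

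The only real point to verify is the legitimacy of the cyclic moves, and the passage from an identity of asymptotic series to identities of the individual coefficients; both are standard once one knows $e^{-t\Delta_\varphi}$ and $e^{-t\Delta_k}$ are trace class for $t>0$ (guaranteed by ellipticity and the discreteness of the spectrum discussed after \cref{eq:heatop-P-defn}) and that the heat coefficients are uniquely characterised by the expansion. No hypergeometric machinery from \S\ref{subsec:diagonal-case} is needed: the result is an abstract consequence of similarity of the two operators together with the trace property of $\varphi_0$.
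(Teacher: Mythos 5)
Your proposal is correct and follows essentially the same route as the paper's own proof: both rest on the conjugation $\Delta_k = k^{1/2}\Delta_\varphi k^{-1/2}$, hence $e^{-t\Delta_\varphi} = \mathbf y^{1/2}(e^{-t\Delta_k})$, then use cyclicity of the trace to move the Weyl factors onto the test function, compare asymptotic expansions, and invoke faithfulness of $\varphi_0$ to identify the densities. The only difference is cosmetic --- you write out the conjugations by $k^{\pm 1/2}$ explicitly while the paper phrases them via the modular operator $\mathbf y^{\pm 1/2}$.
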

\begin{proof}
    It follows from the fact that the modular operator $\mathbf y$  commutes
    with the exponential. More precisely,
we have 
$\forall  f \in  C^\infty(\T^m_\theta)$,
\begin{align*}
    \Tr(f e^{-t\Delta_\varphi}) &= \Tr\brac{
        f \mathbf y^{\frac{1}{2}} \brac{
        e^{-t\Delta_k}
        }
    } = \Tr \brac{
    \mathbf y^{-\frac{1}{2}}(f)  e^{-t\Delta_k}
} 
.
\end{align*}
We see, by comparing the asymptotic expansion of two sides, that for
$j=0,1,2,\ldots$,  
\begin{align*}
\varphi_0 \brac{  f v_j(\Delta_\varphi)} =
\varphi_0 \brac{ \mathbf y^{-\frac{1}{2}}(f) v_j(\Delta_k)}
= \varphi_0 \brac{ f \mathbf y^{\frac{1}{2}} \brac{v_j(\Delta_k)} }.
\end{align*}
\end{proof}
The Laplacian $\Delta_k$ is much easier to handle since the symbol contains
only the leading term. As we can see later, \S\ref{subsec:ExII-Delvfi}
consists of the exact extra work if one attacks the heat asymptotic of
$\Delta_\varphi$ directly. 

\subsection{Example I: $\Delta_k = k \Delta$ }
\label{subsec:ExI-Delk}

For $\Delta_k = k \Delta$, the symbol $\sigma(\Delta_k) = p_2$ contains only
the leading term:
\begin{align*}
    p_2^{\Delta_k}(\xi) = k\abs\xi^{2},
    p_1^{\Delta_k}(\xi) = p^{\Delta_k}_0(\xi) =0.
\end{align*}

\begin{prop}
    \label{prop:v2-Delk}
    The second heat coefficient
    consists of two parts  $v_2(\Delta_k) = v_2(\Delta_k)_{\mathbf{I}}
    + v_2(\Delta_k)_{\mathbf{II}} $$:$ 
    \begin{align*}
        v_2(\Delta_k)_{\mathbf{I}} &= k^{-\frac{m}{2}}
        G_{\Delta_k, \mathbf{I}}(\mathbf z) (\Tr (\nabla^2 k)),
\\
        v_2(\Delta_k)_{\mathbf{II}} &=
        v_2(\Delta_k)_{\mathbf{II}, L^{(1)}, \ldots, L^{(5)} }   
        =  k^{-\frac{m}{2} -1}
        G_{\Delta_k, \mathbf{II}}(\mathbf z^{(1)} , \mathbf z^{(2)} )
        \brac{
    \Tr( (\nabla k) (\nabla k)) 
        }    ,
    \end{align*}
    where operators $\mathbf z$ and $\mathbf z^{(1)} , \mathbf z^{(2)} $ are
    defined in \cref{eq:bfz-vs-bfy-conformal} and the spectral functions are
    given in terms of hypergeometric functions as below:
    \begin{align}
        \label{eq:Gdelk-I}
 G_{\Delta_k, \mathbf{I}}(z) = 
 (m+2) \mathsf H_{3,1}(z;m) - \frac{m}{2} \mathsf H_{2,1}(z;m),
    \end{align}
   and with $\vec z =(z_1, z_2)$, 
    \begin{align}
        \label{eq:Gdelk-II}
        G_{\Delta_k, \mathbf{II}} (\vec z) & =- (m^2 + 6m +8)
         \brac{ 
            \mathsf H_{3,1,1}(\vec z;m) + \frac{1}{2} 
            (1-z_1) \mathsf H_{2,2,1}(\vec z;m) 
        }\\
                                           &+
  \frac{ (m^2+4m+4)}{2}
\mathsf H_{2,1,1}(\vec z;m) .  
\nonumber                               
    \end{align}
\end{prop}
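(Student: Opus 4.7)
The strategy is to specialize the general diagonal formulas from \cref{thm:diag-v2P-I,thm:diag-v2P-II} to the conformal situation, where the coefficient matrix is scalar, $\mathbf A = kI$. For $\Delta_k = k\Delta$, the symbol is purely leading order: $p_2^{\Delta_k}(\xi)=k|\xi|^2$ and $p_1=p_0=0$, so $k_s = k$ for all $s$ and $r_s=0$, $p_0=0$. Consequently, all summands in \cref{thm:diag-v2P-II} indexed by $L^{(6)},\ldots, L^{(10)}$ (which are the ones containing $r_s$) vanish, as do the $\mathsf F_{1,1}(\mathbf A)_\emptyset(p_0)$ and $\mathsf F_{2,1}(\mathbf z)_s(i\nabla_s r_s)$ contributions in \cref{thm:diag-v2P-I}. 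What remains is exactly the split $v_2(\Delta_k)=v_2(\Delta_k)_{\mathbf I}+v_2(\Delta_k)_{\mathbf{II},L^{(1)},\ldots,L^{(5)}}$ claimed in the proposition.

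Next I would invoke the conformal reduction from \S\ref{subsec:the-conformal-case-Halpha}: when $k_1=\cdots=k_m=k$, every diagonal component $\mathsf F_\alpha(\mathbf z)_{s_1,\ldots,s_N}$ collapses to the single hypergeometric function $\mathsf H_\alpha(\mathbf z^{(1)},\ldots,\mathbf z^{(n)};m)$, depending only on the length $N$ and on the universal variables $\mathbf z^{(l)}=1-\mathbf y^{(1)}\cdots \mathbf y^{(l)}$. Simultaneously the differential expressions simplify: each $\frac{k_s}{k_l}\nabla^2_{ss}k_l$ and $\nabla^2_{ss}k_s$ becomes $\nabla^2_s k$, so $\sum_{s,l}\nabla^2_s k = m\,\Tr(\nabla^2 k)$ and $\sum_s\nabla^2_s k=\Tr(\nabla^2 k)$. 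Combining these in \cref{thm:diag-v2P-I} gives the coefficient $(m+2)\mathsf H_{3,1}-\tfrac{m}{2}\mathsf H_{2,1}$ multiplying $\Tr(\nabla^2 k)$, and dividing by $(\det \mathbf A)^{1/2}=k^{m/2}$ yields the stated formula for $G_{\Delta_k,\mathbf I}$.

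For Part II, every $L^{(i)}_{\ldots}$ with $i=1,\ldots,5$ collapses to the single tensor $(\nabla_s k)\otimes(\nabla_s k)$, and every prefactor of the form $k_s^{-1}, k_s/k_t^2, k_t^{-1}, k_s/(k_tk_l)$ reduces to $k^{-1}$. After summation the bookkeeping yields $1,\; m,\; m,\; m,\; m^2$ copies of $\Tr((\nabla k)(\nabla k))$ arising from the $L^{(1)},\ldots, L^{(5)}$ blocks respectively. Collecting the coefficients of the three hypergeometric building blocks gives
\begin{align*}
\text{coeff of }\mathsf H_{3,1,1}:&\quad -(8+2m+2m+2m+m^2)=-(m^2+6m+8),\\
\text{coeff of }(1-z_1)\mathsf H_{2,2,1}:&\quad -(4+m+m+m+\tfrac{m^2}{2})=-\tfrac{1}{2}(m^2+6m+8),\\
\text{coeff of }\mathsf H_{2,1,1}:&\quad 2+m+m+\tfrac{m^2}{2}=\tfrac{1}{2}(m^2+4m+4),
\end{align*}
which, after absorbing the overall $k^{-1}$ and the $(\det\mathbf A)^{-1/2}=k^{-m/2}$ factor, is exactly $G_{\Delta_k,\mathbf{II}}$ in the form \cref{eq:Gdelk-II}.

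The proof is conceptually straightforward given the diagonal formulas, so there is no real obstacle in principle. The one point requiring care is the combinatorial bookkeeping in Part II: each $L^{(i)}$-block contributes a different multiplicity in $m$ and a different coefficient pattern in $\{\mathsf H_{3,1,1},(1-z_1)\mathsf H_{2,2,1},\mathsf H_{2,1,1}\}$, and a miscount of even one $m$-factor would destroy the clean factorizations $m^2+6m+8=(m+2)(m+4)$ and $m^2+4m+4=(m+2)^2$. I would organize the verification by writing out a small $3\times 5$ table (three hypergeometric types against five $L^{(i)}$ blocks), summing each column with the correct multiplicity, and confirming that the three row totals match the claimed formulas.
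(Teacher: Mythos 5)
Your proof is correct and follows essentially the same route as the paper: specialize the diagonal formulas of \cref{thm:diag-v2P-I,thm:diag-v2P-II} to the scalar case $\mathbf A = kI$, replace each $\mathsf F_\alpha(\mathbf z)_{s_1,\ldots,s_N}$ by $\mathsf H_\alpha$, and count the $m$-multiplicities coming from the free summation indices. The only cosmetic difference is the bookkeeping: you organize the count by the $L^{(i)}$-blocks of \cref{thm:diag-v2P-II} (columns of your $3\times 5$ table), whereas the paper instead reads the answer off grouped by the index $\alpha\in\{(3,1,1),(2,2,1),(2,1,1)\}$ from \cref{thm:diag-v2P-II-V1}; both lead to the same column/row sums $-(m^2+6m+8)$, $-\tfrac12(m^2+6m+8)$, $\tfrac12(m^2+4m+4)$, and your arithmetic checks out (including the correct zero contribution of $L^{(4)}$ to the $\mathsf H_{2,1,1}$ coefficient).
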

\begin{proof}
We shall apply  reduction rules like:
\begin{align*}
    k_s \to  k,  \, \, \,
    \mathsf F_{\alpha}(\mathbf z)_{s,t,l} \to  \mathsf H_\alpha(z)
    ,
\end{align*}
to all terms appeared in \Cref{thm:diag-v2P-I,thm:diag-v2P-II}.
For instance, a summand in \cref{eq:diag-v_2P-311} works out as below:
\begin{align*}
    \sum_{s,t,l}
    \frac{k_s}{k_t k_l} \mathsf F_{3,1,1}(\mathbf z)_{s,l,t}
    \brac{
        (\nabla _s k_t ) \otimes  (\nabla_s k_l)
    } \to  m^2 k^{-1}
    \mathsf H_{3,1,1}(\vec{\mathbf z};m) \left( 
    \Tr( (\nabla k) (\nabla k))
    \right), 
\end{align*}
where $\vec{\mathbf z} = (\mathbf z^{(1)}, \mathbf z^{(2)})$.  
Notice that summing over $s$ yields $\Tr( (\nabla
k) (\nabla k)) $ while summing over $t,l$ produces $m^2$ copies of the same term.
By repeating the process, we obtain the reduction of 
\cref{eq:diag-v_2P-311,eq:diag-v_2P-221,eq:diag-v_2P-211}:
\begin{align*}
    & v_2(\Delta_k)_{\mathbf{II}, 3,1,1} \to  
    - \brac{
    m^2 + 6m +8
    } 
    \mathsf H_{3,1,1}(\vec{\mathbf z};m) \left( 
    \Tr( (\nabla k) (\nabla k))
    \right) \\ 
& v_2(\Delta_k)_{\mathbf{II}, 2,2,1} \to  
    - \frac{1}{2} \brac{
    m^2 + 6m +8
    } 
    (1-\mathbf z^{(1)}) \mathsf H_{2,2,1}(\vec{\mathbf z};m) \left( 
    \Tr( (\nabla k) (\nabla k))
    \right) \\
& v_2(\Delta_k)_{\mathbf{II}, 2,1,1} \to  
     \frac{1}{2} \brac{
    m^2 + 4m +4
    } 
    \mathsf H_{2,1,1}(\vec{\mathbf z};m) \left( 
    \Tr( (\nabla k) (\nabla k))
    \right) .
\end{align*}
They constitute all the non-zero contributes to $v_2(\Delta_k)_{\mathbf{II}}$,
hence we have proved \cref{eq:Gdelk-II}. For \cref{eq:Gdelk-I}, calculation is
similar, terms in \Cref{thm:diag-v2P-I-V1} are turned into:
\begin{align*}
& v_2(\Delta_k)_{\mathbf{I}, 2,1} \to  
\frac{1}{2} m \mathsf H_{2,1} (\mathbf z;m) 
\brac{\Tr(\nabla^2 k)},
\\
& v_2(\Delta_k)_{\mathbf{I}, 3,1} \to  
(m+2) \mathsf H_{3,1}(\mathbf z;m)
\brac{\Tr(\nabla^2 k)}.
\end{align*}
\end{proof}
In order to  recover the exact formulas in \cite{Liu:2018aa,Liu:2018ab},
we recall constants  from Eqs. \eqref{eq:Hnow-vs-Hbefore} and
\eqref{eq:overall-factor-before}:
\begin{align*}
    \frac{\mathsf H_{3,1,1} (\vec z;m) }{ H_{3,1,1} (\vec z;m) }
    =\frac{\mathsf H_{2,2,1} (\vec z;m) }{ H_{2,2,1} (\vec z;m) } =
    C_m \frac{1}{
        (\frac{m}{2} +2) (\frac{m}{2} +1) \frac{m}{2}
    } = C_m
    \frac{8}{ (m+4) (m+2) m}.
\end{align*}
and
\begin{align*}
    \frac{\mathsf H_{2,1,1} (\vec z;m) }{ H_{2,1,1} (\vec z;m) } =
    C_m \frac{1}{
         (\frac{m}{2} +1) \frac{m}{2}
    } 
    = C_m 
    \frac{4}{  (m+2) m}.
\end{align*}
Therefore:
\begin{align*}
    C_m^{-1} 
    G_{\Delta_k, \mathbf{II}}(z) & =  
       -\frac{8}{m}  \brac{ 
            H_{3,1,1}(\vec z;m) + \frac{1}{2} 
            (1-z_1) H_{2,2,1}(\vec z;m) 
        }\\
  &+ \brac{ \frac{4}{m} + 2}
H_{2,1,1}(\vec z;m)  .                                
\end{align*}
The right hand side is exactly $H_{\Delta_k}(\vec z;m)$ given in \cite[Prop.
4.1]{Liu:2018ab}.
Similarly, we have  
\begin{align*}
    \frac{\mathsf H_{3,1} (\vec z;m) }{ H_{3,1} (\vec z;m) }
    = C_m \frac{4}{(m+2) m} 
    ,
    \, \, \,
    \frac{\mathsf H_{2,1} (\vec z;m) }{ H_{2,1} (\vec z;m) }
    = C_m \frac{2}{ m} 
\end{align*}
thus 
\begin{align*}
    C_m^{-1} 
    G_{\Delta_k, \mathbf{I}}(z) & =  - H_{2,1} (z;m)
    +  \frac{4}{m} H_{3,1}(z;m),
\end{align*}
which equals $K_{\Delta_k}(z;m)$ in \cite[Prop. 4.1]{Liu:2018ab}.

\subsection{Example II: $\Delta_\varphi = k^{1 /2} \Delta k^{1 /2}$  }
\label{subsec:ExII-Delvfi}
We start with Lemma \ref{lem:delk1/2-to-delk},
the ``chain rule'' with with respect to the
change of variable $k^{\frac{1}{2}} \to  k$ and then compute   
the symbol of $\Delta_\varphi$ in Lemma \ref{lem:Del_vfi-sym}. 
\begin{lem}
    \label{lem:delk1/2-to-delk}
   Recall 
  \begin{align*}
      \delta_s(k^{\frac{1}{2}}) & 
      = k^{ - 1 / 2}
     \mathsf G_{\mathrm{pow}}^{(1)}(\mathbf y;1/2)     \brac{\delta_s(k)} , \\
      \delta_s^{2} (k^{\frac{1}{2}}) & = 
    k^{ - 1 / 2}
  \mathsf G_{\mathrm{pow}}^{(1)}(\mathbf y;1/2)     \brac{\delta_s^2(k)} 
      + 2 k^{ - 3 / 2}
      \mathsf G_{\mathrm{pow}}^{(1,1)}(\mathbf y^{(1)}, \mathbf y^{(2)};1/2)
      \brac{\delta_s (k) \otimes  \delta_s (k)},
  \end{align*}  
  where $\mathsf G_{\mathrm{pow}}$ is obtained by applying divided differences
  \footnote{
  We remind the reader the divided difference notion. For a single-variable
  function $f(x)$, the $n$-th divided difference $f[x_0, \ldots, x_n]$  is
  a function of $n+1$-variable:   
  \begin{align*}
      f[x_0, \ldots, x_n] = \sum_{j=0}^n
      \frac{f(x_j)}{ \prod_{l\neq j}(x_j - x_j)}
  \end{align*}
  }
 to the power functions $u^j$ with $j \in  \R$$:$ 
  \begin{align*}
      \mathsf G_{\mathrm{pow}}^{(1)} (y; j )
&= 
      u^{j}[1,y] = \frac{y^{j}-1}{y-1}
      ,\\
      \mathsf G_{\mathrm{pow}}^{(1,1)} (y_1, y_2; j ) 
&= u^{j }[1, y_1, y_1 y_2] =
\mathsf G_{\mathrm{pow}}^{(1)} (u; j )
[y_1, y_1 y_2]
=
\frac{
\mathsf G_{\mathrm{pow}}^{(1)} (y_1 y_2; j ) -
\mathsf G_{\mathrm{pow}}^{(1)} (y_1; j )
}{y_1 y_2 - y_1}
      .
  \end{align*}
  We will fix $j=1/2$ from now on and freely use the abbreviations
   $\mathsf
  G_{\mathrm{pow}}^{(1)} (y) \defeq \mathsf G_{\mathrm{pow}}^{(1)}(y; 1 /2) $  and 
  $\mathsf G_{\mathrm{pow}}^{(1,1)} (y_1 , y_2) = 
  \mathsf G_{\mathrm{pow}}^{(1,1)} (y_1 , y_2 ; 1 /2 ) $ in the rest of the
  computation.
\end{lem}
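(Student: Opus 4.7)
The plan is to derive both identities as specializations of the standard divided-difference expansion of $\delta_s^j(f(k))$ to $f(u) = u^{1/2}$, and then to recognize the resulting divided differences as $\mathsf G_{\mathrm{pow}}^{(1)}$ and $\mathsf G_{\mathrm{pow}}^{(1,1)}$ after exploiting the homogeneity of the power function. Throughout, $k^{(0)}, k^{(1)}, k^{(2)}$ denote the lifts of $k$ into the appropriate tensor slots as in \cref{eq:a(j)-as-ops-iota}, and I view both sides as operators in $L(C^\infty(\mathbb{T}^m_\theta)^{\otimes n}, C^\infty(\mathbb{T}^m_\theta))$ for $n=1,2$.

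First I would recall the noncommutative chain-rule-type identities for smooth $f$ and a self-adjoint $k=e^h$:
$\delta_s(f(k)) = f[k^{(0)}, k^{(1)}](\delta_s k)$ and $\delta_s^2(f(k)) = f[k^{(0)}, k^{(1)}](\delta_s^2 k) + 2 f[k^{(0)}, k^{(1)}, k^{(2)}](\delta_s k \otimes \delta_s k)$, which are instances of the formalism developed in \cite{leschdivideddifference}. Both formulas can also be verified directly on monomials $f(u) = u^n$ via the Leibniz expansion: for $\delta_s^2(k^n)$, the three-way distribution of the second derivative produces one group of terms with a single $\delta_s^2 k$, whose coefficient is $\sum_{a+b=n-1} x_0^a x_1^b = f[x_0, x_1]$, and two groups of terms with $\delta_s k \otimes \delta_s k$ (arising from $\delta_s$ hitting either the left or right $k$ inside the first-order expression); a short index count shows that each monomial $x_0^a x_1^b x_2^c$ with $a+b+c = n-2$ appears exactly once in each group, yielding precisely $2 f[x_0, x_1, x_2]$. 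Density and standard continuous-linear-extension arguments in the Fr\'echet topology then pass the identity from polynomials to smooth $f$.

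Next I would specialize to $f(u) = u^{1/2}$ and invoke the homogeneity $f(bu) = b^{1/2} f(u)$, which yields the scaling relation $f[b x_0, \ldots, b x_n] = b^{1/2 - n} f[x_0, \ldots, x_n]$. Applied with $b = k^{(0)}$ and the substitution $x_0 = 1$, $x_1 = \mathbf y^{(1)} = (k^{(0)})^{-1} k^{(1)}$, $x_2 = \mathbf y^{(1)} \mathbf y^{(2)} = (k^{(0)})^{-1} k^{(2)}$, I obtain $f[k^{(0)}, k^{(1)}] = (k^{(0)})^{-1/2}\, f[1, \mathbf y^{(1)}]$ and $f[k^{(0)}, k^{(1)}, k^{(2)}] = (k^{(0)})^{-3/2}\, f[1, \mathbf y^{(1)}, \mathbf y^{(1)} \mathbf y^{(2)}]$. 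By the very definitions given in the statement, the inner divided differences equal $\mathsf G_{\mathrm{pow}}^{(1)}(\mathbf y; 1/2)$ and $\mathsf G_{\mathrm{pow}}^{(1,1)}(\mathbf y^{(1)}, \mathbf y^{(2)}; 1/2)$, while the outer $(k^{(0)})^{-\alpha}$ acts as left multiplication by $k^{-\alpha}$ on the final result. This produces exactly the two displayed formulas.

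The main obstacle is essentially bookkeeping rather than analysis: one has to keep straight the convention for which tensor slot carries the leftmost multiplication (so that the abbreviation $k^{-1/2}$ in the statement refers to $(k^{(0)})^{-1/2}$), and verify that the variables $\mathbf y^{(1)}, \mathbf y^{(2)}$ act on the correct factors of $\delta_s k \otimes \delta_s k$. Once these conventions are matched, the lemma is an immediate consequence of the divided-difference expansion together with the homogeneity of the square root.
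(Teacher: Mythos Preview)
Your argument is correct: the divided-difference expansion of $\delta_s^n(f(k))$ followed by the homogeneity rescaling $f[bx_0,\ldots,bx_n]=b^{1/2-n}f[x_0,\ldots,x_n]$ with $b=k^{(0)}$ is exactly the right mechanism, and your verification of the factor $2$ via monomials is sound. The paper itself does not prove this lemma in the text but simply cites \cite[Lemma 2.13]{Liu:2018ab}; your write-up is effectively a self-contained version of what that reference contains, so there is no substantive methodological difference to compare.
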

\begin{proof}
  See \cite[Lemma 2.13]{Liu:2018ab}. 
\end{proof}
\begin{lem}
    \label{lem:Del_vfi-sym}
    The symbol of $\Delta_\varphi = k^{\frac{1}{2}} \Delta k^{\frac{1}{2}}$ is
    given by
    \begin{align*}
        \pmb\sigma( \Delta_\varphi) = k \abs\xi^2 + 
        2 k^{\frac{1}{2}} \delta_s(k^{\frac{1}{2}}) \xi_s +
        k^{\frac{1}{2}} \delta_s^2( k^{\frac{1}{2}}).
    \end{align*}
    Following the notations in Eq. \eqref{eq:P-gen-form-defn}, we have:
    the leading term agrees with the previous one, 
    $p_2^{\Delta_\varphi} = p_2^{\Delta_k}$, and the linear term 
    \begin{align*}
       p_1^{\Delta_\varphi}(\xi) = \sum_{s=1}^m 
       2 k^{\frac{1}{2}} \delta_s(k^{\frac{1}{2}}) \xi_s,
       \, \, \, \text{with} \, \, \,
r_s^{\Delta_\varphi}(\xi) =
2 k^{\frac{1}{2}} \delta_s(k^{\frac{1}{2}}),
    \end{align*}
    and the constant $($in $\xi$$)$ term $
    p_0^{\Delta_\varphi} 
  = \sum_{s=1}^m k^{\frac{1}{2}} \delta_s^2( k^{\frac{1}{2}}).  $ 
\end{lem}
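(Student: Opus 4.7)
The plan is to compute the symbol by a direct Leibniz expansion rather than by invoking the star product, since both factors in $\Delta_\varphi = k^{1/2}\Delta k^{1/2}$ are differential: $\Delta = \sum_{s=1}^m \delta_s^2$ is a constant-coefficient second-order operator and $k^{1/2}$ acts by left multiplication. First I would apply $\Delta_\varphi$ to an arbitrary $f \in C^\infty(\T^m_\theta)$ and expand via the Leibniz rule for each derivation $\delta_s$:
\begin{align*}
\delta_s^2(k^{1/2} f) \;=\; \delta_s^2(k^{1/2})\cdot f \;+\; 2\,\delta_s(k^{1/2})\cdot \delta_s f \;+\; k^{1/2}\cdot \delta_s^2 f.
\end{align*}
Multiplying on the left by $k^{1/2}$ and summing over $s$ then realises $\Delta_\varphi$ as the second order differential operator
\begin{align*}
\Delta_\varphi \;=\; \sum_{s=1}^m \Bigl[\, k\,\delta_s^2 \;+\; 2\,k^{1/2}\delta_s(k^{1/2})\,\delta_s \;+\; k^{1/2}\delta_s^2(k^{1/2})\,\Bigr].
\end{align*}

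Next I would read off the parametric symbol using the conventions in \S\ref{subsec:symbol-cal}, namely $\pmb\sigma(\delta_s) = \xi_s$ and $\pmb\sigma(f) = f$ for any left-multiplication operator by $f \in C^\infty(\T^m_\theta)$. The three groups of summands contribute the quadratic piece $k|\xi|^2$, the linear piece $\sum_{s} 2\,k^{1/2}\delta_s(k^{1/2})\,\xi_s$, and the $\xi$-independent piece $\sum_{s} k^{1/2}\delta_s^2(k^{1/2})$, respectively. Matching against the template \eqref{eq:P-gen-form-defn} identifies $r_s^{\Delta_\varphi} = 2\,k^{1/2}\delta_s(k^{1/2})$ and $p_0^{\Delta_\varphi} = \sum_{s} k^{1/2}\delta_s^2(k^{1/2})$, and yields the stated formula. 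In particular $p_2^{\Delta_\varphi}$ agrees with $p_2^{\Delta_k} = k|\xi|^2$, reflecting the fact that $\Delta_k = k\Delta$ and $\Delta_\varphi = k^{1/2}\Delta k^{1/2}$ share the same principal symbol.

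There is no genuine obstacle in this computation; the only mild subtlety is that $k^{1/2}$ and $\delta_s(k^{1/2})$ do not commute, but this is inconsequential because the Leibniz expansion never transports factors across a derivative, so the order of the operator-valued coefficients in each coefficient is preserved. As a sanity check one may re-derive the same symbol through the star product $k^{1/2}\star |\xi|^2 \star k^{1/2}$ using \eqref{eq:star-prod-formal}: since $k^{1/2}$ carries no $\xi$-dependence, the bi-differential operators $a_j$ of \eqref{eq:conness-a_j-defn} vanish beyond $j=2$, and the surviving $a_0,a_1,a_2$ contributions recover precisely the three terms $p_2$, $p_1$, $p_0$ produced above. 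Lemma~\ref{lem:delk1/2-to-delk} is then exactly the tool needed to re-express $\delta_s(k^{1/2})$ and $\delta_s^2(k^{1/2})$ in terms of $\delta_s k$, $\delta_s^2 k$, and the modular operator $\mathbf y$, should one wish to pass to the $k$-variables for the subsequent heat-coefficient computations.
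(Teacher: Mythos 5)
Your proposal is correct and follows essentially the same path as the paper: the paper writes $\Delta_\varphi = k\Delta + k^{1/2}[\Delta,k^{1/2}]$ and expands $[\delta_s^2,k^{1/2}]=\delta_s^2(k^{1/2})+2\delta_s(k^{1/2})\delta_s$ via commutators, which is just your Leibniz expansion phrased without a test function; both then read off the symbol by $\delta_s\to\xi_s$, and both remark that the $\star$-product gives an independent check. (The only difference is cosmetic: applying to $f$ versus operator algebra in the proof of the commutator identity.)
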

\begin{rem}
    By taking Lemma \ref{lem:delk1/2-to-delk}  into account and
    $\delta_s =-i \nabla_s $, also  Eq. \eqref{eq:Tr-nabla^2k-and-nablak},
    we can rewrite every thing in terms of $\nabla  k$ and $\nabla^2 k$:  
    \begin{align}
        \label{eq:r_s-delphi}
         r_s^{\Delta_\varphi}(\xi) 
= -2i 
 k^{\frac{1}{2}} \nabla_s(k^{\frac{1}{2}}) =
 -2i \mathsf G_{\mathrm{pow}}^{(1)} (\mathbf y ) (\nabla_s  k),
    \, \, \, s=1, \ldots, m.
    \end{align}
    and 
\begin{align}
        \label{eq:p_0-delphi}
  p_0^{\Delta_\varphi} 
   = -\sum_{s=1}^m k^{\frac{1}{2}} \nabla_s^2( k^{\frac{1}{2}})  
  =  
  \mathsf G_{\mathrm{pow}}^{(1)} (\mathbf y )
  \brac{
      \Tr( \nabla^2 k)
  } + k^{-1} 
  2 \mathsf G_{\mathrm{pow}}^{(1,1)} (\mathbf y_1 , \mathbf y_2 )
  \brac{ \Tr( (\nabla  k) (\nabla  k) )}.
\end{align}
\end{rem}
\begin{proof}
    Notice that $\forall a \in  C^\infty(\T^m_\theta) $, we have $[\delta_s,a]
    = \delta_s(a)$, thus:  
    \begin{align*}
        [\delta_s^2, k^{\frac{1}{2}}] 
        = \delta_s[\delta_s,k^{\frac{1}{2}}] + [\delta_s,k^{\frac{1}{2}}] \delta_s 
        =[\delta_s, [\delta_s, k^{\frac{1}{2}}]] + 
        2 [\delta_s,k^{\frac{1}{2}}] \delta_s       
         = \delta_s^2(k^{\frac{1}{2}}) +
         2 \delta_s( k^{\frac{1}{2}}) \delta_s.
    \end{align*}
We are ready to put $\Delta_\varphi$ into the form of Eq. 
\eqref{eq:P-gen-form-defn} with the help of commutators:
    \begin{align*}
        \Delta_\varphi &= k \Delta + k^{\frac{1}{2}} [\Delta, k^{\frac{1}{2}} ]
        = k   \Delta 
        + k^{\frac{1}{2}}
        \sum_{s=1}^m [\delta_s^2, k^{\frac{1}{2}}] \\
&= k \sum_{s=1}^m \delta_s^2 +
 \sum_{s=1}^m  \brac{ 2 \delta_s(k^{\frac{1}{2}}) \delta_s + 
                       \delta_s^2(k^{\frac{1}{2}})}.
    \end{align*}
  The symbol follows immediately by replacing the differential operators 
  $\delta_s \to  \xi_s$.   
    Another method is to make use of the $\star$-product. Notice that 
    $D^j \abs\xi^2 =0$ for all $j>2$. 
    There are only
    three non-zero terms for $\abs\xi^2 \star k^{\frac{1}{2}}$ which are given
    in Eq. \eqref{eq:a_0-to-a_2}: 
\begin{align*}
    \pmb\sigma(\Delta k^{\frac{1}{2}}) = \sum_{j=0}^2
    a_j(\abs\xi^2, k^{\frac{1}{2}}) = 
    k^{\frac{1}{2}} \abs\xi^2  -2 i \sum_1^m \nabla_s(k^{\frac{1}{2}}) \xi_s  -
    \sum_1^m \nabla^2_s(k^{\frac{1}{2}}) .
\end{align*}
\end{proof}
Here comes the key result.
\begin{prop}
    \label{prop:v2-Delvfi}
    The second heat coefficient of $\Delta_\varphi$  is of the form
    $v_2(\Delta_\varphi) = v_2(\Delta_\varphi)_{\mathbf{I}}
    + v_2(\Delta_\varphi)_{\mathbf{II}}$ with
    \begin{align*}
v_2(\Delta_\varphi)_{\mathbf{I}} =
        G_{\Delta_\varphi, \mathbf{I}}(\mathbf z)( \Tr(\nabla^2 k)),
        \, \, \,
v_2(\Delta_\varphi)_{\mathbf{II}} =
        G_{\Delta_\varphi, \mathbf{II}}(\mathbf z^{(1)}, \mathbf z^{(2)})
        \brac{
            \Tr( (\nabla k) (\nabla  k))
        },
    \end{align*}
where the two traces are defined in Eq.  \eqref{eq:Tr-nabla^2k-and-nablak}
and the spectral functions are give by:
\begin{align}
    \label{eq:Gdelvphi-I}
    G_{\Delta_\varphi, \mathbf{I}} (z)
    &= G_{\Delta_k, \mathbf{I}} (z) + J_{\mathbf{I},p_1,p_0}^{(2)} (z),\\
    G_{\Delta_\varphi, \mathbf{II}} (z_1 , z_2)
    &= G_{\Delta_k, \mathbf{II}} (z_1 , z_2)
    + J_{\mathbf{I},p_1,p_0}^{(1,1)}(z_1 , z_2)
    +J_{\mathbf{II},L^{(6)}, \ldots, L^{(10)}}(z_1 , z_2),
    \label{eq:Gdelvphi-II}
\end{align}
where $G_{\Delta_k, \mathbf{I}}$ and $G_{\Delta_k, \mathbf{II}}$  
is defined in \textup{Proposition} $\ref{prop:v2-Delk}$. The remaining terms
are also spanned by the hypergeometric family $\mathsf H_{\alpha}(\bar z;m)$ in 
\eqref{eq:sfHalpha} and $\mathsf G^{(1)}_{\mathrm{pow}}$,
$\mathsf G^{(1,1)}_{\mathrm{pow}}$  in 
\textup{Lemma \ref{lem:delk1/2-to-delk}:}
    \begin{align}
        \label{eq:JIp_1p_0-(1)}
        J^{(2)}_{\mathbf{I}, p_1,p_0} (z) =
       -  \mathsf G^{(1)}_{\mathrm{pow}}(y) \brac{
            2 \mathsf H_{2,1}(z;m) - \mathsf H_{1,1}(z;m)
        },
    \end{align}
    and
    \begin{align}
        \label{eq:JIIp_1p_0-(1-1)}
        J^{(1,1)}_{\mathbf{I}, p_1,p_0} (z_1 ,z_2)   
          &= 
        -2
\brac{ 
    y_1^{-\frac{1}{2}}
    \mathsf G^{(1)}_{\mathrm{pow}}(y_1)
    \mathsf G^{(1)}_{\mathrm{pow}}(y_2) 
    + 
    2 \mathsf G^{(1,1)}_{\mathrm{pow}}(y_1 , y_2) 
} 
\mathsf H_{2,1}(z_2;m) \\
          &+
    2 \mathsf G^{(1,1)}_{\mathrm{pow}}(y_1 , y_2) 
\mathsf H_{1,1}(z_2;m) ,
\nonumber
    \end{align}
    and
\begin{align}
    \label{eq:JIIL_6L_10}
       &\, \,  J_{\mathbf{II},L^{(6)}, \ldots, L^{(10)}} (z_1, z_2)      \\
        =&\, \, 
        \sbrac{
         (2+m)
         \brac{
         \mathsf G_{\mathrm{pow}}^{(1)}(y_1) + 
         \mathsf G_{\mathrm{pow}}^{(1)}(y_2) 
         }
        } 
        \mathsf H_{2,1,1}(z_1, z_2;m) 
        +
        (2+m)
        \mathsf G_{\mathrm{pow}}^{(1)}(y_1)
        (1-z_1)
        \mathsf H_{1,2,1} (z_1, z_2;m) 
        \nonumber \\
        -&\, \, 
        \brac{
        m \mathsf G_{\mathrm{pow}}^{(1)}(y_1)
         +2 \mathsf G_{\mathrm{pow}}^{(1)}(y_1)
         \mathsf G_{\mathrm{pow}}^{(1)}(y_2)
        }
        \mathsf H_{1,1,1} (z_1, z_2;m) .
        \nonumber 
    \end{align}
\end{prop}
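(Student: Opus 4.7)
The strategy is to specialize the general diagonal-case formulas in Theorems \ref{thm:diag-v2P-I} and \ref{thm:diag-v2P-II} to the conformal reduction $k_1=\cdots=k_m=k$ and then substitute the specific symbol data of $\Delta_\varphi$ recorded in Lemma \ref{lem:Del_vfi-sym}. Since $p_2^{\Delta_\varphi}=p_2^{\Delta_k}$, the terms that only involve $p_2$ contribute exactly $G_{\Delta_k,\mathbf{I}}(z)\bigl(\Tr(\nabla^2 k)\bigr)$ and $G_{\Delta_k,\mathbf{II}}(z_1,z_2)\bigl(\Tr((\nabla k)(\nabla k))\bigr)$ already assembled in Proposition \ref{prop:v2-Delk}. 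The whole task therefore reduces to extracting the spectral functions attached to the new pieces coming from $p_1^{\Delta_\varphi}$ and $p_0^{\Delta_\varphi}$.

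For the coefficient of $\Tr(\nabla^2 k)$, the new contributions are $-\mathsf F_{2,1}(\mathbf z)_s(i\nabla_s r_s)$ from Part I and the $\nabla^2 k$-piece of $-\mathsf F_{1,1}(\mathbf z)_{\emptyset}(p_0)$. Using Lemma \ref{lem:delk1/2-to-delk} to expand $\delta_s(k^{1/2})$ and $\delta_s^2(k^{1/2})$, one sees that the $\nabla^2 k$-component of $\nabla_s r_s$ carries one factor of $\mathsf G^{(1)}_{\mathrm{pow}}(\mathbf y)$, and likewise for the $\nabla^2 k$-piece of $p_0^{\Delta_\varphi}$. After the conformal reduction converts $\mathsf F_{2,1}\to \mathsf H_{2,1}$ and $\mathsf F_{1,1}\to \mathsf H_{1,1}$, the two contributions combine into $J^{(2)}_{\mathbf{I},p_1,p_0}$ in the form \eqref{eq:JIp_1p_0-(1)}.

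For the coefficient of $\Tr((\nabla k)(\nabla k))$, three groups of new terms must be collected. First, the $(\nabla k)(\nabla k)$-piece of $\nabla_s r_s$ produces a product $\mathsf G^{(1)}_{\mathrm{pow}}(\mathbf y^{(1)})\mathsf G^{(1)}_{\mathrm{pow}}(\mathbf y^{(2)})$ (from $\delta_s(k^{1/2})\delta_s(k^{1/2})$) together with a single $\mathsf G^{(1,1)}_{\mathrm{pow}}(\mathbf y^{(1)},\mathbf y^{(2)})$ (from $\delta_s^2(k^{1/2})$), both paired with $\mathsf H_{2,1}$; the $(\nabla k)(\nabla k)$-piece of $p_0$ similarly contributes a $\mathsf G^{(1,1)}_{\mathrm{pow}}$ term paired with $\mathsf H_{1,1}$. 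Propagating the outer $k^{\pm 1/2}$ factors past the rearrangement operator inserts a power of $\mathbf y$ on a specific tensor slot, which produces the $y_1^{-1/2}$ prefactor of \eqref{eq:JIIp_1p_0-(1-1)} and assembles these pieces into $J^{(1,1)}_{\mathbf{I},p_1,p_0}$. Second, the Part II terms indexed by $L^{(6)},\ldots,L^{(10)}$ in Theorem \ref{thm:diag-v2P-II} each contain one or two factors of $r$; substituting $r_s=-2i\,\mathsf G^{(1)}_{\mathrm{pow}}(\mathbf y)(\nabla_s k)$, performing the $s,t,l$ summations to produce the $m$-dependent combinatorial weights, and converting $\mathsf F_\alpha\to \mathsf H_\alpha$ yields $J_{\mathbf{II},L^{(6)},\ldots,L^{(10)}}$. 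The $(1-z_1)$ factor in front of $\mathsf H_{1,2,1}$ is produced by the $k_s^{(1)}=k_s(1-\mathbf z_s^{(1)})$ rewriting that was already used in \eqref{eq:diag-v_2P-121}, and the double $\mathsf G^{(1)}_{\mathrm{pow}}$ factor in the $\mathsf H_{1,1,1}$ term comes precisely from the $L^{(10)}=r_s\otimes r_s$ summand.

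The main obstacle is the slot-by-slot bookkeeping: one must track on which factor of an elementary tensor each $\mathbf y^{(j)}$ (appearing in $\mathsf G_{\mathrm{pow}}$) acts, and how this interacts with the conjugation operators that define the $\mathbf z$-arguments of $\mathsf H_\alpha$ via \eqref{eq:bfz-vs-bfy-conformal}. Once the slot assignment is fixed, the various $\mathsf G^{(1)}_{\mathrm{pow}}$ and $\mathsf G^{(1,1)}_{\mathrm{pow}}$ factors commute past the spectral measure of $\mathbf z$ and reassemble into the explicit formulas \eqref{eq:JIp_1p_0-(1)}–\eqref{eq:JIIL_6L_10}; verifying the numerical coefficients is then a mechanical check against the dimensional factors produced by the summations over the diagonal indices.
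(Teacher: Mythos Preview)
Your proposal is correct and follows essentially the same approach as the paper: specialize the diagonal-case Theorems \ref{thm:diag-v2P-I} and \ref{thm:diag-v2P-II} to the conformal setting, observe that the $p_2$-only contribution reproduces Proposition \ref{prop:v2-Delk}, and then handle the $p_1,p_0$-terms by expanding $\nabla_s(k^{1/2})$, $\nabla_s^2(k^{1/2})$ via Lemma \ref{lem:delk1/2-to-delk} and the $L^{(6)}$--$L^{(10)}$ terms by substituting $r_s=-2i\,\mathsf G^{(1)}_{\mathrm{pow}}(\mathbf y)(\nabla_s k)$. The paper packages these two computations into Lemmas \ref{lem:JIp1p0} and \ref{lem:JIIL6-L10}, whose proofs carry out exactly the steps you outline, including the slot bookkeeping that produces the $y_1^{-1/2}$ factor and the $(1-z_1)$ from $k_s^{(1)}=k_s(1-\mathbf z_s^{(1)})$.
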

\begin{proof}
Since $p_2^{\Delta_\varphi} = p_2^{\Delta_k}$, the
contribution  from the leading term is identical to those 
(i.e., $G_{\Delta_k, \mathbf{I}}$ and $G_{\Delta_k, \mathbf{II}}$)  obtained in
Proposition \ref{prop:v2-Delk}.
It remains to count the terms involving $p_1^{\Delta_\varphi}$ and
$p_0^{\Delta_\varphi}$ in
Theorem \ref{thm:diag-v2P-I} and  \ref{thm:diag-v2P-II},
which are 
$ -\sum_s \mathsf F_{2,1} (\mathbf z)_{(ss)} (i \nabla_s r_s) 
- \mathsf F_{1,1}(\mathbf z)_{\emptyset} (p_0) $ and
$ v_2(\Delta_\varphi)_{\mathbf{II},L^{(6)}, \ldots, L^{(10)}} $. 
We packed the tedious computation into the proof of Lemma \ref{lem:JIp1p0} and
\ref{lem:JIIL6-L10} at the end of this section.  
\end{proof}

We see that the contribution from $p_1^{\Delta_\varphi}$  and
$p_0^{\Delta_\varphi}$ is proportional to that of the leading term.
\begin{cor}
    \label{cor:Gdelk-vs-Gdelvfi-fun-relations}
 The following relations hold$:$   
    \begin{align}
        \label{eq:fn-relation-I}
        (y^{\frac{1}{2}}-1) G_{\Delta_k, \mathbf{I}}(z)
        &= J^{(2)}_{\mathbf{I}, p_1,p_0}(z),
        \\
        ((y_1 y_2)^{\frac{1}{2}}-1) G_{\Delta_k, \mathbf{II}}(z_1 , z_2)
        &= J^{(1,1)}_{\mathbf{II}, p_1,p_0}(z_1 , z_2)
        +J_{\mathbf{II},L^{(6)}, \ldots, L^{(10)}} (z_1, z_2),
        \label{eq:fn-relation-II}
    \end{align}
    where $z= 1-y$ and $z_1 =1-y_1$, $z_2 = 1-y_1 y_2$ as in 
    \cref{eq:A-Y-Z-defn-diag}.  
\end{cor}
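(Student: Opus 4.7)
The plan is to compute $v_2(\Delta_\varphi)$ in two independent ways and equate the results. \Cref{lem:-Delphi-vs-Delk} supplies the conjugation identity $v_2(\Delta_\varphi) = \mathbf{y}^{1/2}\bigl(v_2(\Delta_k)\bigr)$, whose right hand side I expand using the explicit formulas of \Cref{prop:v2-Delk}. \Cref{prop:v2-Delvfi} gives the left hand side by direct pseudo-differential computation. Matching the coefficients of the two independent tensor expressions $\Tr(\nabla^2 k)$ and $\Tr((\nabla k)(\nabla k))$ then produces the two claimed functional relations.

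The key computational ingredient is a commutation rule: for a spectral function $F$ of $n$ variables and a polynomial tensor $\rho_1 \otimes \cdots \otimes \rho_n$,
\begin{align*}
\mathbf{y}^{1/2}\bigl(F(\mathbf{y}^{(1)},\ldots,\mathbf{y}^{(n)})(\rho_1 \otimes \cdots \otimes \rho_n)\bigr) = (\mathbf{y}^{(1)} \cdots \mathbf{y}^{(n)})^{1/2}\, F(\mathbf{y}^{(1)},\ldots,\mathbf{y}^{(n)})(\rho_1 \otimes \cdots \otimes \rho_n).
\end{align*}
This follows from $\mathbf{y}^{1/2}(a) = k^{-1/2} a k^{1/2}$ being outer conjugation, which in the contraction picture of \cref{eq:contraction-dot-defn} corresponds to $(k^{(0)})^{-1/2} (k^{(n)})^{1/2} = (\mathbf{y}^{(1)} \cdots \mathbf{y}^{(n)})^{1/2}$, using $\mathbf{y}^{(l)} = (k^{(l-1)})^{-1} k^{(l)}$ and the mutual commutativity of the $k^{(l)}$. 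The overall prefactor $k^{-m/2}$ in $v_2(\Delta_k)$ commutes with $k^{\pm 1/2}$, hence passes through $\mathbf{y}^{1/2}$ cleanly.

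Applying this commutation to $n=1$ with argument $\Tr(\nabla^2 k)$ and to $n=2$ with argument $\Tr((\nabla k)(\nabla k))$, and invoking the changes of variable $z = 1-y$, $z_1 = 1-y_1$, $z_2 = 1-y_1 y_2$ (as in \cref{eq:A-Y-Z-defn-diag,eq:bfz-vs-bfy-conformal}), the conjugation identity reads at the level of spectral functions as
\begin{align*}
G_{\Delta_\varphi,\mathbf{I}}(z) = y^{1/2}\, G_{\Delta_k,\mathbf{I}}(z), \qquad G_{\Delta_\varphi,\mathbf{II}}(z_1,z_2) = (y_1 y_2)^{1/2}\, G_{\Delta_k,\mathbf{II}}(z_1,z_2).
\end{align*}
Subtracting $G_{\Delta_k,\mathbf{I}}(z)$ and $G_{\Delta_k,\mathbf{II}}(z_1,z_2)$ from both sides and substituting the decompositions \eqref{eq:Gdelvphi-I} and \eqref{eq:Gdelvphi-II} of \Cref{prop:v2-Delvfi} produces precisely \eqref{eq:fn-relation-I} and \eqref{eq:fn-relation-II}.

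The main technical obstacle is the passage from the operator-valued identity to the equality of scalar-valued spectral functions. This requires a universality argument: the dependence on the tensor slots and on the noncommutative coordinate $k$ must be sufficiently generic that the spectral functions are uniquely determined by the resulting operator expressions. The cleanest route is to observe that the whole reduction of $v_2$ to a spectral-function acting on $\Tr(\nabla^2 k)$ and $\Tr((\nabla k)(\nabla k))$ is universal in $k$, so one may work in a formal model where the iterates $k^{(0)}, k^{(1)}, k^{(2)}$ are treated as algebraically independent generators; injectivity of the assignment spectral function $\mapsto$ operator expression is then automatic, and the identification of coefficients on both sides of the conjugation identity is legitimate.
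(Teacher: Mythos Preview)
Your proof is correct and follows the same route as the paper's: compare the expression for $v_2(\Delta_\varphi)$ given by \Cref{prop:v2-Delvfi} with the one obtained by applying $\mathbf y^{1/2}$ to $v_2(\Delta_k)$ via \Cref{lem:-Delphi-vs-Delk} and \Cref{prop:v2-Delk}, then read off the identities at the level of spectral functions. The paper's argument is terser, simply asserting that the conjugation $\mathbf y^{1/2}$ becomes the multiplicative factor $y^{1/2}$ (resp.\ $(y_1 y_2)^{1/2}$) on the spectral functions, whereas you spell out both the commutation rule $(k^{(0)})^{-1/2}(k^{(n)})^{1/2} = (\mathbf y^{(1)}\cdots\mathbf y^{(n)})^{1/2}$ and the universality justification that the paper leaves implicit.
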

\begin{proof}
    In the previous proposition, we obtain the first version of of
    $G_{\Delta_\varphi, \mathbf{I}}$ 
    and $G_{\Delta_\varphi, \mathbf{II}}$ (cf. Eqs \eqref{eq:Gdelvphi-I} and
    \eqref{eq:Gdelvphi-II}) making use of the general results 
(\Cref{thm:diag-v2P-I-V1,thm:diag-v2P-II-V1}).
On the other hand, 
    Lemma \ref{lem:-Delphi-vs-Delk} states that   
$v_2(\Delta_\varphi) = \mathbf y^{\frac{1}{2}}\brac{v_2(\Delta_k)}$,
which implies that their spectral functions agree upto a factor of 
$y^{\frac{1}{2}} = (1-z)^{\frac{1}{2}}$ (or $ (y_1 y_2)^{\frac{1}{2}}
= (1-z_2)^{\frac{1}{2}}$):
\begin{align*}
    G_{\Delta_\varphi, \mathbf{I}} (z) = (1-z)^{\frac{1}{2}}  
    G_{\Delta_k, \mathbf{I}} (z), \, \, \, 
    G_{\Delta_\varphi, \mathbf{II}} (z_1, z_2) = 
    (1-z_2)^{\frac{1}{2}} G_{\Delta_k, \mathbf{II}} (z_1 , z_2)
    .
\end{align*}
 The relations  follow immediately from comparison.
\end{proof}

\begin{lem}
    \label{lem:JIp1p0}
    With  $r_s^{\Delta_\varphi} = -2 i k^{\frac{1}{2}} 
    \nabla_s(k^{\frac{1}{2}})$ and $p_0^{\Delta_\varphi} = - k^{\frac{1}{2}}
      \sum_{l=1}^m \nabla^2_l (k^{\frac{1}{2}})$, 
the following sum in Theorem $\ref{thm:diag-v2P-I-V1}$ becomes$:$
    \begin{align*}
  &\, \, 
  -\sum_s \mathsf F_{2,1} (\mathbf z)_{(ss)} (i \nabla_s r_s) 
        - \mathsf F_{1,1}(\mathbf z)_{\emptyset} (p_0) \\ 
        =&\, \, \,
        J^{(2)}_{\mathbf{I}, p_1,p_0} (\mathbf z) (\Tr(\nabla^2 k))  +
   k^{-1} J^{(1,1)}_{\mathbf{I}, p_1,p_0}(\mathbf z^{(1)} , \mathbf z^{ (2) })
(\Tr((\nabla k) (\nabla  k))) 
    \end{align*}
    where
    \begin{align*}
        J^{(2)}_{\mathbf{I}, p_1,p_0} (z) =
       -  \mathsf G^{(1)}_{\mathrm{pow}}(y) \brac{
            2 \mathsf H_{2,1}(z;m) - \mathsf H_{1,1}(z;m)
        }
    \end{align*}
    and
    \begin{align*}
        J^{(1,1)}_{\mathbf{I}, p_1,p_0} (z_1 ,z_2)   
          &= 
        -2
\brac{ 
    y_1^{-\frac{1}{2}}
    \mathsf G^{(1)}_{\mathrm{pow}}(y_1)
    \mathsf G^{(1)}_{\mathrm{pow}}(y_2) 
    + 
    2 \mathsf G^{(1,1)}_{\mathrm{pow}}(y_1 , y_2) 
} 
\mathsf H_{2,1}(z_2;m) \\
          &+
    2 \mathsf G^{(1,1)}_{\mathrm{pow}}(y_1 , y_2) 
\mathsf H_{1,1}(z_2;m) .
    \end{align*}

\end{lem}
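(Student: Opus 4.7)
\medskip
\noindent\textbf{Proof proposal.}
The plan is to expand $\nabla_s r_s^{\Delta_\varphi}$ and $p_0^{\Delta_\varphi}$ via the chain rule of Lemma \ref{lem:delk1/2-to-delk}, then substitute into the left-hand side and read off the coefficients in front of the two reduced differential expressions $\Tr(\nabla^2 k)$ and $\Tr((\nabla k)(\nabla k))$. This amounts to composing the ``inner'' divided-difference operators $\mathsf G^{(1)}_{\mathrm{pow}}(\mathbf y)$, $\mathsf G^{(1,1)}_{\mathrm{pow}}(\mathbf y^{(1)},\mathbf y^{(2)})$ with the ``outer'' rearrangement operators $\mathsf F_{2,1}(\mathbf z)_{(ss)}$, $\mathsf F_{1,1}(\mathbf z)_{\emptyset}$, which in the conformal reduction of \S\ref{subsec:the-conformal-case-Halpha} collapse to $\mathsf H_{2,1}(z;m)$ and $\mathsf H_{1,1}(z;m)$ respectively.

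First I would compute $i\nabla_s r_s$. Starting from the identity $r_s=-2i\,\mathsf G^{(1)}_{\mathrm{pow}}(\mathbf y)(\nabla_s k)$ (equivalently $r_s=-2ik^{1/2}\nabla_s(k^{1/2})$), the Leibniz rule gives
\begin{equation*}
  i\nabla_s r_s \;=\; 2\bigl[(\nabla_s k^{1/2})(\nabla_s k^{1/2}) + k^{1/2}\nabla_s^2(k^{1/2})\bigr].
\end{equation*}
Lemma \ref{lem:delk1/2-to-delk} turns the right-hand side into a sum of one piece proportional to $\nabla_s^2 k$ (through the first term of $\nabla_s^2(k^{1/2})$) and one piece proportional to $\nabla_s k\otimes\nabla_s k$ (through both the square and the second term of $\nabla_s^2(k^{1/2})$). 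The square $(\nabla_s k^{1/2})^2$ is where the product $\mathsf G^{(1)}_{\mathrm{pow}}(y_1)\mathsf G^{(1)}_{\mathrm{pow}}(y_2)$ in \eqref{eq:JIIp_1p_0-(1-1)} appears, with the asymmetric factor $y_1^{-1/2}$ arising from commuting the central $k^{-1/2}$ through $\mathsf G^{(1)}_{\mathrm{pow}}(\mathbf y)(\nabla_s k)$.

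Next I would likewise rewrite $p_0^{\Delta_\varphi}$ using \eqref{eq:p_0-delphi}, so that the contribution $-\mathsf F_{1,1}(\mathbf z)_\emptyset(p_0)$ also splits into the $\Tr(\nabla^2 k)$-channel (with coefficient $-\mathsf H_{1,1}(z;m)\mathsf G^{(1)}_{\mathrm{pow}}(y)$ up to sign) and the $\Tr((\nabla k)(\nabla k))$-channel (with coefficient $2\mathsf G^{(1,1)}_{\mathrm{pow}}(y_1,y_2)\mathsf H_{1,1}(z_2;m)$ after absorbing the $k^{-1}$ prefactor). Substituting both expansions into the left-hand side and pulling the polynomial-in-$\mathbf y^{(j)}$ operators $\mathsf G^{(1)}_{\mathrm{pow}}$, $\mathsf G^{(1,1)}_{\mathrm{pow}}$ outside of $\mathsf F_{2,1}$ resp. $\mathsf F_{1,1}$ (legal since the inner and outer $\mathbf y^{(j)}$'s are the same conjugation operator on the corresponding slot, and polynomials in them commute), I can collect the $\Tr(\nabla^2 k)$-terms into $J^{(2)}_{\mathbf I,p_1,p_0}(z)$ and the $\Tr((\nabla k)(\nabla k))$-terms into $k^{-1}J^{(1,1)}_{\mathbf I,p_1,p_0}(z_1,z_2)$.

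The main obstacle is the operator-algebraic bookkeeping in the square $(\nabla_s k^{1/2})(\nabla_s k^{1/2})$: one must rewrite a product $k^{-1/2}a\cdot k^{-1/2}b$ with $a,b$ built from $\mathsf G^{(1)}_{\mathrm{pow}}(\mathbf y)(\nabla_s k)$ as a single bilinear expression of the form $k^{-1}H(\mathbf y^{(1)},\mathbf y^{(2)})(\nabla_s k\otimes\nabla_s k)$, and verify that after this rearrangement the two variables $y^{(1)},y^{(2)}$ of the outer $\mathsf F_{2,1}$ coincide with the two slots now carrying $\mathsf G^{(1)}_{\mathrm{pow}}$-factors. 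Once this identification is made, matching with \eqref{eq:JIp_1p_0-(1)}--\eqref{eq:JIIp_1p_0-(1-1)} is a term-by-term comparison, and the stated formulas follow directly.
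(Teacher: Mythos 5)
Your proposal follows essentially the same route as the paper: replace $\mathsf F_{2,1}(\mathbf z)_{(ss)}$ and $\mathsf F_{1,1}(\mathbf z)_\emptyset$ by $\mathsf H_{2,1}(\mathbf z;m)$ and $\mathsf H_{1,1}(\mathbf z;m)$, expand $i\nabla_s r_s = 2\bigl[(\nabla_s k^{1/2})^2 + k^{1/2}\nabla_s^2 k^{1/2}\bigr]$ by Leibniz, then apply the chain rule of Lemma \ref{lem:delk1/2-to-delk} (including the key identity $(\nabla_s k^{1/2})^2 = k^{-1}(\mathbf y^{(1)})^{-1/2}\mathsf G^{(1)}_{\mathrm{pow}}(\mathbf y^{(1)})\mathsf G^{(1)}_{\mathrm{pow}}(\mathbf y^{(2)})(\nabla_s k\otimes\nabla_s k)$) and collect into the $\Tr(\nabla^2 k)$ and $\Tr((\nabla k)(\nabla k))$ channels. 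One small imprecision in your last paragraph: the outer $\mathsf F_{2,1}$ lives at $n=1$ so it carries a single modular operator $\mathbf y$, not two, and the point is that $\mathbf y$ factors as $\mathbf y^{(1)}\mathbf y^{(2)}$ when applied to the two-slot product, which is why its spectral function appears evaluated at $z_2 = 1-y_1y_2$ in $J^{(1,1)}_{\mathbf I,p_1,p_0}$.
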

\begin{proof}
%

By replacing $\mathsf F_{a,b} (\mathbf z)_{(ss)} $ with $\mathsf H_{a,b}
(\mathbf z)$ and plugging in $r_1^{\Delta_\varphi}$ and
$ p_0^{\Delta_\varphi}$, we see that   
\begin{align*}
    &\, \, 
    \sum_s \mathsf F_{2,1} (\mathbf z)_{(ss)} (i \nabla_s r_s) 
        + \mathsf F_{1,1}(\mathbf z)_{\emptyset} (p_0) 
       \\ = &\, \,  
        \brac{
            2 \mathsf H_{2,1}(\mathbf z;m) -
\mathsf H_{1,1}(\mathbf z;m)
        } 
        \brac{
            \sum_s   k^{\frac{1}{2}} \nabla_s^2 k^{\frac{1}{2}}
        } +
2 \mathsf H_{2,1}(\mathbf z;m) 
\brac{
    (\nabla_s k^{\frac{1}{2}})^2
}.
\end{align*}
The results follow from the changing the derivatives (according to Lemma
\ref{lem:delk1/2-to-delk})  $\nabla_s k^{\frac{1}{2}}$ and
$\nabla_s^2 k^{\frac{1}{2}}$ to $\nabla_s k$ and $\nabla_s^2 k$:
\begin{align*}
        ( \nabla_s k^{\frac{1}{2}})^2 
  &= \brac{
      k^{-\frac{1}{2}}
  \mathsf G^{(1)}_{\mathrm{pow}}(\mathbf y)
  (\nabla_s k)
  }^2
  =k^{-1}
(\mathbf y^{(1)})^{-\frac{1}{2}}
    \mathsf G^{(1)}_{\mathrm{pow}}(\mathbf y^{(1)})
    \mathsf G^{(1)}_{\mathrm{pow}}(\mathbf y^{(2)}) 
         \brac{ \nabla_s k \otimes  \nabla_s k },
\\
           k^{\frac{1}{2}}   \nabla_s^2 k^{\frac{1}{2}} 
  &= k^{-1}
\sbrac{
    \mathsf G^{(1)}_{\mathrm{pow}}(\mathbf y)
  \brac{\nabla_s^2 k}
   + 2 \mathsf G^{(1,1)}_{\mathrm{pow}}(\mathbf y^{(1)}, \mathbf y^{(2)}) 
  }
  \brac{(\nabla_s k) \otimes  (\nabla_s k)}.
    \end{align*}
\end{proof}

\begin{lem}
    \label{lem:JIIL6-L10}
In Theorem $\ref{thm:diag-v2P-II}$, terms involving $p_1^{\Delta_\varphi}(\xi)$
give rise to the following contribution in the conformal case$:$
    \begin{align*}
        v_2(\Delta_\varphi)_{\mathbf{II},L^{(6)}, \ldots, L^{(10)}} =
        J_{\mathbf{II},L^{(6)}, \ldots, L^{(10)}} 
        (\mathbf z^{(1)}, \mathbf z^{(2)})
    \brac{
        \Tr((\nabla  k) ( \nabla  k ))
    }
    \end{align*}
    where the two variable function
    \begin{align*}
       &\, \,  J_{\mathbf{II},L^{(6)}, \ldots, L^{(10)}} (z_1, z_2)      \\
        =&\, \, 
        \sbrac{
         (2+m)
         \brac{
         \mathsf G_{\mathrm{pow}}^{(1)}(y_1) + 
         \mathsf G_{\mathrm{pow}}^{(1)}(y_2) 
         }
        } 
        \mathsf H_{2,1,1}(z_1, z_2;m) 
        +
        (2+m)
        \mathsf G_{\mathrm{pow}}^{(1)}(y_1)
        (1-z_1)
        \mathsf H_{1,2,1} (z_1, z_2;m) \\
        -&\, \, 
        \brac{
        m \mathsf G_{\mathrm{pow}}^{(1)}(y_1)
         +2 \mathsf G_{\mathrm{pow}}^{(1)}(y_1)
         \mathsf G_{\mathrm{pow}}^{(1)}(y_2)
        }
        \mathsf H_{1,1,1} (z_1, z_2;m).
    \end{align*}
\end{lem}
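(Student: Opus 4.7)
The plan is a direct substitution into the diagonal-case formula \eqref{eq:diag-v_2P-L6-L10} of Theorem \ref{thm:diag-v2P-II-V1}, followed by reduction to the conformal scalar case. First I specialize by setting $k_s = k$ for all $s$, which collapses every component $\mathsf F_\alpha(\mathbf z)_{s_1,\ldots,s_N}$ to a single hypergeometric integral $\mathsf H_\alpha(\mathbf z^{(1)}, \mathbf z^{(2)}; m)$, turns the scalar $\mathbf z_s^{(1)}$ into $z_1$, and absorbs every $k_s^{-1}$ into the common overall factor $k^{-1}$. Summations over free indices that no longer appear in the spectral function produce multiplicity factors: a summation over an index absent from the component yields a factor of $m$, while the sum over $s$ converts $(\nabla_s k)\otimes(\nabla_s k)$ into $\Tr((\nabla k)(\nabla k))$ as in \eqref{eq:Tr-nabla^2k-and-nablak}.

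The second step is to eliminate $r_s$ in favor of $\nabla_s k$ via \eqref{eq:r_s-delphi}, namely $r_s = -2i\,\mathsf G^{(1)}_{\mathrm{pow}}(\mathbf y)(\nabla_s k)$. Since $r_s$ occupies one of the two tensor slots in each of $L^{(6)},\ldots,L^{(10)}$, the scalar function $\mathsf G^{(1)}_{\mathrm{pow}}$ is hoisted outside the tensor by promoting $\mathbf y$ to its lifted version acting on the slot originally occupied by $r_s$: the entries $L^{(6)}_s$ and $L^{(7)}_{s,l}$ yield a factor $2\,\mathsf G^{(1)}_{\mathrm{pow}}(\mathbf y^{(2)})$; $L^{(8)}_s$ and $L^{(9)}_{s,l}$ yield $2\,\mathsf G^{(1)}_{\mathrm{pow}}(\mathbf y^{(1)})$; and $L^{(10)}_s$ produces $-4\,\mathsf G^{(1)}_{\mathrm{pow}}(\mathbf y^{(1)})\,\mathsf G^{(1)}_{\mathrm{pow}}(\mathbf y^{(2)})$, with the imaginary $i$-prefactors of the $L$'s combining with the $-2i$ from $r_s$ to give real coefficients.

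Composing these lifted $\mathsf G^{(1)}_{\mathrm{pow}}$-operators with the spectral functions $\mathsf H_{2,1,1}$, $\mathsf H_{1,2,1}$ and $\mathsf H_{1,1,1}$ appearing in \eqref{eq:diag-v_2P-L6-L10}, and expressing everything in the variables $y_1 = 1-z_1$ and $y_1 y_2 = 1-z_2$ dictated by \eqref{eq:bfz-vs-bfy-conformal}, the five lines of \eqref{eq:diag-v_2P-L6-L10} regroup into the three terms of $J_{\mathbf{II},L^{(6)},\ldots,L^{(10)}}$ stated in the lemma. In particular, the $(1-z_1)$ prefactor of $\mathsf H_{1,2,1}$ survives directly from the $(1-\mathbf z_s^{(1)})$ factor in line three of \eqref{eq:diag-v_2P-L6-L10}, and the multiplicity factors of $(m+2)$ and $m$ come from the combinatorics of the index sums together with the numerical coefficients $1$, $\tfrac12$ appearing in \eqref{eq:diag-v_2P-L6-L10}.

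The main obstacle is purely book-keeping: correctly tracking which tensor slot each $\mathsf G^{(1)}_{\mathrm{pow}}$-factor acts upon (and hence whether it pairs with $\mathbf y^{(1)}$ or $\mathbf y^{(2)}$), and identifying the multiplicity contributed by the free index summations that disappear under the conformal reduction. Once this is handled, the assertion reduces to matching coefficients of $\mathsf H_{2,1,1}$, $\mathsf H_{1,2,1}$ and $\mathsf H_{1,1,1}$.
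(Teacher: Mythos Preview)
Your proposal is correct and follows the same approach as the paper: substitute $r_s = -2i\,\mathsf G^{(1)}_{\mathrm{pow}}(\mathbf y)(\nabla_s k)$ into each $L^{(j)}$, pass to the conformal reduction $\mathsf F_\alpha \to \mathsf H_\alpha$ with the multiplicity factors coming from the free index sums, and collect coefficients of $\mathsf H_{2,1,1}$, $\mathsf H_{1,2,1}$, $\mathsf H_{1,1,1}$. Your slot assignments (\,$L^{(6)},L^{(7)}\to \mathbf y^{(2)}$ and $L^{(8)},L^{(9)}\to \mathbf y^{(1)}$\,) are in fact the ones that reproduce the stated $J_{\mathbf{II},L^{(6)},\ldots,L^{(10)}}$; the paper's displayed proof has these superscripts interchanged, but its final formula agrees with your assignment.
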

\begin{proof}
    With $ r_s^{\Delta_\varphi}  = -2 i 
 \mathsf G_{\mathrm{pow}}^{(1)}(\mathbf y) (\nabla_s k) $, 
 we first turn  $L_s^{(6)}$  to $L_s^{(10)}$ in \cref{eq:L6-L10}
 into $\Tr((\nabla k) (\nabla  k))$: 
    \begin{align*}
  &      \sum_{s=1}^m  L_s^{(10)} = - 4 
 \mathsf G_{\mathrm{pow}}^{(1)}(\mathbf y^{(1)}) 
 \mathsf G_{\mathrm{pow}}^{(1)}(\mathbf y^{(2)}) 
  \Tr((\nabla k) (\nabla  k)) \\
  &      \sum_{s=1}^m  L_s^{(6)} = 2
 \mathsf G_{\mathrm{pow}}^{(1)}(\mathbf y^{(1)}) 
\Tr((\nabla k) (\nabla  k)), \, \, \,
        \sum_{s,l=1}^m  L_s^{(7)} = 2 m
 \mathsf G_{\mathrm{pow}}^{(1)}(\mathbf y^{(1)}) 
\Tr((\nabla k) (\nabla  k)), \, \, \, \\
&
        \sum_{s=1}^m  L_s^{(8)} = 2 
 \mathsf G_{\mathrm{pow}}^{(1)}(\mathbf y^{(2)}) 
\Tr((\nabla k) (\nabla  k)), \, \, \,
        \sum_{s,l=1}^m  L_s^{(9)} = 2 m
 \mathsf G_{\mathrm{pow}}^{(1)}(\mathbf y^{(2)}) 
\Tr((\nabla k) (\nabla  k)). \, \, \,
    \end{align*}
Then we repeat reduction the process  used in the proof of \cref{prop:v2-Delk}
to complete the calculation.    
\end{proof}

\section{Verification of the Functional Relations}
\label{sec:Verification}
Last but not least, we provide some conceptual validation for the our lengthy
computation by  carefully examining 
the relations in
\cref{eq:fn-relation-I,eq:fn-relation-II} bases on their explicit expressions 
given in terms of   $\mathsf G_{\mathrm{pow}}^{(1)}$, 
$\mathsf G_{\mathrm{pow}}^{(1,1)}$ 
and the hypergeometric family $\mathsf H_\alpha(\bar z;m)$
(cf.  
\cref{eq:Gdelk-I,eq:Gdelk-II,eq:JIp_1p_0-(1),eq:JIIp_1p_0-(1-1),eq:JIIL_6L_10}).

\subsection{Preparations}
\label{subsec:preparation}

First of all, let us switch to the hypergeometric family $H_\alpha(\bar z;m)$
used in in \cite[\S5]{Liu:2018ab}:
\begin{align*}
    \mathsf H_\alpha(\bar z;m) = 
    \frac{1}{\Gamma \brac{ \sum_{0}^n\alpha_j + \frac{m}{2} -2}}
    H_\alpha(\bar z ;m), 
    \, \, \alpha = (\alpha_0, \ldots, \alpha_n),
\end{align*}
so that we can freely quote 
formulas listed there  without modifications.
The variables $\{z, z_1 ,z_2\}$ always denote the following change of
variable of  $\{y, y_1,y_2\}$:
\begin{align*}
    z = 1-y, \, \,  z_1 = 1- y_1, \, \,  z_2 = 1- y_1 y_2.
\end{align*}
For example,
\begin{align*}
    H_{a,b}(z;m) = H_{a,b}(1-y;m), \, \, 
    H_{a,b,c}(z_1, z_2;m) = H_{a,b,c}(1-y_1, 1- y_1 y_2;m).  
\end{align*}
We often drop the arguments and write $H_{a,b}$ or $H_{a,b,c}$ when no
confusion arises.

The two cyclic transformations are crucial:
\footnote{
In the setting of \cite[\S2]{Liu:2018ab}, 
    the cyclic transformations arise from integration by parts with respect to
    the modular operator. }
    \begin{align}
    \label{eq:tau1-and-2-defn}
        \begin{split}
            &   \tau_{(1)}: C(\R_+)\to  C(\R_+), 
        \, \, \tau_{(1)}(f) (y) = f(y^{-1})
\\
     &      \tau_{(2)}: C(\R_+^2 )\to  C(\R_+^{2}), \, \,       
     \tau_{(2)}(\tilde f) (y_1, y_2) = \tilde f( (y_1y_2)^{-1}, y_1)
        \end{split}
    \end{align}
defined on functions in $y$ and in $(y_1 ,y_2)$ respectively.  
We see  immediately that $\tau_{(1)}^2 =1$. For $\tau_{(2)}$, 
we have  $\tau_{(2)}$ and $\tau_{(2)}^2$  are implemented by the substitutions:
\begin{align}
    \label{eq:tau2-and-tau2-sqr-defn}
    y_1 \xrightarrow[]{\tau_{(2)}} (y_1 y_2)^{-1}
    \xrightarrow[]{\tau_{(2)}^2} y_2, 
    \, \, \, \, 
    y_2 \xrightarrow[]{\tau_{(2)}} y_1 
    \xrightarrow[]{\tau_{(2)}^2} (y_1 y_2)^{-1}, 
\, \, \, \, 
y_1 y_2 \xrightarrow[]{\tau_{(2)}} y_2^{-1}
\xrightarrow[]{\tau_{(2)}^2} y_1^{-1}
,
\end{align}
and $\tau_{(2)}^3 =1$. 
Notice that, $\tau_{(1)}$ (resp. $\tau_{(2)}$) becomes linear fractional
transformation with respect to $z$ (resp. $z_1, z_2$).
The key feature  of the cyclic permutations is the fact that 
(cf. \cite[Prop. 5.1]{Liu:2018ab}) 
 they permute the components of $\alpha$ in $H_{\alpha}(\bar z;m)$:
 \begin{align}
     \label{eq:tau1-on-Hab}
     \tau_{(1)}(H_{a,b}) &= (1-z)^{a+b+m /2 -2}H_{b,a}, 
     \\
     \label{eq:tau2-on-Habc}
     \tau_{(2)}(H_{a,b,c}) &=
     (1-z_2)^{a+b+c + m /2 -2}H_{b,c,a} .
 \end{align}
 We will also need  the formulas of $\tau_{(2)}^2$: 
 \begin{align}
     \label{eq:tau2-sqr-on-Habc}
     \tau_{(2)}^2 (H_{a,b,c})(z_1 , z_2;m) &=
     (1-z_1)^{a+b+c + m /2 -2}
     H_{c,a,b} (z_1 , z_2;m),
\\
     \label{eq:tau2-sqr-on-Hab}
\tau_{(2)}^2 \brac{
    H_{a,b}(z_2;m)
 } 
& =  (1-z_1)^{a+b + m /2 -2}
H_{b,a}(z_1;m).
 \end{align}
Note that \cref{eq:tau2-sqr-on-Hab} follows from \cref{eq:tau1-on-Hab} 
with the substitutions in \cref{eq:tau2-and-tau2-sqr-defn}:
\begin{align*}
 \tau_{(2)}^2 \brac{
    H_{a,b}( z_2 ;m)
} &=
 \tau_{(2)}^2 \brac{
    H_{a,b}(1-y_1y_2;m)
 }   
 = H_{a,b}(1-y_1^{-1}  ;m) \\
  & = \tau_{(1)}(H_{a,b})(z_1;m)
  =(1-z_1)^{a+b+\frac{m}{2 } -1} 
  H_{b,a}(z_1;m) .
\end{align*}


We now turn to recurrence relations. The goal is to
express $H_{a,b,c}$  and $H_{a,b}(z_1;m)$ in terms of 
$\{H_{1,2,1}, H_{1,1,2}, H_{1,1,1}\}$. 
For two variable functions, we need:
\begin{align}
        \label{eq:remove-H_131}
    H_{1,3,1}&= 
 \frac{ ((m+6) (1-y_1)-6) H_{1,2,1}  
 +(m+2) H_{1,1,1}+2 (1-y_1 y_2) \left(H_{1,1,2}-y_1 H_{1,2,2}\right)}
 {4 (1-y_1) y_1}   ,
            \\
    H_{1,2,2} &= \frac{H_{1,2,1}-H_{1,1,2}}{z_1-z_2}.
        \label{eq:remove-H122}
\end{align}
Therefore, $\forall  \eta_{1,1,3}, \eta_{1,2,2} \in  \mathbb{C}$:
\begin{align}
    \label{eq:H122-H113-coef-remove}
    \eta_{1,1,3} H_{1,1,3} + \eta_{1,2,2} H_{1,2,2} &= 
    \sbrac{
    \brac{
        \frac{m}{4 y_1} + \frac{1}{y_1 -1} - \frac{1}{2} \frac{1}{y_1 (y_2 -1)}
    } \eta_{1,1,3} 
    + \frac{\eta_{1,2,2}}{y_1(y_2-1)}
    }
    H_{1,2,1}\\
                                                    & + 
                                                    \sbrac{
        \frac{-1}{ y_1 (y_2 -1)} \eta_{1,2,2} 
        + \frac{y_2 (y_1 y_2 -1)}{2 y_1 (y_1-1) (y_2 -1)} \eta_{1,1,3}
    } H_{1,1,2}   
                                                    \nonumber
    \\
                                                    &+ \sbrac{
 \frac{-(2+m)}{4y_1(y_1-1)} \eta_{1,1,3}                                              
                                                    } H_{1,1,1}.
                                                    \nonumber
\end{align}
The connection between one and two variable families is given by the divided
difference operation:
\begin{align*}
    H_{a,1,1}(z_1, z_2;m) = (z H_{a+1,1}(z;m))[z_1 , z_2]_z 
    = \frac{z_1 H_{a+1,1}(z_1;m) - z_2 H_{a+1,1}(z_2;m) }{z_1 - z_2}.
\end{align*}
Set $a=1$ and apply $\partial_{z_1}$  or $\partial_{z_2}$ on both sides, we see that
\begin{align*}
    H_{1,2,1} = \partial_{z_1} H_{1,1,1} = 
  \frac{H_{1,2}\left(z_1;m\right)-H_{1,1,1}}{z_1-z_2}, 
  \, \, \, \, 
H_{1,1,2} = \partial_{z_1} H_{1,1,1} = 
  \frac{H_{1,2}\left(z_2;m\right)-H_{1,1,1}}{z_2-z_1}.  
\end{align*}
As a consequence, we obtain
\begin{align}
    \label{eq:H12-z_1-z_2-to-Habc}
    H_{1,2}(z_1;m) =  (z_1 - z_2) H_{1,2,1}+H_{1,1,1}
    ,\, \, \, \, 
    H_{1,2}(z_2;m) =  (z_2 - z_1) H_{1,1,2}+H_{1,1,1} .
\end{align}
To get $H_{2,1}(z_1;m)$, notice that  
\begin{align*}
    \tau_{(2)}^2 \brac{ H_{1,2}(z_2;m)} & = 
    \tau_{(2)}^2 \brac{ H_{1,2}(1 - y_1y_2;m)} = H_{1,2}(1- y_1^{-1} ;m)
     \\
                                        & = \tau_{(1)}(H_{1,2})(z_1) 
     = y_1^{1+\frac{m}{2}} H_{2,1}(z_1 ;m) 
    ,    
\end{align*}
thus
\begin{align*}
    H_{2,1}(z_1 ;m) &= 
    y_1^{-1 - \frac{m}{2}} 
    \tau_{(2)}^2 \brac{ H_{1,2}(z_2;m)} = 
    y_1^{-1 - \frac{m}{2}} 
    \tau_{(2)}^2 \brac{
    (z_2 - z_1) H_{1,1,2}+H_{1,1,1}
    } \\
  &= (y_1 y_2 -1) y_1 H_{2,1,1}  + H_{1,1,1},
\end{align*}
where we have used \cref{eq:tau2-on-Habc,eq:tau2-and-tau2-sqr-defn} to reach
the last equal sign. Next, we replace $H_{2,1,1}$  according to 
\begin{align*}
    H_{2,1,1} = \frac{m+2}{2} H_{1,1,1} - y_1 y_2 H_{1,1,2} - y_1 H_{1,2,1}.
\end{align*}
The final result reads:
\begin{align}
    \label{eq:H_21-as-Habc}
    H_{2,1}(z_1;m) =  \left[
        1- \frac{m+2}{2} z_2 
\right] H_{1,1,1}
    +z_2 (1-z_2) 
    H_{1,1,2}
    +z_2 (1-z_1) H_{1,2,1}
    .
\end{align}
Sum up,
\begin{align}
    \label{eq:H_21-H_12-z_1-as-Habc}
    \brac{
    \eta_{2,1} H_{2,1} + \eta_{1,2} H_{1,2}
}(z_1;m) & =
y_1 
\sbrac{
    (y_2 -1) \eta_{1,2} - (y_1 y_2 -1) \eta_{2,1}
}
H_{1,2,1} \\
         & +
\sbrac{
    y_1y_2 (1 -y_1y_2) \eta_{2,1}
}
H_{1,1,2} 
\nonumber
\\
         & +
\sbrac{
    \eta_{1,2} + \frac{1}{2} \brac{
        m(y_1y_2 -1) + 2 y_1y_2 
    } \eta_{2,1}
}
H_{1,1,1}   .
\nonumber
\end{align}

\subsection{Verification I}
Let us look at the following function
\begin{align*}
    V_{\mathrm{I}}(z) = -z \brac{ \frac{4}{m} H_{3,1}(z;m) - H_{2,1}}
    +\frac{2}{m} H_{2,1}(z;m) - H_{1,1}(z;m),
\end{align*}
which is obtained by taking 
the difference of the two sides of \cref{eq:fn-relation-I} 
with the substitution 
\begin{align*}
  \mathsf H_{a,b} \to  H_{a,b}/ \Gamma(a+b+ m / 2-2).  
\end{align*}
A common factor $\Gamma(m / 2)$ is dropped since we only care about
$V_{\mathrm{I}} =0 $ or not. 
 It is easier to show that $\tau_{(1)}(V_{\mathrm{I}}) =0$ where the cyclic
 transformation $\tau_{(1)}$ is given in \cref{eq:tau1-and-2-defn}. Indeed,
 according to \cref{eq:tau1-on-Hab,eq:tau1-and-2-defn}, we have
\begin{align}
    \label{eq:sigma-VI=0}
    (1-z)^{-m /2} \tau_{(1)}
    (V_{\mathrm{I}})(z) =
    \frac{4z}{m} (1-z) H_{1,3}(z;m) + 
    \sbrac{-z +\frac{4}{m} (1-z)} H_{1,2}(z;m) - H_{1,1}(z;m), 
\end{align}
where the right hand side vanishes because it is 
the hypergeometric ODE of $H_{1,1}(z;m)$ (cf. \cite{Liu:2018ab}).

\subsection{Verification II}
\label{subsec:verification-II}


The verification of \cref{eq:fn-relation-II} is much more involved compared to
the one-variable case in previous section.
As before, we begin with change of notations $\mathsf H_{\alpha} \to  H_\alpha$,
the functions 
in \cref{eq:Gdelk-II,eq:JIIp_1p_0-(1-1),eq:JIIL_6L_10} are turned into 
\begin{align}
    \label{eq:sf-JII-defn}
      \mathsf J_{\mathrm{II}}  &=
\sbrac{
     \brac{
         \mathsf G_{\mathrm{pow}}^{(1)} (y_1) 
         + \mathsf G_{\mathrm{pow}}^{(1)} (y_2)
     } H_{2,1,1}
     + 
        y_1 \mathsf G_{\mathrm{pow}}^{(1)} (y_1)
     H_{1,2,1}
     - \frac{1}{2}
     \mathsf G_{\mathrm{pow}}^{(1)} (y_1) \brac{
     m+ 2
         \mathsf G_{\mathrm{pow}}^{(1)} (y_2)
     }
     H_{1,1,1}
     } ,
     \\
\label{eq:sf-GII-defn}
      \mathsf G_{\mathrm{II}}  &=
        \sbrac{ (y_1 y_2)^{\frac{1}{2}} -1}   
        \brac{
        \frac{2+m}{2} H_{2,1,1} - 2 H_{3,1,1} - y_1 H_{2,2,1}
        } ,
        \\
     \label{eq:sf-JI-defn}
      \mathsf J_{\mathrm{I}}  &=
      -\sbrac{
          2 \mathsf G_{\mathrm{pow}}^{(1,1)}(y_1,y_2) +
          y_1^{-\frac{1}{2}} 
          \mathsf G_{\mathrm{pow}}^{(1)}(y_1)
          \mathsf G_{\mathrm{pow}}^{(1)}(y_2)
      } H_{2,1} (z_2;m) 
      +
           \frac{m}{2} 
          \mathsf G_{\mathrm{pow}}^{(1,1)}(y_1,y_2)  
      H_{1,1} (z_2;m),
\end{align}
where we have suppressed the arguments of the functions $\mathsf J_{\mathrm{II}}$,
$\mathsf J_{\mathrm{I}}$, $\mathsf G_{\mathrm{II}}$  and $H_{a,b,c}$, for example  
$\mathsf J_{\mathrm{II}} \defeq \mathsf J_{\mathrm{II}}(z_1,z_2;m)$.
A common factor has been factored out, that is
\begin{align*}
    \mathsf J_{\mathrm{II}} = C_m
     J_{\mathbf{II}, L^{(6)}, \ldots, L^{10}}, 
     \, \, \, \,   
 \frac{\mathsf G_{\mathrm{II}}}{(y_1 y_2)^{\frac{1}{2}} -1}
=
 C_m G_{\Delta_k, \mathbf{II}} , \, \, \, \, 
\mathsf J_{\mathrm{I}} =
C_m  J^{1,1}_{\mathbf{I}, p_1, p_0} ,
\end{align*}
where $ C_m = \Gamma(m / 2 +1)/ 2$. 
We apply  
the cyclic transformation $\tau_{(2)}^2$ (cf. \cref{eq:tau2-and-tau2-sqr-defn})
on both sides of \cref{eq:fn-relation-II} so that all the hypergeometric pieces 
$H_{a,b,c}$, $H_{a,b}$ appeared in \cref{eq:sf-JI-defn,eq:sf-GII-defn,eq:sf-JII-defn}
start with $1$ in the subscripts, that is, of the form
$H_{1,a,b}$ or $H_{1,a}$.   
\begin{prop}
 Keep the notations as above, we have   
    \begin{align}
        \tau_{(2)}^2 \brac{
            \mathsf G_{\mathrm{II}} -
            \mathsf J_{\mathrm{II}}
        } =
        \tau_{(2)}^2 \brac{ \mathsf J_{\mathrm{I}} }
        .
        \label{eq:GII-JII=JI}
\end{align}
In particular, \cref{eq:fn-relation-II} holds as well.
\end{prop}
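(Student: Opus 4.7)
The plan is to reduce both sides of \cref{eq:GII-JII=JI} to a common basis of linearly independent two-variable hypergeometric functions $\{H_{1,2,1}, H_{1,1,2}, H_{1,1,1}\}$ over the field $\mathbb{Q}(m, y_1^{1/2}, y_2^{1/2})$, and then match coefficient-by-coefficient. I would first apply $\tau_{(2)}^2$ termwise to $\mathsf{G}_{\mathrm{II}}$, $\mathsf{J}_{\mathrm{II}}$, $\mathsf{J}_{\mathrm{I}}$ using \cref{eq:tau2-sqr-on-Habc,eq:tau2-sqr-on-Hab}: each three-index $H_{a,b,c}(z_1,z_2;m)$ is replaced by $(1-z_1)^{a+b+c+m/2-2} H_{c,a,b}(z_1,z_2;m)$ and each $H_{a,b}(z_2;m)$ by $(1-z_1)^{a+b+m/2-2} H_{b,a}(z_1;m)$, while the scalar weights $\mathsf{G}^{(1)}_{\mathrm{pow}}(y_i)$ and $\mathsf{G}^{(1,1)}_{\mathrm{pow}}(y_1,y_2)$, being rational in $y_1^{1/2}, y_2^{1/2}$, transform under $y_1 \mapsto (y_1y_2)^{-1}$, $y_2 \mapsto y_1$ as prescribed by \cref{eq:tau2-and-tau2-sqr-defn}. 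After this step every surviving hypergeometric piece has first index $1$: the three-index pieces in $\mathsf{G}_{\mathrm{II}}$ and $\mathsf{J}_{\mathrm{II}}$ become combinations of $\{H_{1,2,1}, H_{1,1,1}, H_{1,1,3}, H_{1,2,2}\}$, while the two-index pieces in $\mathsf{J}_{\mathrm{I}}$ become $H_{1,2}(z_1;m)$ and $H_{1,1}(z_1;m)$ with explicit prefactors.

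Next, I would eliminate the non-basis pieces. The contiguous identities \cref{eq:remove-H_131,eq:remove-H122,eq:H122-H113-coef-remove} express $H_{1,2,2}$ and $H_{1,1,3}$ (the latter either directly or via a $\tau_{(2)}$-shifted analogue of the derivation of $H_{1,3,1}$) in terms of the basis. The one-variable term $H_{1,2}(z_1;m)$ is removed by the divided-difference relation \cref{eq:H12-z_1-z_2-to-Habc}; the remaining $H_{1,1}(z_1;m)$ is handled by the same divided-difference mechanism, namely $H_{1,1,1}(z_1,z_2;m) = (z H_{2,1}(z;m))[z_1,z_2]_z$ together with \cref{eq:H_21-as-Habc}, yielding an expression for $H_{1,1}(z_1;m)$ in terms of $H_{1,2,1}, H_{1,1,2}, H_{1,1,1}$ with rational coefficients in $z_1, z_2$. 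A useful simplification at the outset is to multiply through by $y_1^{-m/2}$ so that the common $(1-z_1)^{m/2}$ factors introduced by $\tau_{(2)}^2$ collapse into polynomial weights.

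Once both sides of $\tau_{(2)}^2\bigl(\mathsf{G}_{\mathrm{II}} - \mathsf{J}_{\mathrm{II}} - \mathsf{J}_{\mathrm{I}}\bigr)$ are expressed as $\mathbb{Q}(m, y_1^{1/2}, y_2^{1/2})$-linear combinations of $\{H_{1,2,1}, H_{1,1,2}, H_{1,1,1}\}$, linear independence of the basis (evident from their distinct behaviors at the boundary of $\triangle^2$ or from the partial-fraction expansion of the defining integrals) reduces the target equality to three rational-function identities. Each of these involves only the explicit elementary building blocks $\mathsf{G}^{(1)}_{\mathrm{pow}}(y) = (y^{1/2}-1)/(y-1)$ and $\mathsf{G}^{(1,1)}_{\mathrm{pow}}(y_1,y_2) = [\mathsf{G}^{(1)}_{\mathrm{pow}}(y_1y_2) - \mathsf{G}^{(1)}_{\mathrm{pow}}(y_1)]/(y_1y_2 - y_1)$, so each amounts to clearing denominators and checking the vanishing of a polynomial in $y_1^{1/2}, y_2^{1/2}, m$, which is a direct (albeit tedious) manipulation, ideally corroborated by computer algebra. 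The main obstacle is purely combinatorial rather than conceptual: each reduction roughly doubles the term count, so careful early factorization of common prefactors is essential to keep the computation tractable. Finally, since $\tau_{(2)}^2$ is invertible on the relevant function space, \cref{eq:GII-JII=JI} forces $\mathsf{G}_{\mathrm{II}} - \mathsf{J}_{\mathrm{II}} = \mathsf{J}_{\mathrm{I}}$, and dividing by the common constant $C_m$ recovers \cref{eq:fn-relation-II} as claimed.
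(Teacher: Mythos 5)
Your overall strategy matches the paper's: apply $\tau_{(2)}^2$ to both sides, reduce everything to a $\mathbb{Q}(m,y_1^{1/2},y_2^{1/2})$-linear combination of $\{H_{1,2,1},H_{1,1,2},H_{1,1,1}\}$ via \cref{eq:remove-H_131,eq:remove-H122,eq:H122-H113-coef-remove,eq:H12-z_1-z_2-to-Habc,eq:H_21-as-Habc,eq:H_21-H_12-z_1-as-Habc}, and compare coefficients. This is precisely how the paper organizes the argument (\cref{lem:msf-GII-JII-to-Habc} and \cref{lem:msf-JI-to-Habc}), so you have the right architecture.

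There is, however, one step that will not work as stated: you propose to reduce $H_{1,1}(z_1;m)$ via the divided-difference identity $H_{1,1,1}(z_1,z_2;m)=(zH_{2,1}(z;m))[z_1,z_2]_z$ together with \cref{eq:H_21-as-Habc}. That identity relates $H_{1,1,1}$ to $H_{2,1}(z_1)$ and $H_{2,1}(z_2)$; neither of those, nor \cref{eq:H_21-as-Habc}, involves $H_{1,1}(z_1)$. What you actually need is a separate \emph{one-variable} contiguous relation --- the paper uses $H_{1,1}=\frac{2}{m}\bigl((1-z)H_{1,2}+H_{2,1}\bigr)$ from \cite{Liu:2018ab} --- to trade $H_{1,1}(z_1)$ for $H_{1,2}(z_1)$ and $H_{2,1}(z_1)$, which are then sent into the basis by \cref{eq:H12-z_1-z_2-to-Habc,eq:H_21-as-Habc}. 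Without such a relation your reduction of $\tau_{(2)}^2(\mathsf J_{\mathrm I})$ will be incomplete.

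A second, softer point: the final coefficient matching is not best treated as ``clear denominators and check a polynomial identity.'' The reason the three coefficient equalities hold is structural: the weights $\mathsf G^{(1)}_{\mathrm{pow}}$ and $\mathsf G^{(1,1)}_{\mathrm{pow}}$ are first and second divided differences of $u^{1/2}$, and the cancellations are driven by the symmetry of $f[x_0,\dots,x_n]$ in its arguments and the homogeneity law $f[cx_0,\dots,cx_n]=c^{j-n}f[x_0,\dots,x_n]$, giving identities such as $\mathsf G^{(1)}_{\mathrm{pow}}(y^{-1})=y^{1/2}\mathsf G^{(1)}_{\mathrm{pow}}(y)$ and two alternative denominator expressions for $\mathsf G^{(1,1)}_{\mathrm{pow}}(y_2,(y_1y_2)^{-1})$ (see \cref{eq:inproof-Gpow1-y-vs-yinvs,eq:inproof-Gpow11-two-denominators}). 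Exploiting that structure is what makes the verification tractable by hand; a bare rational-function check would of course also work in principle, but you would be rediscovering these identities in the course of clearing denominators.

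Fix the mechanism for $H_{1,1}(z_1)$ and the plan is sound; the remaining bookkeeping is exactly what \cref{lem:msf-GII-JII-to-Habc,lem:msf-JI-to-Habc} carry out.
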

\begin{proof}
 The verification is arranged as follows.
First, we express the two sides of \cref{eq:GII-JII=JI} as combinations of
$\{H_{1,2,1}, H_{1,1,2}, H_{1,1,1}\} $. The calculation and the results are is
postponed to \cref{lem:msf-GII-JII-to-Habc,lem:msf-JI-to-Habc} respectively.
The rest is to show that the corresponding coefficients are indeed equal.
We will take advantage of of the fact that $\mathsf G_{\mathrm{pow}}^{(1)}$  
and $\mathsf G_{\mathrm{pow}}^{(1,1)}$ are divided difference of the power
function $u^j$ (here $j= 1 / 2$, cf. \cref{lem:delk1/2-to-delk}).  
Recall that 
\begin{enumerate}
    \item A divided difference $f[x_0, \ldots, x_{n}]$ is symmetric in its
        $n+1$ arguments.
    \item If the function $f$ is homogeneous of degree $j \in  R$, namely $f(c
        y) = c^j f(y)$, $\forall c>0$, then the $n$-th divided difference is of
        homogeneity $j-n$, that is
        $
        f[c x_0, \ldots, c x_{n} ] = c^{j-n} f[x_0, \ldots, x_{n}]  
        $. 
\end{enumerate}
  Therefore, we have the following identities that are repeatedly applied in
  the proof:
  \begin{align}
      \label{eq:inproof-Gpow1-y-vs-yinvs}
      \mathsf G_{\mathrm{pow}}^{(1)} (y^{-1}) =
      u^{\frac{1}{2}}[1, y^{-1}]_u = 
      y^{\frac{1}{2}} 
      u^{\frac{1}{2}}[y, 1]_u = 
      y^{\frac{1}{2}} 
      \mathsf G_{\mathrm{pow}}^{(1)} ( y ) 
  \end{align}
  and
   \begin{align}
      \label{eq:inproof-Gpow11-two-denominators}
       &\, \, \, 
       \mathsf G_{\mathrm{pow}}^{(1,1)} (y_2, (y_1y_2)^{-1} ) = 
       u^{\frac{1}{2}}[1, y_2, y_2 (y_1y_2)^{-1}]_u =
       y_1^{-\frac{1}{2} +2}  u^{\frac{1}{2}}[y_1, y_2 y_1,1]_u
       \\
       =&\, \, 
y_1^{\frac{3}{2} }
       \frac{
       u^{\frac{1}{2}}[y_1y_2, y_1]_u -
       u^{\frac{1}{2}}[ 1, y_1]_u 
       }{y_1 y_2 -1}
       =
y_1^{\frac{3}{2} }
       \frac{
u^{\frac{1}{2}}[y_1y_2, 1]_u -
       u^{\frac{1}{2}}[ 1, y_1]_u 
       }{y_1 y_2 - y_1}
       ,
       \nonumber
   \end{align}
   in which
\begin{align*}
 &      u^{\frac{1}{2}}[ 1, y_1]_u =
    \mathsf G_{\mathrm{pow}}^{(1)} (y_1),  
       \, \, \, \, 
       u^{\frac{1}{2}}[ 1, y_1 y_2]_u = 
    \mathsf G_{\mathrm{pow}}^{(1)} (y_1 y_2) , 
       \\
 &       
       u^{\frac{1}{2}}[y_1y_2, y_1]_u =
y_1^{-\frac{1}{2}}
       u^{\frac{1}{2}}[ y_2, 1 ]_u =
y_1^{-\frac{1}{2}}
   \mathsf G_{\mathrm{pow}}^{(1)} (y_2) . 
\end{align*}

  Now we are ready to simplify the functions $ \tilde c_{1,1,1}$  $ \tilde
  c_{1,1,2}$ and $ \tilde c_{1,2,1}$ defined in
   \cref{eq:til-c111-inproof,eq:til-c112-inproof,eq:til-c121-inproof} to the
   corresponding coefficients given in \cref{lem:msf-GII-JII-to-Habc}, 
   starting with $ \tilde c_{1,1,2}$: 
\begin{align*}
    \tilde c_{1,1,2} =
      y_1 y_2 (1- y_1y_2) 
\mathsf G_{\mathrm{pow}}^{(1,1)} (y_2, (y_1y_2)^{-1}) =
y_1^2 y_2 \brac{
    y_1^{\frac{1}{2}}
   \mathsf G_{\mathrm{pow}}^{(1)} (y_1) - 
   \mathsf G_{\mathrm{pow}}^{(1)} (y_2)  
},
\end{align*}
which is equal to the coefficient of $H_{1,1,2}$  in \cref{lem:msf-GII-JII-to-Habc}.
Notice that \cref{eq:inproof-Gpow11-two-denominators} yields different
denominators for $ \mathsf G_{\mathrm{pow}}^{(1,1)} (y_2, (y_1y_2)^{-1} )$ to 
achieving cancellation.

Up to a minus sign, $ \tilde c_{1,2,1}$ in \cref{eq:til-c121-inproof} consists of
the following terms:
\begin{align}
 \label{eq:1-summands-of-til-c121-inproof}
 &   \sbrac{
  y_1 (y_1y_2 -1) + y_1^2 (y_2 -1) 
}
 \mathsf G_{\mathrm{pow}}^{(1,1)} (y_2, (y_1y_2)^{-1} )  =
 y_1^2 \sbrac{
  \mathsf G_{\mathrm{pow}}^{(1)} (y_2) - 
  2 y_1^{\frac{1}{2}}
  \mathsf G_{\mathrm{pow}}^{(1)} (y_1) + 
  y_1^{\frac{1}{2}}
   \mathsf G_{\mathrm{pow}}^{(1)} (y_1 y_2)  
 },
 \\
 &
y_1^2 (y_2 -1) y_2^{-\frac{1}{2}}
\mathsf G_{\mathrm{pow}}^{(1)} (y_2)
\mathsf G_{\mathrm{pow}}^{(1)} ((y_1y_2)^{-1}) =
y_1^2 \sbrac{
  (y_1 y_2)^{\frac{1}{2}}
   \mathsf G_{\mathrm{pow}}^{(1)} (y_1 y_2)  
   - 
  y_1^{\frac{1}{2}}
   \mathsf G_{\mathrm{pow}}^{(1)} (y_1 y_2)  
 },
 \label{eq:2-summands-of-til-c121-inproof}
\end{align}
so that  the sum reads:
\begin{align*}
   - \tilde c_{1,2,1}  = y_1^2
    \sbrac{
   -2   \mathsf G_{\mathrm{pow}}^{(1)} (y_1^{-1} )  
  + \mathsf G_{\mathrm{pow}}^{(1)} ( y_2 )   
  + (y_1y_2)^{\frac{1}{2}}
  \mathsf G_{\mathrm{pow}}^{(1)} \brac{ y_1 y_2 } 
    }.
\end{align*}
In the calculation of \cref{eq:2-summands-of-til-c121-inproof}, we have  used 
$ (y_2-1) \mathsf G_{\mathrm{pow}}^{(1)} ( y_2 ) = y_2^{\frac{1}{2}} -1$ 
and the relation
$
(y_1y_2)^{\frac{1}{2}}
  \mathsf G_{\mathrm{pow}}^{(1)} \brac{ y_1 y_2 } 
=\mathsf G_{\mathrm{pow}}^{(1)} \brac{ (y_1 y_2)^{-1} }
$. The relation also implies that 
$ \tilde c_{1,2,1}$  is indeed equal to the corresponding
coefficients in \cref{lem:msf-GII-JII-to-Habc}. 

By adding up the three terms in \cref{eq:til-c111-inproof}:
\begin{align*}
    & y_1 (y_2-1) 
\mathsf G_{\mathrm{pow}}^{(1,1)} (y_2, (y_1y_2)^{-1}) =
y_1 \brac{
y_1^{\frac{1}{2}}
   \mathsf G_{\mathrm{pow}}^{(1)} (y_1 y_2)  -
y_1^{\frac{1}{2}}
   \mathsf G_{\mathrm{pow}}^{(1)} (y_1 )  
},
\\
    & \frac{m}{2} (y_2y_1 -1)
\mathsf G_{\mathrm{pow}}^{(1,1)} (y_2,(y_1y_2)^{-1}) =
\frac{m}{2} y_1 \brac{
\mathsf G_{\mathrm{pow}}^{(1)} (y_2 )   -
y_1^{\frac{1}{2}} 
   \mathsf G_{\mathrm{pow}}^{(1)} (y_1 )  
} ,
\\
    & -
y_1 y_2^{-\frac{1}{2}}
\mathsf G_{\mathrm{pow}}^{(1)} (y_2) 
\mathsf G_{\mathrm{pow}}^{(1)} ((y_1y_2)^{-1}) =
 - y_1^{\frac{3}{2}}
\mathsf G_{\mathrm{pow}}^{(1)} (y_2) 
\mathsf G_{\mathrm{pow}}^{(1)} (y_1 y_2) 
\end{align*}
we have for the last one:
\begin{align*}
    \tilde c_{1,1,1} = y_1 \sbrac{
    \frac{m}{2}
\mathsf G_{\mathrm{pow}}^{(1)} (y_2) 
    -(1+\frac{m}{2}) y_1^{\frac{1}{2}}
\mathsf G_{\mathrm{pow}}^{(1)} (y_1) +
y_1^{\frac{1}{2}}
\mathsf G_{\mathrm{pow}}^{(1)} (y_1 y_2) 
\brac{
1- \mathsf G_{\mathrm{pow}}^{(1)} (y_2) 
}
    },
\end{align*}
in which the third term can also be written as
\begin{align*}
    y_1^{\frac{1}{2}}
\mathsf G_{\mathrm{pow}}^{(1)} (y_1 y_2) 
\brac{
    1- \mathsf G_{\mathrm{pow}}^{(1)} (y_2) 
} = (y_1y_2)^{\frac{1}{2}}
\mathsf G_{\mathrm{pow}}^{(1)} (y_1 y_2) 
\mathsf G_{\mathrm{pow}}^{(1)} (y_2)
=
\mathsf G_{\mathrm{pow}}^{(1)} \brac{ (y_1 y_2)^{-1} } 
\mathsf G_{\mathrm{pow}}^{(1)} (y_2). 
\end{align*}
Therefore $ \tilde c_{1,1,1}$  agrees with the coefficient of $H_{1,1,1}$  in
\cref{lem:msf-GII-JII-to-Habc}.
\end{proof}

  \begin{lem}
        \label{lem:msf-GII-JII-to-Habc}
      The function
 $\tau_{(2)}^2 \brac{ \mathsf G_{\mathrm{ II }} - \mathsf J_{\mathrm{ II }} }$
 belongs to the span of $\{H_{1,2,1}, H_{1,1,2}, H_{1,1,1}\} $ 
 with coefficients given by:
      \begin{align}
          \label{eq:-sig-2-sf-GII-minus-JII}
          &\, \, 
       y_1^{-\frac{m}{2}}   \tau_{(2)}^2 \brac{
         \mathsf G_{\mathrm{ II }} - \mathsf J_{\mathrm{ II }}
     } (z_1 ,z_2;m)
     \\
          = &\, \, 
          y_1^2 \sbrac{
          2 y_1^{\frac{1}{2}} 
          \mathsf G_{\mathrm{pow}}^{(1)} (y_1) -
          \mathsf G_{\mathrm{pow}}^{(1)} (y_2) -
          \mathsf G_{\mathrm{pow}}^{(1)} \brac{ (y_1 y_2)^{-1} }
          }
     H_{1,2,1} 
     +
     y_1^2 y_2 \sbrac{
           y_1^{\frac{1}{2}} 
          \mathsf G_{\mathrm{pow}}^{(1)} (y_1) -
     \mathsf G_{\mathrm{pow}}^{(1)} (y_2) 
     }
     H_{1,1,2} 
     \nonumber
     \\
          +&\, \, 
   y_1  \sbrac{
        - \brac{ 1+\frac{m}{2} } y_1^{\frac{1}{2}}
          \mathsf G_{\mathrm{pow}}^{(1)} (y_1) 
         + \frac{1}{2}
         \mathsf G_{\mathrm{pow}}^{(1)} (y_2) \brac{
         m+2
\mathsf G_{\mathrm{pow}}^{(1)} \brac{ (y_1 y_2)^{-1} }
         } 
     }
     H_{1,1,1}.
     \nonumber
      \end{align}
  \end{lem}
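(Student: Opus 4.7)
The plan is to expand $\tau_{(2)}^2(\mathsf G_{\mathrm{II}} - \mathsf J_{\mathrm{II}})$ directly by applying $\tau_{(2)}^2$ termwise to the definitions \eqref{eq:sf-GII-defn} and \eqref{eq:sf-JII-defn}, reducing the resulting combination of hypergeometric pieces to the chosen basis $\{H_{1,2,1}, H_{1,1,2}, H_{1,1,1}\}$, and comparing coefficients with the right-hand side of \eqref{eq:-sig-2-sf-GII-minus-JII}.

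First, using \eqref{eq:tau2-sqr-on-Habc}, I would record the action on each three-subscript block:
\begin{align*}
\tau_{(2)}^2(H_{2,1,1}) &= y_1^{2+m/2}\,H_{1,2,1}, & \tau_{(2)}^2(H_{1,2,1}) &= y_1^{2+m/2}\,H_{1,1,2}, \\
\tau_{(2)}^2(H_{3,1,1}) &= y_1^{3+m/2}\,H_{1,3,1}, & \tau_{(2)}^2(H_{2,2,1}) &= y_1^{3+m/2}\,H_{1,2,2}, \\
\tau_{(2)}^2(H_{1,1,1}) &= y_1^{1+m/2}\,H_{1,1,1}.
\end{align*}
Via the substitutions $y_1\mapsto y_2$, $y_2\mapsto (y_1y_2)^{-1}$, $y_1y_2\mapsto y_2^{-1}$ recorded in \eqref{eq:tau2-and-tau2-sqr-defn}, the modular coefficients transform as
\[
\tau_{(2)}^2\bigl(\mathsf G_{\mathrm{pow}}^{(1)}(y_1)\bigr) = \mathsf G_{\mathrm{pow}}^{(1)}(y_2), \qquad
\tau_{(2)}^2\bigl(\mathsf G_{\mathrm{pow}}^{(1)}(y_2)\bigr) = \mathsf G_{\mathrm{pow}}^{(1)}\bigl((y_1y_2)^{-1}\bigr),
\]
and the scalar factor in $\mathsf G_{\mathrm{II}}$ becomes $\tau_{(2)}^2\bigl((y_1y_2)^{1/2}-1\bigr) = y_2^{-1/2}-1$.

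The second step is to eliminate $H_{1,3,1}$ and $H_{1,2,2}$ in the resulting expression by means of the recurrences \eqref{eq:remove-H_131} and \eqref{eq:remove-H122} from \S\ref{subsec:preparation}. After these reductions every term lies in the span of $\{H_{1,2,1},H_{1,1,2},H_{1,1,1}\}$; extracting the common factor $y_1^{m/2}$ claimed on the left-hand side of \eqref{eq:-sig-2-sf-GII-minus-JII} then leaves three rational coefficient functions in $y_1,y_2$ (multiplied by values of $\mathsf G_{\mathrm{pow}}^{(1)}$) to match against the three coefficients on the right.

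The main obstacle is the algebraic bookkeeping. The relation \eqref{eq:remove-H_131} introduces denominators $y_1(y_1-1)$ and a further $y_1y_2-1$ in its coupling with $H_{1,2,2}$, while \eqref{eq:remove-H122} contributes a divided-difference denominator $z_1-z_2 = y_1(y_2-1)$; simultaneously the prefactor $y_2^{-1/2}-1$ coming from $\tau_{(2)}^2(\mathsf G_{\mathrm{II}})$ needs to combine with the $y_1 H_{2,2,1}$ contribution so that all spurious denominators cancel and the final coefficient of each basis element is truly a polynomial expression in the $\mathsf G_{\mathrm{pow}}^{(1)}$-values evaluated at $y_1$, $y_2$, and $(y_1y_2)^{-1}$. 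This cancellation is what the lemma asserts; no transcendental identities enter, so once the substitutions and recurrences are assembled the verification is a finite, purely mechanical check of three rational identities.
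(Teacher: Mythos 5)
Your route matches the paper's exactly: apply $\tau_{(2)}^2$ termwise via \cref{eq:tau2-sqr-on-Habc} and the variable substitutions, eliminate $H_{1,3,1}$ and $H_{1,2,2}$ through \cref{eq:remove-H_131,eq:remove-H122} (equivalently the packaged formula \cref{eq:H122-H113-coef-remove}), then extract coefficients of $H_{1,2,1}$, $H_{1,1,2}$, $H_{1,1,1}$. You also correctly observe that $\tau_{(2)}^2(\mathsf J_{\mathrm{II}})$ already lies in that span, so the recurrences need only be applied to $\tau_{(2)}^2(\mathsf G_{\mathrm{II}})$; the paper organizes the calculation the same way.

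There is, however, a concrete error in your stated substitutions that would make the mechanical check fail: you wrote $y_1y_2\mapsto y_2^{-1}$ under $\tau_{(2)}^2$, but that is the action of $\tau_{(2)}$ (the middle column of \cref{eq:tau2-and-tau2-sqr-defn}), not $\tau_{(2)}^2$, and it is inconsistent with your first two substitutions: since $\tau_{(2)}^2$ sends $y_1\mapsto y_2$ and $y_2\mapsto (y_1y_2)^{-1}$, one must have $y_1y_2\mapsto y_2\cdot(y_1y_2)^{-1}=y_1^{-1}$. Consequently the scalar prefactor of $\tau_{(2)}^2(\mathsf G_{\mathrm{II}})$ is $y_1^{-1/2}-1$, not $y_2^{-1/2}-1$. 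The paper rewrites it as $\mathsf G^{(1)}_{\mathrm{pow}}(y_1^{-1})\,y_1^{-1}(1-y_1)$; the factor $y_1^{-1}(1-y_1)$ is precisely what cancels the denominator $4(1-y_1)y_1$ that \cref{eq:remove-H_131} introduces. With $y_2^{-1/2}-1$ that cancellation does not occur, the denominators survive, and the coefficients cannot reduce to the polynomial expressions in $\mathsf G_{\mathrm{pow}}^{(1)}$-values claimed in \cref{eq:-sig-2-sf-GII-minus-JII}. Fixing the substitution makes your outline agree with the paper's proof.
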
 
  \begin{proof}
      We first follow    \cref{eq:tau2-and-tau2-sqr-defn,eq:tau2-on-Habc} to
      carry out the transformation $\tau_{(2)}^2$:
    \begin{align*}
     y_1^{-\frac{m}{2}}  
     \tau^2_{(2)}(\mathsf G_{\mathrm{ II }}) (z_1,z_2;m)=
     \tau_{(2)}^2 \brac{
         (y_1y_2)^{\frac{1}{2}} -1
     }
     \brac{
      \frac{m+2}{2} y_1^2 H_{1,2,1}
       - y_2 y_1^3 H_{1,2,2}- 2 y_1^3 H_{1,3,1}
     },
   \end{align*}  
   where 
   \begin{align*}
      \tau_{(2)}^2 \brac{
         (y_1y_2)^{\frac{1}{2}} -1
     }  = \tau_{(2)}^2 \brac{
     \mathsf G^{(1)}_{\mathrm{pow}}( y_1 y_2) (y_1 y_2 -1)
} = 
\mathsf G^{(1)}_{\mathrm{pow}} (y_1^{-1}) y_1^{-1}(1-y_1)
   \end{align*}
   and
\begin{align}
    \label{eq:tau-sqr-JII-as-Habc-inproof}
     y_1^{-\frac{m}{2}}  
     \tau^2_{(2)}(\mathsf J_{\mathrm{ II }}) (z_1,z_2;m)
     & =
     \sbrac{
       \mathsf G^{(1)}_{\mathrm{pow}}(y_2) +
       \mathsf G^{(1)}_{\mathrm{pow}} \brac{ (y_1 y_2)^{-1} } 
     }
y_1^2 
H_{1,2,1}  
+ y_2 y_1^2 
       \mathsf G^{(1)}_{\mathrm{pow}}(y_2) 
H_{1,1,2}
\\
\nonumber
     &+
    \frac{1}{2} y_1 \sbrac{
       \mathsf G^{(1)}_{\mathrm{pow}}(y_2) 
       \brac{
       m+2 
       \mathsf G^{(1)}_{\mathrm{pow}} \brac{(y_1 y_2)^{-1}} 
       }
     }
     H_{1,1,1} .
   \end{align}
 Notice that $\tau^2_{(2)}(\mathsf J_{\mathrm{ II }})$ is already 
 written as a span of
 $\{H_{1,1,1}, H_{1,2,1} , H_{1,1,2}\} $. For 
 $\tau^2_{(2)}(\mathsf G_{\mathrm{ II }}) $, we need to     
replace $H_{1,2,2}$  and $H_{1,3,1}$ whose general form has been computed in 
 \cref{eq:H122-H113-coef-remove}
with coefficients  
\begin{align*}
    \eta_{1,3,1} = 
    2 y_1^2 (y_1 -1) \mathsf G^{(1)}_{\mathrm{pow}}(y_1^{-1}),
    \, \, \, \, 
    \eta_{1,2,2} = \frac{y_2}{2} \eta_{1,3,1}.
\end{align*}
Therefore 
we have $
y_1^{-\frac{m}{2}}  
     \tau^2_{(2)}(\mathsf G_{\mathrm{ II }}) =
     c_{1,2,1}  H_{1,2,1} + 
     c_{1,1,2}  H_{1,1,2} +
     c_{1,1,1}  H_{1,1,1} 
$ where the coefficients are computed  according to
\cref{eq:H122-H113-coef-remove} as below. First
\begin{align*}
    c_{1,2,1}  & =
    -(\frac{m}{2} +1)
y_1 (y_1 -1) \mathsf G^{(1)}_{\mathrm{pow}}(y_1^{-1})   
+
    \sbrac{
    \brac{
        \frac{m}{4 y_1} + \frac{1}{y_1 -1} - \frac{1}{2} \frac{1}{y_1 (y_2 -1)}
    } \eta_{1,1,3} 
    + \frac{\eta_{1,2,2}}{y_1(y_2-1)}
    } \\
               &=
               -(\frac{m}{2} +1)
y_1 (y_1 -1) \mathsf G^{(1)}_{\mathrm{pow}}(y_1^{-1})   
+ \brac{ \frac{m}{4} + \frac{1}{2}} y_1^{-1} \eta_{1,3,1}
+ \frac{ \eta_{1,3,1}}{y_1-1} 
\\
    & =
     2 y_1^2 \mathsf G^{(1)}_{\mathrm{pow}}(y_1^{-1}),
\end{align*}
where the first two terms in the middle line cancel out. 
The next one:
\begin{align*}
    c_{1,1,2}   &= 
        \frac{-1}{ y_1 (y_2 -1)} \eta_{1,2,2} 
        + \frac{y_2 (y_1 y_2 -1)}{2 y_1 (y_1-1) (y_2 -1)} \eta_{1,1,3}
    \\
                &    = 
         \brac{
      \frac{ y_1y_2 -1}{y_1-1} -1   
         }
         \frac{ y_2 \eta_{1,2,1} }{2 y_1 (y_2-1) }
    =  y_2 y_1^2 \mathsf G^{(1)}_{\mathrm{pow}}(y_1^{-1})
\end{align*}
The last one
\begin{align*}
    c_{1,1,1}   &= 
 \frac{-(2+m)}{4y_1(y_1-1)} \eta_{1,1,3} =
 -\frac{2+m}{2} y_1
 \mathsf G^{(1)}_{\mathrm{pow}}(y_1^{-1}).
\end{align*}
With $
 \mathsf G^{(1)}_{\mathrm{pow}}(y_1^{-1}) = 
 y_1^{\frac{1}{2}}
 \mathsf G^{(1)}_{\mathrm{pow}}(y_1)  
$, 
the total reads
\begin{align}
    \label{eq:tau-sqr-GII-as-Habc-inproof}
     y_1^{-\frac{m}{2}}  
     \tau^2_{(2)}(\mathsf G_{\mathrm{ II }}) (z_1,z_2;m)=
 y_1^{\frac{1}{2}}
 \mathsf G^{(1)}_{\mathrm{pow}}(y_1)  
     \brac{
    2 y_1^2  H_{1,2,1} + y_2 y_1^2 H_{1,1,2}
     - \frac{2+m}{2} y_1 H_{1,1,1}
     } .
\end{align}
We finish the computation by taking the difference of
\cref{eq:tau-sqr-JII-as-Habc-inproof,eq:tau-sqr-GII-as-Habc-inproof}.
  \end{proof}

    \begin{lem}
        \label{lem:msf-JI-to-Habc}
 The function $\tau_{(2)} \brac{ \mathsf J_{\mathrm{I}}}$ can also be written
 as a combination of $\{H_{1,2,1}, H_{1,1,2}, H_{1,1,1}\} $:
\begin{align*}
y_1^{-\frac{m}{2}} \tau_{(2)}^2 \brac{\mathsf J_{\mathrm{I}}} =
  \tilde c_{1,2,1} H_{1,2,1} +
  \tilde c_{1,1,2} H_{1,1,2} +
  \tilde c_{1,1,1} H_{1,1,1}, 
\end{align*}
where the coefficients $c_{1,2,1}, c_{1,1,2}$ and $c_{1,1,1}$  are given in
\cref{eq:til-c111-inproof,eq:til-c112-inproof,eq:til-c121-inproof} respectively.
    \end{lem}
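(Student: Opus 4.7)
The plan is to parallel the proof of the companion \cref{lem:msf-GII-JII-to-Habc}. First I would apply the cyclic transformation $\tau_{(2)}^2$ separately to the two hypergeometric pieces of $\mathsf J_{\mathrm{I}}$ and to their modular coefficients. For the modular coefficients $\mathsf G^{(1,1)}_{\mathrm{pow}}(y_1, y_2)$ and $y_1^{-1/2}\mathsf G^{(1)}_{\mathrm{pow}}(y_1)\mathsf G^{(1)}_{\mathrm{pow}}(y_2)$, I use the variable rules $y_1 \mapsto y_2$, $y_2 \mapsto (y_1 y_2)^{-1}$, $y_1 y_2 \mapsto y_1^{-1}$ read off from \cref{eq:tau2-and-tau2-sqr-defn}, together with the identities \cref{eq:inproof-Gpow1-y-vs-yinvs,eq:inproof-Gpow11-two-denominators} to rewrite $\mathsf G^{(1)}_{\mathrm{pow}}((y_1 y_2)^{-1})$ and the transformed $\mathsf G^{(1,1)}_{\mathrm{pow}}$ in terms of the convenient objects $\mathsf G^{(1)}_{\mathrm{pow}}(y_1 y_2)$ and $\mathsf G^{(1,1)}_{\mathrm{pow}}(y_2,(y_1 y_2)^{-1})$. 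For the hypergeometric pieces, \cref{eq:tau2-sqr-on-Hab} gives directly
\begin{align*}
\tau_{(2)}^2\bigl(H_{2,1}(z_2;m)\bigr) = y_1^{1+m/2}\,H_{1,2}(z_1;m), \qquad
\tau_{(2)}^2\bigl(H_{1,1}(z_2;m)\bigr) = y_1^{m/2}\,H_{1,1}(z_1;m).
\end{align*}

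After factoring out the common $y_1^{m/2}$, I would reduce the remaining one-variable hypergeometric functions $H_{1,2}(z_1;m)$ and $H_{1,1}(z_1;m)$ to the target two-variable basis $\{H_{1,2,1}, H_{1,1,2}, H_{1,1,1}\}$. The reduction of $H_{1,2}(z_1;m)$ is immediate from \cref{eq:H12-z_1-z_2-to-Habc}. To convert $H_{1,1}(z_1;m)$, I plan first to invoke an algebraic identity relating $H_{1,1}$ to $H_{2,1}$ (obtained by integration by parts on the integral representation \cref{eq:sfHalpha}), and then to feed the resulting linear combination of $H_{2,1}(z_1;m)$ and $H_{1,2}(z_1;m)$ through the general formula \cref{eq:H_21-H_12-z_1-as-Habc}. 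Reading off the coefficients of $H_{1,2,1}$, $H_{1,1,2}$, $H_{1,1,1}$ from that formula, combined with the transformed modular coefficients from the previous step, then produces expressions $\tilde c_{1,2,1}$, $\tilde c_{1,1,2}$, $\tilde c_{1,1,1}$ that are polynomial in the $\mathsf G_{\mathrm{pow}}$-objects.

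The main technical obstacle will be the bookkeeping of the various $y_j$-prefactors that accumulate through the repeated use of the symmetry and homogeneity of divided differences; in particular, one must carefully track the equivalent ways of writing $\tau_{(2)}^2(\mathsf G^{(1,1)}_{\mathrm{pow}}(y_1,y_2))$ so that the coefficients assemble into the specific $\mathsf G^{(1)}_{\mathrm{pow}}$-products that appear in $\tilde c_{1,2,1}$, $\tilde c_{1,1,2}$, $\tilde c_{1,1,1}$ and are consumed in the proof of the surrounding proposition. The correctness of the bookkeeping is ultimately confirmed by verifying that, after substitution into \cref{eq:GII-JII=JI}, the resulting coefficients match term-by-term those of \cref{lem:msf-GII-JII-to-Habc}; this matching is exactly the content of the functional relation \cref{eq:fn-relation-II}.
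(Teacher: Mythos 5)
Your proposal follows essentially the same path as the paper's proof: apply $\tau_{(2)}^2$ to push the arguments of $H_{a,b}$ from $z_2$ to $z_1$ via \cref{eq:tau2-sqr-on-Hab}, transform the modular prefactors with \cref{eq:inproof-Gpow1-y-vs-yinvs,eq:inproof-Gpow11-two-denominators}, eliminate $H_{1,1}(z_1;m)$ via the one-variable contiguity relation (the paper quotes $H_{1,1} = \tfrac{2}{m}((1-z)H_{1,2}+H_{2,1})$ from \cite{Liu:2018ab}), and feed the resulting $\eta_{2,1}H_{2,1}+\eta_{1,2}H_{1,2}$ into \cref{eq:H_21-H_12-z_1-as-Habc} to read off $\tilde c_{1,2,1}, \tilde c_{1,1,2}, \tilde c_{1,1,1}$. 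The only organizational difference is that the paper factors $\mathsf J_{\mathrm{I}} = a_1 H_{2,1} + \tfrac{m}{4}a_2(H_{1,1}-\tfrac{4}{m}H_{2,1})$ before transforming, while you transform the two pieces separately; the two orders produce the same computation.
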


\begin{proof}
 Notice that  $\mathsf J_{\mathrm{I}}$ in \cref{eq:sf-JI-defn} is of the form  
 $\mathsf J_{\mathrm{I}} = a_1 H_{2,1} + \frac{m}{4} a_2  \brac{
  H_{1,1} - \frac{4}{m} H_{2,1}
  }$ 
  where $H_{a,b} = H_{a,b}(z_2;m)$ and
  the coefficients are given by:  
  \begin{align*}
      a_1 = - y_1^{- \frac{1}{2}}  
      \mathsf G_{\mathrm{pow}}^{(1)} ( y_1 )
      \mathsf G_{\mathrm{pow}}^{(1)} ( y_2 )
      ,\, \, \, \, 
     a_2 = 2 \mathsf G_{\mathrm{pow}}^{(1,1)} (y_1, y_2 ).
  \end{align*}
    For the part containing $a_2$, we have:
    \begin{align*}
      y_1^{-\frac{m}{2}}
      \tau_{(2)}^2 \brac{
            - \frac{4}{m} H_{2,1} (z_2;m) + H_{1,1} (z_2;m)
        }    & = 
        -\frac{4}{m}   y_1 H_{1,2} (z_1;m)+
H_{1,1} (z_1;m)
\\
             &=
 -\frac{2}{m} y_1 H_{1,2} (z_1 ;m) + \frac{2}{m } H_{2,1} (z_1;m)
,
    \end{align*}
 where  the relation  $H_{1,1} = \frac{2}{m}\left((1-z)
 H_{1,2}+H_{2,1}\right)$ from \cite{Liu:2018ab} is applied.
    Now we have shown that 
  $
   y_1^{-\frac{m}{2}} 
      \tau_{(2)}^2 \brac{\mathsf J_{\mathrm{I}}} =
\eta_{1,2} H_{1,2} + \eta_{2,1} H_{2,1}
  $ with 
  $H_{a,b} \defeq H_{a,b}(z_1;m)$ and 
  \begin{align*}
      \eta_{1,2} & = y_1 \sbrac{
          \tau_{(2)}^2 (a_1) - \frac{1}{2} \tau_{(2)}^2 ( a_2 )
      } 
      = y_1 \sbrac{
        -  y_2^{-\frac{1}{2}} 
          \mathsf G_{\mathrm{pow}}^{(1)} (y_2) 
          \mathsf G_{\mathrm{pow}}^{(1)} ( (y_1y_2)^{-1} )
      - \mathsf G_{\mathrm{pow}}^{(1,1)} (y_2, (y_1y_2)^{-1})
      }
      ,
      \\
          \eta_{2,1} &= \frac{1}{2} \tau_{(2)}^2 (a_2 )
      = \mathsf G_{\mathrm{pow}}^{(1,1)} (y_2, (y_1y_2)^{-1})
      .
  \end{align*}
    Accordingly, $
  y_1^{-\frac{m}{2}} \tau_{(2)}^2 \brac{\mathsf J_{\mathrm{I}}} =
  \tilde c_{1,2,1} H_{1,2,1} +
  \tilde c_{1,1,2} H_{1,1,2} +
  \tilde c_{1,1,1} H_{1,1,1} 
  $ 
  where the coefficients are determined  by \cref{eq:H_21-H_12-z_1-as-Habc}:
  \begin{align}
      \label{eq:til-c121-inproof}
      \tilde c_{1,2,1} &= y_1 
      \sbrac{(y_2 -1) \eta_{1,2} -(y_1 y_2 -1)
      \eta_{2,1}
      } 
\\
\nonumber
&=
y_1^2 (y_2 -1)
\tau_{(2)}^2(a_1) -
 \frac{1}{2}
 \sbrac{
   y_1 (y_1y_2 -1)
   + y_1^2 (y_2 -1) 
}
\tau_{(2)}^2(a_2)
\\
&=
- y_1^2 (y_2 -1) y_2^{-\frac{1}{2}}
\mathsf G_{\mathrm{pow}}^{(1)} (y_2)
\mathsf G_{\mathrm{pow}}^{(1)} ((y_1y_2)^{-1}) -
\sbrac{
   y_1 (y_1y_2 -1) + y_1^2 (y_2 -1) 
}
 \mathsf G_{\mathrm{pow}}^{(1,1)} (y_2, (y_1y_2)^{-1} ) 
 \nonumber
  \end{align} 
  and
  \begin{align}
      \label{eq:til-c112-inproof}
      \tilde c_{1,1,2} &= 
      y_1 y_2 (1- y_1y_2) \eta_{2,1} =
      \frac{1}{2} y_1 y_2 (1- y_1y_2) \tau_{(2)}^2 (a_2) 
      \\
                       & =
      y_1 y_2 (1- y_1y_2) 
\mathsf G_{\mathrm{pow}}^{(1,1)} (y_2, (y_1y_2)^{-1}) 
      ,
\nonumber
  \end{align}
  and
  \begin{align}
      \label{eq:til-c111-inproof}
      \tilde c_{1,1,1} &= 
    \eta_{1,2} + \frac{1}{2} \brac{
        m(y_1y_2 -1) + 2 y_1y_2 
    } \eta_{2,1} 
    \\
                       & = 
     y_1  \tau_{(2)}^2( a_1 ) +
   \frac{1 }{2} y_1 (y_2-1)  \tau_{(2)}^2(a_2) +
   \frac{m}{4} (y_1y_2 -1) \tau_{(2)}^2 ( a_2 ) 
\nonumber
    \\
                       &=
 y_1 y_2^{-\frac{1}{2}} 
 \mathsf G_{\mathrm{pow}}^{(1)} (y_2)
 \mathsf G_{\mathrm{pow}}^{(1)} ( (y_1 y_2)^{-1} ) +
 \sbrac{
     y_1 (y_2 -1) +
  \frac{m}{2} (y_1y_2 -1)
 }
 \mathsf G_{\mathrm{pow}}^{(1,1)} (y_2, (y_1y_2)^{-1} ) 
 .
\nonumber
  \end{align}
\end{proof}


\bibliographystyle{halpha}

\bibliography{mylib}

\end{document}